\newtheorem{theorem}{Theorem}
\newtheorem{definition}{Definition}
\newtheorem{lemma}{Lemma}
\begin{document}
\title{Computation Peer Offloading for Energy-Constrained Mobile Edge Computing in Small-Cell Networks}

\author{Lixing~Chen,~\IEEEmembership{Student~Member,~IEEE,}\\
        Sheng~Zhou,~\IEEEmembership{Member,~IEEE,}
        Jie~Xu,~\IEEEmembership{Member,~IEEE}

\thanks{L. Chen and J. Xu are with the Department of Electrical and
	Computer Engineering, University of Miami, USA. Email: lx.chen@miami.edu, jiexu@miami.edu. S. Zhou is with the Department of Electronic Engineering, Tsinghua University, China. Email: sheng.zhou@tsinghua.edu.cn.}
}

\maketitle

\begin{abstract}
The (ultra-)dense deployment of small-cell base stations (SBSs) endowed with cloud-like computing functionalities paves the way for pervasive mobile edge computing (MEC), enabling ultra-low latency and location-awareness for a variety of emerging mobile applications and the Internet of Things. To handle spatially uneven computation workloads in the network, cooperation among SBSs via workload peer offloading is essential to avoid large computation latency at overloaded SBSs and provide high quality of service to end users. However, performing effective peer offloading faces many unique challenges due to limited energy resources committed by self-interested SBS owners, uncertainties in the system dynamics and co-provisioning of radio access and computing services. This paper develops a novel online SBS peer offloading framework, called OPEN, by leveraging the Lyapunov technique, in order to maximize the long-term system performance while keeping the energy consumption of SBSs below individual long-term constraints. OPEN works online without requiring information about future system dynamics, yet provides provably near-optimal performance compared to the oracle solution that has the complete future information. In addition, this paper formulates a peer offloading game among SBSs, analyzes its equilibrium and efficiency loss in terms of the price of anarchy to thoroughly understand SBSs' strategic behaviors, thereby enabling decentralized and autonomous peer offloading decision making. Extensive simulations are carried out and show that peer offloading among SBSs dramatically improves the edge computing performance.

\end{abstract}
\section{Introduction}
Pervasive mobile devices and the Internet of Things are driving the development of many new applications, turning data and information into actions that create new capabilities, richer experiences and unprecedented economic opportunities. Although cloud computing enables convenient access to a centralized pool of configurable and powerful computing resources, it often cannot meet the stringent requirements of latency-sensitive applications due to the often unpredictable network latency and expensive bandwidth \cite{mao2017mobile,shi2016edge,roman2016mobile}. The growing amount of distributed data further makes it impractical or resource-prohibitive to transport all the data over today's already-congested backbone networks to the remote cloud \cite{rivera2014gartner}. As a remedy to these limitations, mobile edge computing (MEC) \cite{mao2017mobile,shi2016edge,roman2016mobile} has recently emerged as a new computing paradigm to enable in-situ data processing at the network edge, in close proximity to mobile devices and connected things. Located often just one wireless hop away from the data source, edge computing provides a low-latency offloading infrastructure, and an optimal site for aggregating, analyzing and distilling bandwidth-hungry data from end devices.
\begin{figure}[htb]
	\centering	
	\includegraphics [width=0.5\linewidth]{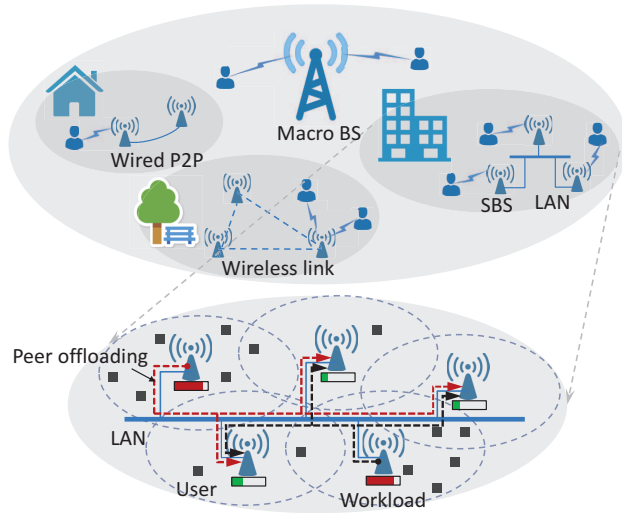}
	\caption{Illustration of SBS peer offloading.}
	\label{illustration}
	\vspace{-0.2 in}
\end{figure}

Considered as a key enabler of MEC, small-cell base stations (SBSs), such as femtocells and picocells, endowed with cloud-like computing and storage capabilities can serve end users' computation requests as a substitute of the cloud \cite{roman2016mobile}. Nonetheless, compared to mega-scale data centers, SBSs are limited in their computing resources. Since the computation workload arrivals in small cell networks can be highly dynamic and heterogeneous, it is very difficult for an individual SBS to provide satisfactory computation service at all times. To overcome these difficulties, cooperation among SBSs can be exploited to enhance MEC performance and improve the efficiency of system resource utilization via computation peer offloading. For instance, a cluster of SBSs can coordinate among themselves to serve mobile users by offloading computation workload from SBSs located in hot spot areas to nearby peer SBSs with light computation workload, thereby balancing workload among the geographically distributed SBSs (see Figure \ref{illustration} for an illustration). Similar ideas have been investigated for data-center networks to deal with spatial diversities of workload patterns, temperatures, and electricity prices. In fact, SBS networks are more vulnerable to heterogeneous workload patterns than data center networks which serve an aggregation of computation requests across large physical regions. Since the serving area of each SBS is small, the workload pattern can be affected by many factors such as location, time, and user mobility, therefore becoming very violate and easily leading to uneven workload distribution among the SBSs. Although there have been quite a few works on geographical load balancing in data centers, performing peer offloading in MEC-enabled small cell networks faces unique challenges.

First, small cells are often owned and deployed by individual users. Although incentive mechanism design, which has been widely studied in the literature for systems not limited to small cell networks, plays an important role in incentivizing self-interested users to participate in the collaboration of workload peer offloading, an equally, if not more, important problem is how to maximize the value of the limited resources committed by individual SBS owners. Second, small cells operate in a highly stochastic environment with random workload arrivals in both temporal and spatial domains. As a result, the long-term system performance is more relevant than the immediate performance. However, the limited energy resources committed by the SBS owners make the peer offloading decisions across time intricately intertwined, yet the decisions have to be made without foreseeing the far future. Third, whereas data centers manage only the computing resources, moving the computing resources to the network edge leads to the co-provisioning of radio access and computing services by the SBSs, thus mandating a new model for understanding the interplay and interdependency between the management of the two resources under energy constraints.

In this paper, we study computation peer offloading in MEC-enabled small cell networks. Our goal is to maximize the long-term system-wide performance (i.e. minimizing latency) while taking into account the limited energy resources committed by individual SBS owners. The main contributions of this paper are summarized as follows:

 1) We develop a novel framework called OPEN (which stands for Online PEer OffloadiNg) for performing stochastic computation peer offloading among a network of MEC-enabled SBSs in an online fashion by leveraging the Lyapunov optimization \cite{neely2010stochastic}. We prove that OPEN achieves within a bounded deviation from the optimal system performance that can be achieved by an oracle algorithm that knows the complete future information, while bounding the potential violation of the energy constraints imposed by individual SBS owners.
 
 2) We theoretically characterize the optimal peer offloading strategy. We show that the peer offloading decisions are determined by the \emph{marginal computation cost} (MaCC) -- a critical quantity that captures both computation delay cost and energy cost at SBSs. The peer offloading essentially is to evenly distribute MaCCs among SBSs. The SBSs decide their roles (to send or receive workload) based on the pre-offloading MaCCs (i.e., MaCCs before peer offloading). The amount of workload to be offloaded is determined based on optimal post-offloading MaCCs (i.e., MaCCs to achieve after peer offloading) designed by OPEN.

3) We consider both the scenario in which a central entity (e.g. the network operator) collects all current time information and coordinates the peer offloading and the scenario in which SBSs coordinate their peer offloading strategies in a decentralized and autonomous way. For the latter case, we formulate a novel peer offloading game, prove the existence of a Nash equilibrium using the variational inequality technique, and characterize the efficiency loss due to the strategic behaviors of SBSs in terms of the \emph{price of anarchy} (PoA).

4) We run extensive simulations to evaluate the performance of OPEN and verify our analytical results for various system configurations and traffic arrival patterns. The results confirm that our method significantly improves the system performance in terms of latency reduction and energy efficiency.

The rest of this paper is organized as follows. Section \ref{sec_related_work} reviews related works. Section \ref{sec_sys_model} presents the system model and formulates the problem. Section \ref{sec_online} develops the OPEN framework and presents the centralized solution for computation peer offloading. Section \ref{sec_noncoop} formulates and analyzes the peer offloading game. Simulations are carried out in Section \ref{sec_simulation}, followed by the conclusion in Section \ref{sec_conclusion}.

\section{Related Work}\label{sec_related_work}

\begin{table*}[htb]
	\centering
	\caption{Comparison with existing works}
	\begin{tabular}{lccccccc}
		\hline
		\diagbox{Feature}{Approach} & \cite{abdelnasser2014clustering,Guruacharya2013Dynamic} &  \cite{sardellitti2015joint,queis2015small,queis2015fogbalancing} &\cite{islam2015water} & \cite{liu2015greening} & \cite{zhang2013dynamic} & \cite{chen2016efficient,chen2015decentralized} & OPEN (This paper) \\
		\hline
		Applied stage               & UE-to-ES & UE-to-ES &  DC-to-DC    & DC-to-DC & UE-to-DC   & UE-to-ES & ES-to-ES\\
		Radio access aware      & Yes          & Yes          &  No               & No            & No              & Yes           & Yes\\
		Computation aware      & No           & Yes          & Yes               & Yes           & Yes             & Yes           & Yes\\
		System objective          & Myopic    & Myopic     &  Long-term   & Myopic      & Long-term  &  Myopic    &Long-term\\
		Long-term constraints & No           & No           &  Yes (Overall)& No            & No               &    No        &Yes (Individual)\\
		Temporal correlation    & No           & No           &  Yes              & No            & Yes              &   No         & Yes\\
		Strategic behavior        & No           & No          &  No                & No            & Yes              & Yes          & Yes\\
		\hline
	\vspace{-0.05 in}	
	\end{tabular}\\
	\hspace{-3.4 in}
	\vspace{-0.2 in}
	\begin{footnotesize} UE: User equipment; DC: Data Center; ES: Edge Server \end{footnotesize}
\end{table*}

The concept of offloading data and computation in cloud computing is used to address the inherent problems in mobile computing by using resource providers other than the mobile device itself to host the execution of mobile applications \cite{fernando2013mobile}. In the most common case, mobile cloud computing means to run an application on a resource rich cloud server located in remote mega-scale data centers, while the mobile device acts like a thin client connecting over to the remote server through 4G/Internet \cite{satyanarayanan2010mobile}. Recently, the edge computing paradigm \cite{shi2016edge} (a.k.a. fog computing \cite{bonomi2012fog}, cloudlet \cite{satyanarayanan2009cloudlet}, micro datacenter \cite{greenberg2008cost}) brings computing resources closer to the end users to enable ultra-low latency and precise location-awareness, thereby supporting a variety of emerging mobile applications such as mobile gaming, augmented reality and autonomous vehicles. Nevertheless, edge servers, such as MEC-enabled SBSs \cite{TROPIC}, cannot offer the same computation and storage capacities as traditional computing servers.

Many recent works investigate SBS cooperation for improving the system performance, subject to various constraints including local resource availability (e.g. radio resources\cite{abdelnasser2014clustering,Guruacharya2013Dynamic}, computational capacities \cite{queis2015fogbalancing}, energy consumption budgets\cite{rubio2014association} and backhaul bandwidth capacity  \cite{tam2017joint}). However, most of these works focus on optimizing the radio access performance only without considering the computing capability of SBSs.
In \cite{queis2015small,queis2015fogbalancing}, computation load distribution among the network of SBSs is investigated by considering both radio and computational resource constraints. Clustering algorithms are proposed to maximize users' satisfaction ratio while keeping the communication power consumption low. However, these works focus more on the user-to-SBS offloading side whereas our paper studies the offloading among peer SBSs. More importantly, these works perform myopic optimization without considering the stochastic nature of the system whereas our paper studies a problem that is highly coupled across time due to the long-term energy constraints.

Computation workload peer offloading among SBSs is closely related to geographical load balancing techniques originally proposed for data centers to deal with spatial diversities of workload patterns \cite{lin2012online}, temperatures\cite{xu2015temperature}, and electricity prices\cite{lou2015spatio}. Most of these works study load balancing problems that are independent across time \cite{liu2011greening}. Very few works consider temporally coupled problems. In \cite{lin2012online}, the temporal dependency is due to the switching costs (turning on/off) of data center servers, which significantly differs from our considered problem. The closest work to our paper is \cite{islam2015water}, which aims to minimize the long-term operational cost of data centers subject to a long-term water consumption constraint. However, the long-term constraint is imposed on the entire system whereas in our paper each SBS has an individual energy budget constraint. Moreover, we not only provide centralized solutions for peer-offloading but also develop schemes that enable autonomous coordination among SBSs by formulating and studying a peer-offloading game. 
Several works use reinforcement learning to efficiently manage the resource of geo-distributed data centers \cite{farahnakian2014energy,zhou2017reinforcement}. Although the reinforcement learning can also be a potential solution to our peer-offloading problem, there are several challenges for using such a formulation. First, the system states are usually assumed to be Markovian, which may not be true in real systems. Second, large state and action spaces may be needed to capture the various system and decision variables and hence complexity and convergence is a big issue. Third, reinforcement learning often does not capture the long-term energy constraint and may easily violate it. By contrast, our solution is based on the \emph{Lyapunov drift-plus-penalty} framework, which can be applied to more general stochastic systems, does not need to maintain large state and action spaces, and can handle long-term energy constraints.  

The formulated peer offloading game is similar to the widely studied congestion game \cite{nisan2007algorithmic} at the first sight. However, there is a crucial difference between these two games: a main assumption in congestion games is that all players have the same cost function for a element; however, in the peer offloading game, cost function of a SBS to retain workload on itself is different from that of other SBSs to offload tasks to that SBS due to the energy consumption concern. This difference demands for new analytical tools for understanding the peer offloading game. For instance, the potential function technique \cite{nisan2007algorithmic} used to establish the existence of a Nash equilibrium in the congestion game does not apply and hence, in this paper, we prove the existence of a Nash equilibrium via the variational inequality technique \cite{kinderlehrer2000introduction}. Game theoretic modeling was also applied in the MEC computation offloading context in \cite{chen2016efficient}\cite{chen2015decentralized}. This work focuses on the computation offloading among multiple UEs to a single BS, which is a different scenario than ours. Table 1 summarizes the differences of proposed strategy from existing works.

\section{System Model}\label{sec_sys_model}

\subsection{Network model}
We consider $N$ SBSs (e.g. femtocells), indexed by $\mathcal{N} = \{1,\dots, N\}$, deployed in a building (residential or enterprise) and connected by the same Local Area Network (LAN). These SBSs are endowed with, albeit limited, edge computing capabilities and hence, User Equipments (UEs) can offload their computation tasks to corresponding serving SBSs via wireless communications for processing. The computing capabilities of SBS $i$ is characterized by its computation service rate $f_i$ (CPU frequency), and the computation service rates of all SBSs in the network are collected by $\bm{f} = \{f_i\}_{i\in\mathcal{N}}$. Let $\mathcal{M}=\{1,\dots,M\}$ denote the set of all UEs in the building. Each SBS serves a dedicated set of UEs in its serving area, denoted by $\mathcal{M}_i\subseteq \mathcal{M}$. For example, UEs (e.g. mobile phones, laptops etc.) of employees in a business are authorized to access the communication/computing service of the SBS deployed by the business. Notice that our algorithm is also compatible with other network structure and association strategies as long as the UE-SBS associations stay unchanged in one peer offloading decision cycle. 

\subsection{Workload arrival model}
The operational timeline is discretized into time slots (e.g. 1 - 5 minutes) for making peer offloading decisions, which is a much slower time scale than that of task arrivals. In each time slot $t$, computation tasks originating from UE $m$ is generated according to a Poisson process which is a common assumption on the computation task arrival in edge systems \cite{mao2017mobile}.  Let $\pi^t_m$ denote the rate of the Poisson process for task generation at UE $m$ in time slot $t$. In each time slot, $\pi_m^t$ is randomly drawn from $\pi^t_m\in[0,\pi_{\max}]$ to capture the temporal variation in task arrival pattern. Let $\bm{\pi}^t=\{\pi^t_m\}_{m\in\mathcal{M}}$ denote the task arrival pattern of all UEs in time slot $t$. The UEs may request for different types of tasks which vary in input date size and required CPU cycles. To simply the system model, we assume that the expected input data size for one task is $s$ (in bits) and the expected number of CPU cycles required by one task is $h$. The total task arrival rate to SBS $i$, denoted by $\phi^t_i$, is $\phi^t_i = \sum _{m\in\mathcal{M}_i}\pi^t_m$. The task arrival rates to all SBSs are collected in $\bm{\phi}^t=\{\phi^t_i\}_{i\in\mathcal{N}}$. 

\subsection{Transmission model}
\subsubsection{Transmission energy consumption} Transmissions occur on both the wireless link between UEs and SBSs, and the wired link among SBSs. Usually the energy consumption of wireless transmission dominates and hence we consider only the wireless part. In each time slot $t$, SBSs have to serve both uplink and downlink traffic data. We assume that the uplink and downlink transmission operate on orthogonal channels and focus on the downlink traffic since energy consumption of SBSs is mainly due to downlink transmission. Suppose each SBS $i \in\mathcal{N}$ operates at a fixed transmission power $P^d_{i}$, then the achievable downlink transmission rate $r^{d,t}_{im}$ between UE $m$ and SBS $i$ is given by the Shannon capacity, 

\begin{equation}\label{eq:shannon}
r^{d,t}_{im}=W\log_2\left(1+\frac{P^d_iH^t_{im}}{\sigma^2}\right),
\end{equation}
where $W$ is the channel bandwidth, $H^t_{im}$ is the channel gain between SBS $i$ and UE $m$, and $\sigma^2$ is the noise power. The downlink traffic consists of the computation result and other communication traffic. Since the size of computation result is usually very  small, we only consider the communication traffic. Let the downlink traffic size in time slot $t$ be $w^t_{m}\in[0,w_{\max}]$, then the energy consumption of SBS $i$ for wireless transmission is
\begin{equation}
E^{\text{tx},t}_i=\sum\limits_{m\in\mathcal{M}_i}\dfrac{ P^d_{i}w^t_{m}}{r^{d,t}_{im}}.
\end{equation}

\subsubsection{UE-to-SBS transmission delay}
The transmission delay is incurred during UE-to-SBS offloading where UEs send computation tasks to the SBSs through the uplink channel. Let $P^{u}_m$ be the transmission power of UE $m$, then the uplink transmission rate between UE $m$ and SBS $i$, denoted by $r^{u,t}_{im}$, can be obtained similarly as in $\eqref{eq:shannon}$. Therefore, the total transmission delay cost for UEs covered by SBS $i$ can be calculated as
\begin{align}
	D^{u,t}_i=\sum_{m\in\mathcal{M}_i} \frac{s \pi^t_m}{r^{u,t}_{im}}.
\end{align}

\subsection{SBS Peer Offloading}
Since workload arrivals are often uneven among the SBSs, computation offloading between peer SBSs can be enabled to exploit underused, otherwise wasted, computational resource to improve the overall system efficiency. We assume that tasks can be offloaded only once: if a task is offloaded from SBS $i$ to SBS $j$, then it will be processed at SBS $j$ and will not be offloaded further or back to SBS $i$ to avoid offloading loops. Let $\bm{\beta}^t_{i\cdot}=\{\beta^t_{ij}\}_{j\in\mathcal{N}}$ denote the offloading decision of SBS $i$ in time slot $t$, where $\beta^t_{ij}$ denotes the fraction of received tasks offloaded from SBS $i$ to SBS $j$ (notice that $\beta^t_{ii}$ is the fraction that SBS $i$ retains). A peer offloading profile of the whole system is therefore $\bm{\beta}^t=\{\bm{\beta}^t_{i\cdot}\}_{i\in\mathcal{N}}$. We further define $\bm{\beta}^t_{\cdot i}=\{\beta^t_{ji}\}_{j\in\mathcal{N}}$ as the inbound tasks of SBS $i$, namely the tasks offloaded to SBS $i$ from other SBSs. Clearly, the total workload that will be processed by SBS $i$ is $\omega^t_i(\bm{\beta}^t) \triangleq \sum_{j\in\mathcal{N}}\beta^t_{ji} $. To better differentiate the two types of workload $\phi^t_i$ and $\omega^t_i(\bm{\beta}^t)$, we call $\phi^t_i$ the \textit{pre-offloading} workload and $\omega^t_i(\bm{\beta}^t)$ the \textit{post-offloading} workload.  A profile $\bm{\beta}^t$ is feasible if it satisfies:
\begin{enumerate}
  \item \emph{Positivity}: $\beta^t_{ij}\geq 0$, $\forall i,j\in\mathcal{N}$. The offloaded workload must be non-negative.
  \item \emph{Conservation}: $\sum_{j=1}^{N}\beta^{t}_{ij}=\phi^t_{i}$, $\forall i\in\mathcal{N}$. The total offloaded workload (including the retained workload) by each SBS must equal its \textit{pre-offloading} workload.
  \item  \emph{Stability}: $\omega^t_i(\bm{\beta}^t) \leq f_{i}/h$, $\forall i\in\mathcal{N}$. The \textit{post-offloading} workload of each SBS must not exceed its service rate.
\end{enumerate}
Let $\mathcal{B}^t$ denote the set of all feasible peer offloading profile.

Since the bandwidth of the LAN is limited, peer offloading also causes additional delay due to network congestion. We assume that the expected congestion delay depends on the total traffic through the LAN, denoted by $\lambda^t(\bm{\beta}^t)=\sum_{i\in\mathcal{N}}\lambda^t_i(\bm{\beta}^t)$, where $\lambda^t_i(\bm{\beta}^t)=\sum_{j\in\mathcal{N}\backslash\{i\}}\beta_{ij}=\phi^t_i-\beta^t_{ii}$ is the number of tasks offloaded to other SBSs from SBS $i$. We assume the data size of computation tasks has a exponential distribution, then the congestion delay $D^{g,t}$ is modeled as a M/M/1 queuing system \cite{cooper1981introduction}:
\begin{align}
D^{g,t}(\bm{\beta}^t)=\dfrac{\tau}{1-\tau\lambda^t(\bm{\beta}^t)},\quad\lambda^t<\dfrac{1}{\tau},
\end{align}
where $\tau$ is the expected delay for sending and receiving $s$ bits (i.e., expected input data size of a computation task) over the LAN without congestion.

\subsection{Computation model}

\subsubsection{Computation delay}
The computation delay is due to the limited computing capability of SBSs. UEs in the network may request different types of services, therefore the required number of CPU cycles to process a computation task may vary across tasks. We model the distribution of the required number of CPU cycles of individual tasks as an exponential distribution. Given the constant processing rate, the service time of a task therefore follows an exponential distribution. Further considering the Poisson arrival of the computation tasks, the computation delay at each SBS can be modeled as an M/M/1 queuing system \cite{cooper1981introduction} and the expected computation delay $D^{f,t}_{i}$ for one task at SBS $i$ is:
\begin{align}
D^{f,t}_i(\bm{\beta}^t)=\dfrac{1}{\mu_i-\omega^t_i(\bm{\beta}^t)},
\end{align}
where $\mu_i=f_i/h$ is the expected service rate with regard to the number of tasks (i.e. tasks per second) and $\omega^t_i(\bm{\beta}^t)$ is the workload processed at SBS $i$ given the peer offloading decision $\bm{\beta}^t$. Figure \ref{fig:queuedelay} illustrates the relation between the computation delay and the congestion delay.

\begin{figure}[htb]
	\centering
	\includegraphics[width=0.5\linewidth]{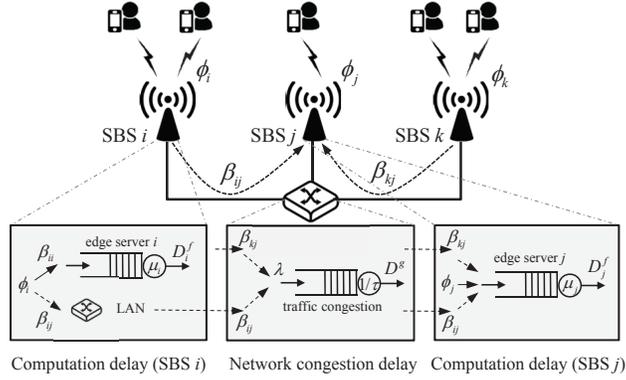}
	\vspace{-0.1 in}
	\caption{Illustration of system delay with queuing models. The figure gives a example of network with three SBSs, where SBS $i$ and SBS $k$ offload workload $\beta_{ij}$ and $\beta_{kj}$ to SBS $j$, respectively. The computation delay model for SBS $k$ is omitted since it is the same as that for SBS $i$.}
	\label{fig:queuedelay}
\end{figure}

\subsubsection{Computation energy consumption}
The computation energy consumption at SBS $i$ is load-dependent, denoted as $E^{c,t}_i$. In this paper, we consider a linear computation energy consumption function $E^{c,t}_{i}(\bm{\beta}^t)=\kappa\cdot \omega^t_i(\bm{\beta}^t)$, where $\kappa>0$ is the energy consumption for executing $h$ (i.e., expected number of CPU cycles required by a computation task) CPU cycles.

\subsection{Problem Formulation}
Peer offloading relies on SBSs' cooperative behavior in sharing their computing resources as well as their energy costs. A large body of literature was dedicated to design incentive mechanisms \cite{liu2011efficient,Mihailescu2010oneconomic} to encourage cooperation among self-interested SBSs (e.g. computing capability, energy budget) to improve the social welfare. The focus of our paper is not to design yet another incentive mechanism. Instead, we design SBS peer offloading strategies taking the SBS committed resources as the input and hence our method can work in conjunction with any existing incentive/cooperation mechanisms. Usually, the decision cycle of the resource scheduling is much longer than that of peer offloading, therefore in this paper we consider each SBS has a predetermined long-term energy consumption constraint as a result of some incentive mechanism.

Given the system model, the total delay cost of SBS $i$, defined as a sum of the delays experienced by the tasks arrived at SBSs $i$, consists of computation delay cost, network congestion delay cost, and UE-to-SBS transmission delay cost:
\begin{equation}
\begin{split}
& D^t_i(\bm{\beta}^t)=\sum_{j\in\mathcal{N}}\beta^t_{ij}D^{f,t}_{j}(\bm{\beta}^t)+\lambda^t_i D^{g,t}(\bm{\beta}^t)+D^{u,t}_{i}\\
&=\sum_{j\in\mathcal{N}}\dfrac{\beta^t_{ij}}{\mu_j-\sum_{k\in\mathcal{N}}\beta^t_{kj}}+\dfrac{\tau\lambda^t_i(\bm{\beta}^t)}{1-\tau\sum_{k\in\mathcal{N}}\lambda^t_k(\bm{\beta}^t)}+D^{u,t}_{i},
\end{split}
\end{equation}
and the energy consumption of SBS $i$ in time slot $t$ consists of transmission energy consumption and computation energy consumption:
\begin{equation}
	E^t_i(\bm{\beta}^t)=E^{\text{tx},t}_i+E^{c,t}_{i}(\bm{\beta}^t)=E^{\text{tx},t}_i+\kappa\sum_{k\in\mathcal{N}}\beta^t_{ki} .
\end{equation}
The objective of network operator is to minimize the long-term system delay cost given the energy budgets committed by individual SBSs (which are outcomes of the adopted incentive mechanisms). Formally, the problem is
\begin{subequations}
	\begin{align}
	\textbf{P1} \quad & \min_{\bm{\beta}^1,\dots,\bm{\beta}^{T-1}}\frac{1}{T}\sum\limits_{t=0}^{T-1}\sum_{i=1}^{N}\mathbb{E}\left\{D_i^t(\bm{\beta}^t)\right\}\\
	\text{s.t.}\quad
	&\frac{1}{T}\sum\limits_{t=0}^{T-1}\mathbb{E}\left\{E^t_i(\bm{\beta}^t)\right\}\leq \bar{E}_i, \forall i\in \mathcal{N} \label{longtermE}\\
	& E^t_i(\bm{\beta}^t)\leq E_{\max}, \forall i\in\mathcal{N}, \forall t \label{emax}\\
	& D_i^t(\bm{\beta}^t)\leq D_{\max}, \forall i\in\mathcal{N}, \forall t \label{dmax}\\
	& \bm{\beta}^t \in \mathcal{B}^{t},\forall t \label{feasible}
	\end{align}
\end{subequations}
Constraint \eqref{longtermE} is the long-term energy budget constraint for each SBS.  Constraint \eqref{emax} requires that the energy consumption of a SBS does not exceed an upper limit $E_{\max}$ in each time slot. Constraint \eqref{dmax} indicates that the per-slot delay of each SBS is capped by an upper limit $D_{\max}$ so that the real-time performance is guaranteed in the worst case.

The major challenge that impedes the derivation of optimal solution to \textbf{P1} is the lack of future information. Optimally solving \textbf{P1} requires complete offline information (task arrivals across all time slots) which is difficult to predict in advance, if not impossible. Moreover, the long-term energy constraints couple the peer offloading decision across different slots: consuming more energy in the current slot will reduce the available energy for future use. These challenges call for an online optimization approach that can efficiently perform peer offloading without foreseeing the future.

\section{Online SBS Peer Offloading}\label{sec_online}
In this section, we develop a novel framework for making online SBS peer offloading decisions, called OPEN (Online SBS PEer offloadiNg) by leveraging the Lyapunov technique. OPEN converts \textbf{P1} to per-slot optimization problems solvable with only current information. We consider both the case in which the network operator coordinates the SBS peer offloading in a centralized way (this section) and the case in which SBSs make peer offloading decisions among themselves in an autonomous manner (next section).
\subsection{Lyapunov optimization based online algorithm}
In the optimization problem \textbf{P1}, the long-term energy constraints of SBSs couple the peer offloading decisions across times slots. To address this challenge, we leverage the \emph{Lyapunov drift-plus-penalty technique} \cite{neely2010stochastic} and construct a (virtual) energy deficit queue for each SBS to guide the peer offloading decisions to follow the long-term energy constraints. We define a set of energy deficit queues $\bm{q}(t)=\{q_i(t)\}_{i\in\mathcal{N}}$, one for each SBS, and let $q_i(0)=0, \forall i\in\mathcal{N}$. For each SBS $i\in\mathcal{N}$, its energy deficit queue evolves as follows:
\begin{equation}\label{queue}
	q_i(t+1)=\max\{q_i(t)+E^t_i(\bm{\beta}^t)-\bar{E}_i,0\},
\end{equation}
where $q_i(t)$ is the queue length in time slot $t$, indicating the deviation of current energy consumption from the long-term energy constraint of SBS $i$.

\begin{algorithm}[htb]\label{alg_OPEN}
	\caption{OPEN}
	\KwIn{control parameter $V$, energy deficit queues $\bm{q}(0)=\bm{0}$;}
	\KwOut{offloading decisions $\bm{\beta}^0,\dots,\bm{\beta}^{T-1}$;}
	\For{$t=0$ \textbf{to} $T-1$ }
	{   Observe workload arrival $\bm{\phi}^t$ and feasible peer offloading strategy set $\mathcal{B}^t$ \;
		Solving \textbf{P2} to get optimal $\bm{\beta}^t$ in time slot $t$: ~~~~~~~~
		$\min\limits_{\bm{\beta}^t\in\mathcal{B}^{t}}~\sum\limits_{i\in\mathcal{N}} \left(V\cdot D^t_i(\bm{\beta}^t)+q_i(t) \cdot E^t_i(\bm{\beta}^t)\right)$ \label{Line:obj}\;
		Update the deficit for all SBS $i$:\\
		$q_i(t+1)=[q_i(t)+ E^t_i(\bm{\beta}^t)-\bar{E}_i]^+$
	}
	\textbf{return} $\bm{\beta}^1,\dots,\bm{\beta}^{T-1}$\;
\end{algorithm}

Next, we present the online algorithm OPEN (Algorithm \ref{alg_OPEN}) for solving \textbf{P1}. In OPEN, the network operator determines the peer offloading strategy in each time slot $t$ by solving the optimization problem \textbf{P2}, as presented below:
\begin{align*}
\textbf{P2}~\min_{\bm{\beta}^t\in\mathcal{B}^t}~\sum_{i=1}^{N}&\left(V\cdot D^t_i(\bm{\beta}^t)+q_i(t) \cdot E^t_i(\bm{\beta}^t)\right)\\
\text{s.t.}&~\eqref{emax}, \eqref{dmax} ~\text{and}~\eqref{feasible}
\end{align*}

The objective in \textbf{P2} is designed based on \textit{Lyapunov drift-plus-penalty} framework. The rationale behind this design will be explained later in Section \ref{subsec:performance_analysis}. The first term in \textbf{P2} is to minimize the system delay and the second term is added aiming to satisfy the long-term energy constraint \eqref{longtermE} in an online manner; the positive control parameter $V$ is used to adjust the trade-off between these two purposes. To give a brief explanation, by considering the additional term $\sum_{i=1}^{N}q_i(t) E^t_i(\bm{\beta}^t)$, the network operator takes into account the energy deficits of SBSs in current-slot decision making: when $\bm{q}(t)$ is larger, minimizing the energy deficits is more critical for network operator. Thus, OPEN works following the philosophy of ``if violate the energy budget, then use less energy'', and hence the long-term energy constraint can be satisfied in the long run without foreseeing the future information. Later in this section, we will rigorously prove the performance of OPEN in terms of system delay cost and long-term energy consumption. Now, to complete OPEN, it remains to solve the optimization problem \textbf{P2}. Notice that solving \textbf{P2} requires only currently available information as input.

\subsection{Centralized solution to OPEN}
In this subsection, we consider the existence of a centralized controller who collects the complete current-slot information from all SBSs, solves the per-slot problem \textbf{P2}, and coordinates SBS peer offloading in each time slot $t$. Before proceeding to the solution, we rewrite the objective function of \textbf{P2} as below:
\begin{align}\label{soical_opt_obj}
&\sum\limits_{i\in\mathcal{N}}\left(V \cdot D^t_i(\bm{\beta}^t)+q_i(t) \cdot E^t_i(\bm{\beta}^t)\right) \nonumber\\
=&\sum_{i\in\mathcal{N}}V\cdot(\sum_{j\in\mathcal{N}}\beta^t_{ij}D^{f,t}_{j}(\bm{\beta}^t)+\lambda^t_i (\bm{\beta}^t)D^{g,t}(\bm{\beta}^t)+D^{u,t}_{i})+\sum_{i\in\mathcal{N}}q_i(t)(E^{\text{tx},t}_i+E^{c,t}_i(\bm{\beta}^t)) \nonumber\\
=&\sum_{j\in\mathcal{N}}\dfrac{V\sum_{i=1}^N\beta^t_{ij}}{\mu_j-\sum_{k=1}^{N}\beta^t_{kj}}+\dfrac{V\tau\sum_{i=1}^{N}\lambda^t_i(\bm{\beta}^t)}{1-\tau\sum_{k=1}^{N}\lambda^t_k(\bm{\beta}^t)}  +\sum_{i\in\mathcal{N}}\kappa q_i(t)\omega^t_i(\bm{\beta}^t)+\sum_{i\in\mathcal{N}}\left(VD^{u,t}_{i}+q_i(t)E^{\text{tx},t}_i\right) \nonumber \\
=&\underbrace{\sum_{i\in\mathcal{N}}V\left(\dfrac{V\omega^t_i({\bm{\beta}^t})}{\mu_i-\omega^t_i(\bm{\beta}^t)}+\kappa q_i(t)\omega^t_i(\bm{\beta}^t) \right)+\dfrac{V\tau\lambda^t({\bm{\beta}^t})}{1-\tau\lambda^t({\bm{\beta}^t})}}_{\text{decision-dependent}}+\underbrace{\sum_{i\in\mathcal{N}}\left(VD^{u,t}_i+q_iE^{\text{tx},t}_i\right)}_{\text{decision-independent}}.
\end{align}
The objective function of \textbf{P2} can be divided into two parts: ($i$) a decision-dependent part which is a weighted sum of the computation delay cost, the computation energy consumption, and the network congestion delay cost; ($ii$) a decision-independent part which relates to the UE-to-SBS transmission delay and SBS-to-UE energy consumption. Therefore, we focus on the decision-dependent part for solving \textbf{P2}. Although the decision-independent part does not affect the solution of \textbf{P2} directly, its second term (i.e., $E^{\text{tx},t}$) will affect the energy deficit queue updating and hence indirectly affects peer offloading decisions in the long-run. 

Notice that \textbf{P2} is solved in each time slot, for ease of exposition, we drop the time index for variables. Moreover, instead of optimizing $\bm{\beta}^t$ directly, we alternatively optimize the amount of workload $\omega^t_i(\bm{\beta})$ each SBS should accommodate, and the corresponding total traffic in the LAN $\lambda^t(\bm{\beta})$. By rewriting $\omega^t_i(\bm{\beta})$ as $\omega_i$  and $\lambda^t(\bm{\beta})$ as $\lambda$, \textbf{P2} is therefore equivalent to:
\begin{subequations}
	\begin{align} \textbf{P2-S}~~
	\min\limits_{\omega_i\in\Omega_i, \lambda\in\Lambda} & \sum_{i\in\mathcal{N}}\left(\dfrac{V\omega_i}{\mu_i-\omega_i}+\kappa q_i\omega_i\right)+\dfrac{V\tau\lambda}{1-\tau\lambda}\\
	\text{s.t.} ~~ & E_i(\omega_i,\lambda)\leq E_{\max}, \forall i\in\mathcal{N} \label{eq:rew_emax}\\
	& D_i(\omega_i,\lambda)\leq D_{\max}, \forall i\in\mathcal{N} \label{eq: rew_dmax}\\
	& \omega_i\in\Omega_i, \lambda\in\Lambda,\forall i\in\mathcal{N} \label{eq:rew_feasible}
	\end{align}
\end{subequations}
where $\Omega_i$ in \eqref{eq:rew_feasible} is the feasible space for $\omega_i$ determined by the mapping $\omega_i: \mathcal{B}\to\Omega_i$, and similarly $\Lambda$ is determined by $\lambda: \mathcal{B}\to\Lambda$. Notice that, although $\omega_i$s and $\lambda$ are written as independent variables, they are deterministic functions of a particular $\bm{\beta}$ in each time slot. To capture the relation between $\omega_i$s and $\lambda$, we introduce and closely follow a workload flow equation when solving \textbf{P2-S}, which will be shown shortly.

Next, we give the optimal solution for the above optimization problem starting with classifying SBSs into the following three categories:
\begin{itemize}
	\item \textbf{Source SBS} ($\mathcal{R}$). A SBS is a source SBS if it offloads a positive portion of its \textit{pre-offloading} workloads to other SBSs and processes the rest of workloads locally. Moreover, it does not receive any workload from other SBSs ($0\leq\omega_i<\phi_i$).
	\item \textbf{Neutral SBS} ($\mathcal{U}$): A SBS is a neutral SBS if it processes all its \textit{pre-offloading} workloads locally and does not receive any workload from other SBSs ($\omega_i=\phi_i$).
	\item \textbf{Sink SBS} ($\mathcal{S}$): A SBS is a sink SBS if it receives workloads from other SBSs and does not offload workload to others ($\omega_i>\phi_i$).
	\end{itemize}

Notice that in our categorization, there is no SBS such that it offloads workloads to other SBSs while receiving workloads from other SBSs. This is because it can be easily shown that having such SBSs result in suboptimal solutions to $\textbf{P2}$ due to the extra network congestion delay. To assist the presentation of the optimal solution, we define two auxiliary functions.
\begin{definition}
Define $d_i(\omega_i) \triangleq \frac{\partial}{\partial \omega_i}[\omega_iD^f_i(\omega_i)]=\frac{\mu_i}{(\mu_i-\omega_i)^2}$ as the marginal computation delay function for SBS $i, \forall i\in\mathcal{N}$; $g(\lambda)\triangleq\frac{\partial}{\partial\lambda}[\lambda D^{g}(\lambda)]=\frac{\tau}{(1-\tau\lambda)^2}$ as the marginal congestion delay function.
\end{definition}
Specifically, $d_i(\omega_i)$ is the marginal value of the computation delay function when $\omega_i$ tasks are processed at SBS $i$; and $g(\lambda)$ is the marginal value of the congestion delay function with $s\cdot\lambda$ bits traffic in the LAN.

We define $\xi_i \triangleq Vd_i(\phi_i) + \kappa q_i$ as the \textit{pre-offloading} Marginal Computation Cost (MaCC), taking into account both the computation delay cost and the computation energy consumption if SBS $i$ processes all its tasks locally. Based on $\xi_i$, Theorem \ref{centralized_solution} shows the optimal SBS categorization, workload allocation, and corresponding traffic in LAN.

\begin{theorem} \label{centralized_solution} The category that SBS $i$ belongs to, the optimal post-offloading workload $\omega^*_i$, and the corresponding traffic in LAN $\lambda^*$ can be determined based on pre-offloading MaCC $\xi_i$ and a parameter $\alpha$: \\
	(a) If $\xi_i<\alpha$, then $i\in \mathcal{S}$ and $\omega^*_i=d^{-1}_i(\frac{1}{V}(\alpha-\kappa q_i))$;\\
	(b) If $\alpha\leq \xi_i \leq \alpha+Vg(\lambda^*)$, then $i\in \mathcal{U}$ and $\omega_i^*=\phi_i$;\\
	(c) If $\xi_i>\alpha+Vg(\lambda^*)$, then $i\in \mathcal{R}$ and $\omega^*_i=[d^{-1}_i(\frac{1}{V}(\alpha+ \allowbreak ~~~~~ Vg(\lambda^*)-\kappa q_i))]^+$;\\
where $\lambda^*$, $\alpha$ are the solution to the workload flow equation
\begin{align}
\underbrace{\sum_{i\in\mathcal{S}}\left(d^{-1}_i(\frac{1}{V}(\alpha-\kappa q_i))-\phi_i\right)}_{\lambda^S \emph{: inbound workloads to sinks}}=\underbrace{\sum_{i\in\mathcal{R}}\left(\phi_i-[d^{-1}_i(\frac{1}{V}(\alpha+Vg(\lambda^*)-\kappa q_i))]^+\right)}_{\lambda^R \emph{: outbound workloads from sources}}.
\end{align}
\end{theorem}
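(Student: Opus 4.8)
The plan is to solve \textbf{P2-S} as a convex program through its first-order (KKT) optimality conditions, with the source/neutral/sink trichotomy emerging as the signature of the kink that the aggregate LAN traffic contributes to the objective. First I would reduce the problem. The objective of \textbf{P2-S} depends on the offloading profile only through the post-offloading loads $\bm{\omega}=\{\omega_i\}$ and the total LAN traffic $\lambda$, and conservation forces $\sum_{i}\omega_i=\sum_{i}\phi_i$. Since the congestion term $V\tau\lambda/(1-\tau\lambda)$ is increasing in $\lambda$, for any fixed $\bm{\omega}$ it is minimized by the smallest achievable traffic, namely $\lambda(\bm{\omega})=\sum_i(\phi_i-\omega_i)^{+}=\sum_i(\omega_i-\phi_i)^{+}$, which is realized by any routing in which source SBSs only send and sink SBSs only receive (each unit offloaded by a source node, equivalently each unit absorbed by a sink node, contributes exactly one unit to $\lambda$, and a node that both sent and received could be rerouted to strictly lower $\lambda$). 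This both justifies the claim that no optimal SBS simultaneously sends and receives and collapses \textbf{P2-S} to the minimization over $\bm{\omega}$ alone of $F(\bm{\omega})=\sum_i\left(\frac{V\omega_i}{\mu_i-\omega_i}+\kappa q_i\omega_i\right)+\frac{V\tau\lambda(\bm{\omega})}{1-\tau\lambda(\bm{\omega})}$ over the conservation hyperplane intersected with the box $0\le\omega_i<\mu_i$ (and the per-slot energy/delay caps).

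Next I would verify convexity and write the coordinatewise optimality condition. Each $\omega_i\mapsto V\omega_i/(\mu_i-\omega_i)$ is convex on $[0,\mu_i)$, the map $\bm{\omega}\mapsto\lambda(\bm{\omega})$ is convex as a sum of the convex functions $(\phi_i-\omega_i)^{+}$, and $x\mapsto V\tau x/(1-\tau x)$ is convex and increasing, so $F$ is convex and the KKT conditions are necessary and sufficient. Attaching a multiplier $\alpha$ to the equality $\sum_i\omega_i=\sum_i\phi_i$, the heart of the matter is that $\lambda(\bm{\omega})$ is kinked at $\omega_i=\phi_i$: there $\partial\lambda/\partial\omega_i=-1$ for $\omega_i<\phi_i$, $\partial\lambda/\partial\omega_i=0$ for $\omega_i>\phi_i$, and the subdifferential is the whole interval $[-1,0]$ at $\omega_i=\phi_i$. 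Consequently the marginal cost of raising $\omega_i$ is $Vd_i(\omega_i)+\kappa q_i-Vg(\lambda)$ on the source side, is $Vd_i(\omega_i)+\kappa q_i$ on the sink side, and at $\omega_i=\phi_i$ ranges over $[\xi_i-Vg(\lambda),\,\xi_i]$ with $\xi_i=Vd_i(\phi_i)+\kappa q_i$. Optimality forces this (set-valued) marginal cost to contain $\alpha$, relaxed to ``$\ge\alpha$'' at the boundary $\omega_i=0$; the boundary $\omega_i=\mu_i$ never binds since $F\to\infty$ there. (If a per-slot cap is active, the same analysis goes through with that constraint's multiplier shifting the effective thresholds; the theorem statement describes the inactive case.)

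From this condition and the monotonicity of $d_i$ (hence of $d_i^{-1}$) I would peel off the three cases. If $\xi_i<\alpha$, stationarity is incompatible with $\omega_i^*\le\phi_i$, so $i\in\mathcal{S}$ and $Vd_i(\omega_i^*)+\kappa q_i=\alpha$, i.e.\ $\omega_i^*=d_i^{-1}\!\left(\frac1V(\alpha-\kappa q_i)\right)$, which is automatically $>\phi_i$ --- part (a). If $\xi_i-Vg(\lambda^*)\le\alpha\le\xi_i$, equivalently $\alpha\le\xi_i\le\alpha+Vg(\lambda^*)$, then $\alpha$ lies in the subdifferential at $\omega_i=\phi_i$, so $\omega_i^*=\phi_i$ and $i\in\mathcal{U}$ --- part (b). If $\xi_i>\alpha+Vg(\lambda^*)$, then $i\in\mathcal{R}$ with $Vd_i(\omega_i^*)+\kappa q_i-Vg(\lambda^*)=\alpha$ when $\omega_i^*>0$ and $\omega_i^*=0$ otherwise, both captured by $\omega_i^*=\left[d_i^{-1}\!\left(\frac1V(\alpha+Vg(\lambda^*)-\kappa q_i)\right)\right]^{+}$ --- part (c). Finally, $(\alpha,\lambda^*)$ are pinned down by two scalar equations: conservation $\sum_i\omega_i^*=\sum_i\phi_i$, rearranged by moving the sinks to the left as $\sum_{i\in\mathcal{S}}(\omega_i^*-\phi_i)=\sum_{i\in\mathcal{R}}(\phi_i-\omega_i^*)$, and the consistency relation $\lambda^*=\lambda(\bm{\omega}^*)=\sum_{i\in\mathcal{R}}(\phi_i-\omega_i^*)$; substituting the formulas of parts (a) and (c) turns the conservation identity into precisely the displayed workload flow equation, with $\lambda^S$ on the left and $\lambda^R$ on the right and $\lambda^*$ entering through the source formula. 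Existence of a solution pair follows from an intermediate-value argument (as $\alpha$ grows, the set $\mathcal{S}$ and $\lambda^S$ increase continuously while $\mathcal{R}$ and $\lambda^R$ decrease, so the two sides cross), and strict convexity of $F$ along the active directions gives uniqueness of $\bm{\omega}^*$.

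I expect the main obstacle to be the non-smoothness of $\lambda(\bm{\omega})$ on the faces $\omega_i=\phi_i$: the neutral band $[\alpha,\alpha+Vg(\lambda^*)]$ in part (b) is exactly the image of that kink under the optimality condition, and it is also what couples $\alpha$ and $\lambda^*$ into a simultaneous, fixed-point-like pair of equations rather than decoupled ones. So the careful subgradient bookkeeping together with the existence argument for $(\alpha,\lambda^*)$ is where the real work lies; the rest is the routine reduction and the algebra of inverting $d_i$.
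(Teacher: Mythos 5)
Your proof is correct, and it takes a genuinely different route from the paper's. The paper lifts the problem to per-SBS inbound/outbound variables $u_i, v_i$ with $\omega_i=\phi_i+u_i-v_i$ and $\lambda=\sum_i v_i$, keeps the program smooth, and runs a standard Kuhn--Tucker analysis: adding the two stationarity conditions gives $-Vg(\lambda)=\eta_i+\psi_i<0$, so complementary slackness forces $u_i=0$ or $v_i=0$ for each $i$, and the three categories fall out of the resulting case enumeration (with the sub-case $v_i=\phi_i$ handling $\omega^*_i=0$ and hence the $[\cdot]^+$). You instead eliminate the routing variables by observing that the minimal achievable LAN traffic for a given load vector is $\lambda(\bm{\omega})=\sum_i(\phi_i-\omega_i)^+$, and then do nonsmooth convex analysis on the reduced objective: the neutral band $[\alpha,\alpha+Vg(\lambda^*)]$ is exactly the subdifferential interval of the congestion term at the kink $\omega_i=\phi_i$, which is arguably the most transparent explanation of where the band width $Vg(\lambda^*)$ comes from. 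Your reduction also supplies an actual argument (rerouting strictly lowers $\lambda$, hence the cost) for the claim --- asserted without proof in the paper's main text --- that no optimal SBS both sends and receives, whereas the paper obtains this only as a by-product of the KKT conditions in the lifted space. The trade-off is that the paper's route needs only elementary smooth calculus with more multipliers and cases, while yours requires careful subgradient bookkeeping; both arrive at the same stationarity conditions and the same workload flow equation (yours via conservation $\sum_i\omega^*_i=\sum_i\phi_i$, the paper's via $\sum_i u_i=\sum_i v_i$, which are the same identity). Your closing remarks on existence of the pair $(\alpha,\lambda^*)$ via a monotone crossing argument and uniqueness of $\bm{\omega}^*$ via strict convexity go slightly beyond what the paper proves, and are consistent with what Algorithm 2's binary search implicitly relies on.
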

\begin{proof}
See Appendix A in Supplementary File.
\end{proof}
In Theorem \ref{centralized_solution}, $\alpha$ is a Lagrange multiplier that equals the unique optimal \textit{post-offloading} MaCC (i.e. $Vd_i(\omega^*_i) + \kappa q_i$) of sink SBSs. Part ($a$) indicates that SBSs with \textit{pre-offloading} MaCC less than $\alpha$  will serve as sink SBSs and their post-offloading MaCCs will be equal to $\alpha$. Part ($b$) implies that the \textit{pre-offloading} MaCC of a neutral SBS is no less than $\alpha$ but no larger than the sum of $\alpha$ and the marginal congestion delay cost. This means that no other SBSs would benefit from offloading workloads to neutral SBSs and at the same time neutral SBS receive no benefits by performing peer offloading. For a source SBS, its \textit{pre-offloading} MaCC is larger than the sum of $\alpha$ and the marginal congestion delay cost and therefore, it tends to offload workloads to other SBSs until its \textit{post-offloading} MaCC reduces to $\alpha+Vg(\lambda^*)$ (i.e. $\omega^*_i=d^{-1}_i(V^{-1}(\alpha+Vg(\lambda^*)-\kappa q_i))$ or no more workload can be further offloaded (i.e. $\omega^*_i=0$). Figure \ref{centralized} depicts the categorization of SBSs and the difference between their \textit{pre-offloading} and \textit{post-offloading} MaCCs.
\begin{figure}[htb]
	\centering	
	\includegraphics [width=0.75\linewidth]{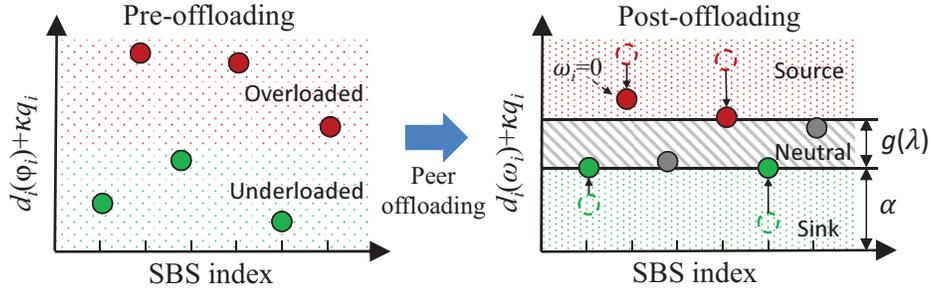}
	\vspace{-0.1in}
	\caption{SBS categorization and changes in the marginal computation costs. }
	\label{centralized}
\end{figure}

\begin{algorithm}\label{COPEN}
	\caption{OPEN-Centralized}
	\KwIn{SBS services rates: $\mu_1, \mu_2, \dots, \mu_N$; SBS workload arrival rates: $\phi_1,\phi_2,\dots,\phi_N$; Network mean communication time: $\tau$.}
	Initialization: $\omega_i\leftarrow\phi_i, i\in\mathcal{N}$\;
	Get pre-offloading MaCC for SBSs $\xi_i=Vd_i(\phi_i)+\kappa q_i$\; 
	Find $\xi_{\max}=\max_{i\in\mathcal{N}} \xi_i$ and $\xi_{\min}=\min_{i\in\mathcal{N}}\xi_i$\;
	\textbf{If} $\xi_{\min}+Vg(0) \geq\xi_{\max}$, 
	then \textbf{STOP} (no peer offloading is required)\;
	$a\leftarrow \xi_{\min}$; $b\leftarrow \xi_{\max}$\;
	\While {$|\lambda^S(\alpha)-\lambda^R(\alpha)|\geq \tilde{\epsilon}$}
	{ $\lambda^S(\alpha)\leftarrow 0$,$~\lambda^R(\alpha)\leftarrow 0$\;
		$\alpha\leftarrow \dfrac{1}{2}(a+b)$\;
		Get in order: $\mathcal{S}(\alpha)$, $\lambda^S(\alpha)$, $\mathcal{R}(\alpha)$, $\mathcal{U}(\alpha)$, $\lambda^R(\alpha)$ \;
		\textbf{if} $\lambda^S(\alpha)>\lambda^R(\alpha)$ \textbf{then} $b\leftarrow\alpha$\;
		\textbf{else} $a\leftarrow\alpha$\;
	}
	Determine $\omega_i^*$ and $\lambda^*$ based on Theorem \ref{centralized_solution}\;
\end{algorithm}

Since directly deriving a solution for $\alpha$ is impossible, we develop an iterative algorithm called OPEN-Centralized for solving \textbf{P2}. OPEN-Centralized uses binary search to obtain $\alpha$ under the workload flow equation, which is summarized in Algorithm 2. In each iteration, the algorithm first determines a set of sink SBSs ($\mathcal{S}$) according to the parameter $\alpha$, then the corresponding amount of inbound workload is determined. Given the total workload $\lambda=\lambda^S$ transmitted in the LAN, the algorithm determines the source SBSs ($\mathcal{R}$), neutral SBSs ($\mathcal{U}$) and then calculates the outbound workload $\lambda^R$. If $\lambda^R$ equals $\lambda^S$, then the optimal $\alpha$ is found; otherwise, the algorithm updates $\alpha$ and goes into the next iteration. 

\textbf{Remark:} Notice that the algorithm outputs the optimal workload allocation of SBSs, $\omega^*_i, \forall i\in\mathcal{N}$ and the corresponding traffic in the LAN $\lambda^*$. Any peer offloading strategy that realizes the optimal workload allocation is an optimal peer offloading profile $\bm{\beta}^*$ for \textbf{P2}.

\subsection{Performance Analysis of OPEN}\label{subsec:performance_analysis}
In this section, we rigorously prove the performance bound of OPEN. We first show that with the designed energy deficit queue $q_i(t)$, the long-term energy constraint \eqref{longtermE} for each SBS is enforced if the queues $q_i(t)$ are stable, i.e. $\lim_{T\rightarrow\infty}\mathbb{E}\{q_i(t)\}/T=0$. 

According to the queue update rule in \eqref{queue}, we have $q_i(t+1)\geq q_i(t)+E_i^t-\bar{E}_i$. Summing this over time slots $t\in\{0,1,\dots,T-1\}$ then dividing the sum by $T$, we have
\begin{equation}
\dfrac{q_i(T)-q_i(0)}{T}+\bar{E}_i\geq\dfrac{1}{T}\sum_{t=0}^{T-1}E^t_i(\bm{\beta}^t).
\end{equation}
Initializing the queue as $q_i(0)=0, \forall i$ and letting $T\rightarrow \infty$, then taking the expectations of both side yields
\begin{equation}\label{stable_ineq}
\lim_{T\rightarrow\infty}\dfrac{\mathbb{E}\{q_i(t)\}}{T}+\bar{E}_i\geq\lim_{T\rightarrow\infty}\dfrac{1}{T}\sum_{t=0}^{T-1}\mathbb{E}\{E^t_i(\bm{\beta}^t)\}.
\end{equation}	
If the virtual queues $q_i(t)$ are stable (i.e., $\lim_{T\rightarrow\infty} \mathbb{E} \{q_i(t)\}/T \allowbreak =0$), then \eqref{stable_ineq} becomes the long-term energy constraint \eqref{longtermE}. In the following, it will be shown that the deficit queue $q_i(t)$ is guaranteed to be stable by running the proposed OPEN. We start with defining a \emph{quadratic Lyapunov function} $L(\bm{q}(t))$ as: $
L(\bm{q}(t))\triangleq\frac{1}{2}\sum_{i=1}^{N}q_i^2(t)$. It represents a scalar metric of the queue length in all virtual queues. A small value of $L(\bm{q}(t))$ implies that all the queue backlogs are small, which means the virtual queues have strong stability. To keep the virtual queues stable (i.e., to enforce the energy constraints) by persistently pushing the Lyapunov function towards a lower value, we introduce \emph{one-slot Lyapunov drift} $\Delta(\bm{q}(t))$:
\begin{equation*}
\begin{split}
&\Delta(\bm{q}(t))\triangleq \mathbb{E}\left\{L(\bm{q}(t+1))-L(\bm{q}(t))|\bm{q}(t)\right\}\\
&=\dfrac{1}{2}\sum_{i=1}^{N}\mathbb{E}\left\{q_i^2(t+1)-q_i^2(t)|\bm{q}(t)\right\}\\
&\stackrel{(\dag)}{\leq}\dfrac{1}{2}\sum_{i=1}^{N}\mathbb{E}\left\{(q_i(t)+E^t_i(\bm{\beta}^t)-\bar{E}_i)^2-q_i^2(t)|\bm{q}(t)\right\}\\
&\leq\dfrac{1}{2}\sum_{i=1}^{N}(E_{\max}-\bar{E}_i)^2+\sum_{i=1}^{N}q_i(t)\mathbb{E}\left\{E^t_i(\bm{\beta}^t)-\bar{E}_i|\bm{q}(t)\right\}.
\end{split}
\end{equation*}
The inequality ($\dag$) comes from $(q_i(t)+E^t_i(\bm{\beta}^t)-\bar{E}_i)^2 \geq [\max(q_i(t)+E^t_i(\bm{\beta}^t)-\bar{E}_i,0)]^2$. By extending the Lyapunov drift to an optimization problem, our objective is to minimize a supremum bound on the following \emph{drift-plus-penalty} expression in each time slot:
\begin{align}\label{drift_plus_cost}
\Delta(\bm{q}(t))+ & V\sum_{i=1}^{N}\mathbb{E}\left\{D_i^t(\bm{\beta}^t)|\bm{q}(t) \right\} \leq V\sum_{i=1}^{N}\mathbb{E}\left\{D_i^t(\bm{\beta}^t)|\bm{q}(t)\right\} \nonumber \\
& + B +  \sum_{i=1}^{N}q_i(t)\mathbb{E}\left\{(E^t_i(\bm{\beta}^t)-\bar{E}_i)|\bm{q}(t) \right\},
\end{align}
where $B=\frac{1}{2}\sum_{i=1}^{N}(E_{\max}-\bar{E}_i)^2$. Notice that OPEN (Line \ref{Line:obj} in Algorithm \ref{alg_OPEN}) exactly minimizes the right hand side of \eqref{drift_plus_cost}. The parameter $V\geq 0$ controls the \emph{delay-energy deficit} tradeoff, i.e., how much we shall emphasize the delay minimization compared to the energy deficit. Next we give a rigorous performance bound of OPEN compared to the optimal solution to $\textbf{P1}$.

\begin{theorem}\label{Theorem_Online}
Following the optimal peer offloading decision $\bm{\beta}^{*,t}$ obtained by OPEN-Centralized, the long-term system delay cost satisfies:
\begin{equation}\label{system_delay_bound}
 \lim_{T\rightarrow\infty}\dfrac{1}{T}\sum_{t=0}^{T-1}\sum_{i=1}^{N}\mathbb{E}\left\{D_i^t(\bm{\beta}^{*,t})\right\}<D^{\text{opt}}_{\text{sys}}+\dfrac{B}{V},
\end{equation}
and the long-term energy deficit of SBSs satisfies:
\begin{align}\label{energy_defit_bound}
\lim_{T\rightarrow\infty}\dfrac{1}{T}\sum_{t=0}^{T-1}\sum_{i=1}^{N} & \mathbb{E}\left\{E^t_i(\bm{\beta}^{*,t})-\bar{E}_i\right\}\leq\dfrac{1}{\epsilon}\left(B+V(D^{\max}_{\text{sys}}-D^{\text{opt}}_{\text{sys}})\right),
\end{align}
where $D^{\text{opt}}_{\text{sys}}=\lim_{T\to\infty}\frac{1}{T}\sum_{t=0}^{T-1} \sum_{i=1}^{N}\mathbb{E}\left\{D_i^t(\bm{\beta}^{\text{opt},t})\right\}$ is the optimal system delay to $\textbf{\emph{P1}}$, $D^{\max}_{\text{sys}}=ND_{\max}$ is the largest system delay cost, and $\epsilon>0$ is a constant which represents the long-term energy surplus achieved by some stationary strategy.
\end{theorem}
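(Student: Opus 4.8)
The plan is to run the classical Lyapunov \emph{drift-plus-penalty} argument, taking as given inequality \eqref{drift_plus_cost} (already derived above) together with the observation, made right after it, that OPEN selects $\bm{\beta}^{*,t}$ so as to minimize the right-hand side of \eqref{drift_plus_cost} over every $\bm{\beta}^t\in\mathcal{B}^t$ satisfying the per-slot constraints \eqref{emax}--\eqref{dmax}. The only external ingredient I need to bring in is the existence of \emph{stationary randomized benchmark policies}: since the arrival vectors $\bm{\pi}^t$ are i.i.d.\ (or at least stationary) across slots, a Carath\'eodory-type argument (cf.\ \cite{neely2010stochastic}) shows that for every $\delta>0$ there is a policy $\Pi^{\delta}$ whose decision depends only on the current $\bm{\pi}^t$, which is per-slot feasible w.p.\ 1, which achieves $\sum_i\mathbb{E}\{D_i^t(\Pi^\delta)\}\le D^{\text{opt}}_{\text{sys}}+\delta$, and which meets each energy budget in expectation, $\mathbb{E}\{E_i^t(\Pi^\delta)-\bar E_i\}\le 0$; and, by the definition of $\epsilon$ in the statement, a Slater-type stationary policy $\Pi^{\epsilon}$ with $\mathbb{E}\{E_i^t(\Pi^\epsilon)-\bar E_i\}\le-\epsilon$ for all $i$ and $\sum_i D_i^t(\Pi^\epsilon)\le D^{\max}_{\text{sys}}$.

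For the delay bound \eqref{system_delay_bound} I substitute $\Pi^{\delta}$ into the right-hand side of \eqref{drift_plus_cost}. Because OPEN minimizes that expression and the energy-deficit term vanishes under $\Pi^{\delta}$ (its decisions are independent of $\bm{q}(t)$ and its expected per-slot energy is below budget), I obtain $\Delta(\bm{q}(t))+V\sum_i\mathbb{E}\{D_i^t(\bm{\beta}^{*,t})|\bm{q}(t)\}\le B+V(D^{\text{opt}}_{\text{sys}}+\delta)$. Taking expectations, applying iterated expectations, and summing over $t=0,\dots,T-1$ telescopes the drift into $\mathbb{E}\{L(\bm{q}(T))\}-\mathbb{E}\{L(\bm{q}(0))\}$; dividing by $VT$, discarding the non-negative $\mathbb{E}\{L(\bm{q}(T))\}$, and using $\bm{q}(0)=\bm{0}$ gives $\frac{1}{T}\sum_{t}\sum_i\mathbb{E}\{D_i^t(\bm{\beta}^{*,t})\}\le D^{\text{opt}}_{\text{sys}}+\delta+B/V$. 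Letting $T\to\infty$ and then $\delta\downarrow 0$ yields \eqref{system_delay_bound}; strictness of the inequality simply reflects that the infimum $D^{\text{opt}}_{\text{sys}}$ need only be approached by stationary policies, not attained.

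For the energy bound \eqref{energy_defit_bound} I first note from the queue recursion \eqref{queue} that $q_i(t+1)\ge q_i(t)+E_i^t(\bm{\beta}^{*,t})-\bar E_i$, so $\sum_{t=0}^{T-1}(E_i^t(\bm{\beta}^{*,t})-\bar E_i)\le q_i(T)-q_i(0)=q_i(T)$, i.e.\ the accumulated deficit is controlled by the backlog. Then I repeat the drift step with the Slater policy $\Pi^{\epsilon}$ in place of $\Pi^{\delta}$: OPEN's optimality gives $\Delta(\bm{q}(t))+V\sum_i\mathbb{E}\{D_i^t(\bm{\beta}^{*,t})|\bm{q}(t)\}\le B+VD^{\max}_{\text{sys}}-\epsilon\sum_i q_i(t)$. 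Summing over $t$, taking expectations, dropping $\mathbb{E}\{L(\bm{q}(T))\}\ge 0$, and using $D_i^t\ge 0$ bounds $\frac{1}{T}\sum_t\sum_i\mathbb{E}\{q_i(t)\}$ by $\frac{1}{\epsilon}$ times $B+V(D^{\max}_{\text{sys}}-\frac{1}{T}\sum_t\sum_i\mathbb{E}\{D_i^t(\bm{\beta}^{*,t})\})$; since the first part shows the queues are mean-rate stable, OPEN is admissible for \textbf{P1}, so its long-run average delay is no smaller than $D^{\text{opt}}_{\text{sys}}$, and the running average can be replaced by $D^{\text{opt}}_{\text{sys}}$, producing the constant $\frac{1}{\epsilon}(B+V(D^{\max}_{\text{sys}}-D^{\text{opt}}_{\text{sys}}))$. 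Combining this with the overflow inequality (and noting the bounded one-slot increment $E_i^t-\bar E_i\le E_{\max}-\bar E_i$, which forces $\mathbb{E}\{q_i(T)\}$ to grow sub-linearly) delivers \eqref{energy_defit_bound}.

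The step that genuinely requires care is justifying the benchmark policies $\Pi^{\delta}$ and $\Pi^{\epsilon}$---that restricting to policies that depend only on the current arrival realization loses nothing in \textbf{P1}, and that a strictly energy-feasible stationary policy exists. A secondary subtlety is feeding the first part's feasibility/near-optimality conclusion back into the energy-bound computation so that the $D^{\text{opt}}_{\text{sys}}$ term carries the correct sign; the drift telescoping itself, and the uses of $L\ge 0$ and $\bm{q}(0)=\bm{0}$, are entirely routine.
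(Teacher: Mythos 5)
Your proposal is correct and follows essentially the same route as the paper's own proof: the drift-plus-penalty inequality \eqref{drift_plus_cost} combined with a stationary randomized $\delta$-optimal benchmark (the paper's Lemma~3, via Theorem~4.5 of \cite{neely2010stochastic}) yields the delay bound after telescoping, and a Slater-type stationary policy with energy slack $\epsilon$ yields the queue-backlog bound, which is then converted to the energy-deficit bound through the queue recursion. The only cosmetic differences are that your $\Pi^{\delta}$ meets the energy budget exactly rather than within $\delta$, and that you make explicit the step (implicit in the paper) where the running average delay of OPEN is lower-bounded by $D^{\text{opt}}_{\text{sys}}$ via mean-rate stability.
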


\begin{proof}
See Appendix B in Supplementary File.
\end{proof}
The above theorem demonstrates an $[O(1/V),O(V)]$ \emph{delay-energy deficit tradeoff}. OPEN asymptotically achieves the optimal performance of the offline problem $\textbf{P1}$ by letting $V \rightarrow \infty$. However, the optimal system delay cost is achieved at the price of a larger energy deficit, as a larger deficit queue is required to stabilize the system and hence postpones the convergence. The long-term energy deficit bound in \eqref{energy_defit_bound} implies that the time-average energy deficit grows linearly with $V$. Notice that the total energy consumption of the overall system stays almost the same regardless of the peer offloading decision. This is because all computation tasks are accommodated within the edge system and the same amount of energy will be spent to process these tasks (assuming the energy due to wired transmission is negligible). The real issue is where these tasks are processed and how much energy each SBS should spend given its long-term energy constraint.

\section{Autonomous SBS Peer Offloading}\label{sec_noncoop}
In the previous section, the network operator coordinates SBS peer offloading in a centralized way. However, the small cell network is often a distributed system where there is no central authority controlling the workload allocation. Moreover, individual SBSs may not have the complete information of the system, which impedes the derivation of social optimal solution. In this section, we formulate OPEN as a non-cooperative game where SBSs minimize their own costs in a decentralized and autonomous way. We analyze the existence of Nash Equilibrium (NE) and the efficiency loss due to decentralized coordination compared to the centralized coordination in terms of the \emph{Price of Anarchy} (PoA).

\subsection{Game Formulation}
We first define a non-cooperative game $\Gamma \triangleq (\mathcal{N}, \{ \mathcal{B}_{i\cdot}\}_{i\in\mathcal{N}},$ $\{K_i\}_{i\in\mathcal{N}})$, where $\mathcal{N}$ is the set of SBSs, $\mathcal{B}_{i\cdot}$ is the set of feasible peer offloading strategies for SBS $i$, and $K_i$ is the cost function for each SBS $i$ defined as $K_i(\bm{\beta}^t)=V\cdot D^t_i(\bm{\beta}^t)+q_i(t) E^t_i(\bm{\beta}^t)$. In the autonomous scenario, each SBS aims to minimize its own cost by adjusting its own peer offloading strategy in each time slot $t$. Without causing confusions, we also drop the time index $t$ in this section. The Nash equilibrium of this game is defined as follows.

\begin{definition}[Nash equilibrium] \label{def_NE}
	A Nash equilibrium of the SBS peer offloading game defined above is a peer offloading profile $\bm{\beta}^{\text{NE}}=\{\bm{\beta}^{\text{NE}}_{i\cdot}\}_{i\in\mathcal{N}}$ such that for every SBS $i\in\mathcal{N}$:
	\begin{equation}
	\bm{\beta}^{\text{NE}}_{i\cdot}\in\arg\min_{\tilde{\bm{\beta}}_{i\cdot}}K_i(\bm{\beta}^{\text{NE}}_{1\cdot},\bm{\beta}^{\text{NE}}_{2\cdot},\dots,\tilde{\bm{\beta}}_{i\cdot},\dots,\bm{\beta}^{\text{NE}}_{N\cdot}) .
	\end{equation}	
\end{definition}
At the Nash equilibrium, a SBS cannot further decrease its cost by unilaterally choosing a different peer offloading strategy when the strategies of the other SBSs are fixed. The equilibrium peer offloading profile can be found when each SBS's strategy is a \emph{best response} to the other SBSs' strategies. Before proceeding with the analysis, we give an equivalent expression of $K_i(\bm{\beta})$ by considering only the part that depends on SBS $i$'s peer offloading strategy $\bm{\beta}_{i\cdot}$:

\begin{align}\label{eq:Ci}
& C_i(\bm{\beta}_{i\cdot}) = \underbrace{\beta_{ii}\left(\dfrac{V}{\mu_i-\omega_i(\bm{\beta}_{i\cdot},\bm{\beta}_{-i\cdot})}+\kappa q_i\right)}_{\text{cost due to local processing}} \\
& + \underbrace{ \sum_{j\in\mathcal{N} \backslash\{i\}} \beta_{ij} \left( \dfrac{V}{\mu_j - \omega_j(\bm{\beta}_{i\cdot},\bm{\beta}_{-i\cdot})}+\dfrac{V\tau}{1-\tau\lambda(\bm{\beta}_{i\cdot},\bm{\beta}_{-i\cdot})}\right)}_{\text{cost due to peer offloading}}\nonumber,
\end{align}
where $\bm{\beta}_{-i\cdot}$ is the peer offloading decisions of other SBSs except SBS $i$. The first part of \eqref{eq:Ci} is the cost incurred by processing the retained workload locally (i.e. the computation delay cost of itself and the energy consumption) and the second part is the cost incurred by performing peer offloading (i.e. the computation delay cost on other SBSs and the network congestion delay cost). To facilitate our analysis, \eqref{eq:Ci} can be further represented as:
\begin{equation}\label{eq:C_i}
C_i(\bm{\beta}_{i\cdot})=\sum_{j\in\mathcal{N}}\beta_{ij}c_{ij}(\bm{\beta}_{i\cdot}).
\end{equation}
where 
\begin{equation*}
c_{ij}(\bm{\beta}_{i\cdot})=\left\{\\
\begin{array}{ll}
	\dfrac{V}{\mu_j-\omega_j(\bm{\beta}_{i\cdot},\bm{\beta}_{-i\cdot})}+\dfrac{V\tau}{1-\tau\lambda(\bm{\beta}_{i\cdot},\bm{\beta}_{-i\cdot})}, &\text{if}~j\neq i\\
	\dfrac{V}{\mu_i-\omega_i(\bm{\beta}_{i\cdot},\bm{\beta}_{-i\cdot})}+\kappa q_i, &\text{if}~j=i\\
\end{array} \right.
\end{equation*}
Then, in the autonomous SBS peer offloading, the best response problem for each SBS $i\in\mathcal{N}$ becomes:
\begin{subequations}
	\begin{align}\label{objective_C}\textbf{P3}~~
	&\min_{\bm{\beta}_{i\cdot} \in \mathcal{B}_{i\cdot}} \sum_{j\in\mathcal{N}}\beta_{ij}c_{ij}(\bm{\beta}_{i\cdot})\\
		\text{s.t.} ~~ & E_i(\bm{\beta}_{i\cdot},\bm{\beta}_{-i\cdot})\leq E_{\max} \label{eq:P3_emax}\\
	& D_i(\bm{\beta}_{i\cdot},\bm{\beta}_{-i\cdot})\leq D_{\max} \label{eq: P3_dmax}\\
	& \bm{\beta}_{i\cdot} \in \mathcal{B}_{i\cdot} \label{eq: P3_feasible}
	\end{align}
\end{subequations}

\subsection{Existence of Nash Equilibrium}
In this section, we analyze the existence of Nash equilibrium in SBS peer offloading game. First, we define the pair-specific marginal cost functions as follows:
\begin{definition}[Pair-specific marginal cost]\label{marginalcost}
 The marginal cost function of SBS $i$ for offloading to SBS $j$ is defined as $\tilde{c}_{ij}(\bm{\beta}_{i\cdot})=\frac{\partial(\beta_{ij}c_{ij}(\bm{\beta}_{i\cdot}))}{\partial\beta_{ij}}=c_{ij}(\bm{\beta}_{i\cdot})+\beta_{ij}\frac{\partial c_{ij}(\bm{\beta}_{i\cdot})}{\partial\beta_{ij}}$.
\end{definition}
The vector of pair-specific marginal cost functions for SBS $i$ is collected in the notation $\tilde{\bm{c}}_{i\cdot}=(\tilde{c}_{ij})_{j\in\mathcal{N}}$. It is easy to verify that $\tilde{\bm{c}}_{i\cdot}(\bm{\beta}_{i\cdot})=\triangledown_i C_i(\bm{\beta}_{i\cdot})$, where $\triangledown_i C_i(\bm{\beta}_{i\cdot})$ stands for the gradient of $C_i(\bm{\beta}_{i\cdot})$ with respect to $\bm{\beta}_{i\cdot}$. The pair-specific marginal cost function for the network is collected in the notation $\tilde{\bm{c}}=(\tilde{\bm{c}}_{i\cdot})_{i\in\mathcal{N}}$. Next, we establish conditions of the existence of a Nash equilibrium via variational inequalities. To this end, we first recall a classical result which characterizes a solution of best response problem by a variational inequality.
\begin{lemma}\label{BR_var}
$\bm{\beta}^{\text{BR}}_{i\cdot}$ is the best response of SBS $i$ in \eqref{objective_C} if and only if it satisfies the following variational inequality:
	\begin{equation} \label{var_ineq}
		\langle\tilde{\bm{c}}_{i\cdot} (\bm{\beta}^{\text{BR}}_{i\cdot}), \bm{\beta}_{i\cdot}-\bm{\beta}^{\text{BR}}_{i\cdot}\rangle\geq 0, ~\forall \bm{\beta}_{i\cdot}\in\mathcal{B}_{i\cdot} ,
	\end{equation}
	where $\langle\cdot,\cdot\rangle$ is the inner product operator.
\end{lemma}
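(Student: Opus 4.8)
The plan is to recognize Lemma \ref{BR_var} as the classical first-order (variational) characterization of a minimizer of a convex, continuously differentiable function over a convex set, applied here to the cost $C_i$ over the feasible region of \textbf{P3} (i.e.\ $\mathcal{B}_{i\cdot}$ with the side constraints \eqref{eq:P3_emax}--\eqref{eq: P3_dmax} folded in), with the rival profile $\bm{\beta}_{-i\cdot}$ held fixed. I would carry this out in three steps.

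First I would verify convexity. Holding $\bm{\beta}_{-i\cdot}$ fixed, the constraints defining $\mathcal{B}_{i\cdot}$ (positivity $\beta_{ij}\ge 0$, conservation $\sum_j\beta_{ij}=\phi_i$, and stability $\omega_j\le\mu_j$) are affine in $\bm{\beta}_{i\cdot}$; moreover $E_i$ is affine in $\bm{\beta}_{i\cdot}$ and $D_i$ is a nonnegative combination of the convex maps $\beta_{ij}D^{f}_j$ and $\lambda_i D^{g}$ (plus a constant), so \eqref{eq:P3_emax}--\eqref{eq: P3_dmax} cut out convex sublevel sets and the feasible region is a compact convex set. For the objective, I would substitute $\omega_j=\beta_{ij}+\sum_{k\ne i}\beta_{kj}$ and $\lambda=\lambda_i+\sum_{k\ne i}\lambda_k$ into \eqref{eq:Ci}: each summand then has the elementary form $x\mapsto x/(c-x)$ (plus the affine term $\kappa q_i\beta_{ii}$), where $c$ is a constant determined by $\bm{\beta}_{-i\cdot}$ and $x$ is an affine function of $\bm{\beta}_{i\cdot}$. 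Since $x/(c-x)$ has second derivative $2c/(c-x)^3>0$ on $[0,c)$ and the stability constraint keeps every such denominator strictly positive on the feasible set, $C_i$ is a sum of convex functions composed with affine maps, hence convex and $C^1$ there.

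Second, I would record that $\triangledown_i C_i(\bm{\beta}_{i\cdot})=\tilde{\bm{c}}_{i\cdot}(\bm{\beta}_{i\cdot})$, which is exactly the product-rule computation applied to \eqref{eq:C_i} and matches Definition \ref{marginalcost}; this is the observation already stated in the text just above the lemma. Third, I would invoke the standard equivalence. For sufficiency, if $\bm{\beta}^{\text{BR}}_{i\cdot}$ satisfies \eqref{var_ineq}, then convexity of $C_i$ gives, for every feasible $\bm{\beta}_{i\cdot}$, $C_i(\bm{\beta}_{i\cdot})\ge C_i(\bm{\beta}^{\text{BR}}_{i\cdot})+\langle\tilde{\bm{c}}_{i\cdot}(\bm{\beta}^{\text{BR}}_{i\cdot}),\bm{\beta}_{i\cdot}-\bm{\beta}^{\text{BR}}_{i\cdot}\rangle\ge C_i(\bm{\beta}^{\text{BR}}_{i\cdot})$, so $\bm{\beta}^{\text{BR}}_{i\cdot}$ solves \eqref{objective_C}. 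For necessity, if $\bm{\beta}^{\text{BR}}_{i\cdot}$ is a minimizer then, by convexity of the feasible set, $\bm{\beta}^{\text{BR}}_{i\cdot}+\theta(\bm{\beta}_{i\cdot}-\bm{\beta}^{\text{BR}}_{i\cdot})$ is feasible for all $\theta\in[0,1]$, so the scalar map $\varphi(\theta)=C_i(\bm{\beta}^{\text{BR}}_{i\cdot}+\theta(\bm{\beta}_{i\cdot}-\bm{\beta}^{\text{BR}}_{i\cdot}))$ is minimized over $[0,1]$ at $\theta=0$; hence $\varphi'(0^+)=\langle\tilde{\bm{c}}_{i\cdot}(\bm{\beta}^{\text{BR}}_{i\cdot}),\bm{\beta}_{i\cdot}-\bm{\beta}^{\text{BR}}_{i\cdot}\rangle\ge 0$, which is \eqref{var_ineq}.

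The main obstacle is the first step: establishing convexity and smoothness of $C_i$ despite the fact that the coefficients $c_{ij}(\bm{\beta}_{i\cdot})$ are themselves functions of $\bm{\beta}_{i\cdot}$ through $\omega_j$ and $\lambda$. The point that unlocks it is that, once the rivals' decisions are frozen, every term collapses to the one-dimensional convex form $x/(c-x)$ with $x$ affine in $\bm{\beta}_{i\cdot}$, and feasibility (stability) keeps all denominators bounded away from zero; after that, the remainder is the textbook variational-inequality argument (cf.\ \cite{kinderlehrer2000introduction}).
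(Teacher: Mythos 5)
Your proposal is correct and follows essentially the same route as the paper: the paper's own proof simply cites Kinderlehrer--Stampacchia (Propositions 5.1 and 5.2) together with the remark that $C_i$ is convex in $\bm{\beta}_{i\cdot}$, which is exactly the classical first-order characterization you spell out. The only difference is that you make explicit what the paper delegates to the citation --- the verification that each term reduces to the convex form $x/(c-x)$ with $x$ affine once $\bm{\beta}_{-i\cdot}$ is frozen, and the two-direction gradient-inequality argument --- while relying, as the paper itself does, on the asserted identity $\triangledown_i C_i=\tilde{\bm{c}}_{i\cdot}$.
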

\begin{proof}
See Kinderlehrer and Stampacchia \cite{kinderlehrer2000introduction} (Proposition 5.1 and Proposition 5.2). Notice that $C_i(\bm{\beta}_{i\cdot})$ is required to be convex with respect to $\bm{\beta}_{i\cdot}$, which is obvious in \eqref{eq:C_i}.
\end{proof}

The following theorem establishes the existence of a Nash equilibrium in the SBS peer offloading game.
\begin{theorem}[Existence of Nash equilibrium]
The SBS peer offloading game admits at least one Nash equilibrium.
\end{theorem}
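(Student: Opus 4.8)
The plan is to establish existence of a Nash equilibrium by appealing to the classical theory of variational inequalities, building directly on Lemma \ref{BR_var}. The key observation is that a peer offloading profile $\bm{\beta}^{\text{NE}}$ is a Nash equilibrium if and only if each $\bm{\beta}^{\text{NE}}_{i\cdot}$ is a best response to $\bm{\beta}^{\text{NE}}_{-i\cdot}$, which by Lemma \ref{BR_var} happens if and only if the variational inequality \eqref{var_ineq} holds for every $i$ simultaneously. Concatenating these per-SBS variational inequalities, one sees that $\bm{\beta}^{\text{NE}}$ is a Nash equilibrium if and only if it solves the aggregate variational inequality $\langle \tilde{\bm{c}}(\bm{\beta}^{\text{NE}}), \bm{\beta} - \bm{\beta}^{\text{NE}}\rangle \geq 0$ for all $\bm{\beta} \in \mathcal{B} \triangleq \prod_{i\in\mathcal{N}}\mathcal{B}_{i\cdot}$, where $\tilde{\bm{c}} = (\tilde{\bm{c}}_{i\cdot})_{i\in\mathcal{N}}$ is the stacked vector of pair-specific marginal cost functions. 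This reduction is the heart of the argument and is already half-done by the definitions preceding the theorem.

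Next I would invoke the standard existence result for variational inequalities (e.g., Kinderlehrer and Stampacchia \cite{kinderlehrer2000introduction}, or Facchinei--Pang): if the feasible set $\mathcal{B}$ is nonempty, compact, and convex, and the map $\tilde{\bm{c}}$ is continuous on $\mathcal{B}$, then the variational inequality $\text{VI}(\mathcal{B},\tilde{\bm{c}})$ admits at least one solution. So the proof comes down to verifying these three hypotheses. For nonemptiness: the ``no-offloading'' profile $\beta_{ij} = \phi_i \mathbbm{1}\{j=i\}$ satisfies positivity and conservation trivially, and satisfies stability since $\phi_i \le f_i/h = \mu_i$ is implicitly assumed (each SBS can serve its own arrivals); constraints \eqref{emax}--\eqref{dmax} hold in this regime under the standing assumptions, so $\mathcal{B}$ (intersected with the per-slot constraints) is nonempty. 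For convexity and compactness: each $\mathcal{B}_{i\cdot}$ is the intersection of the nonnegative orthant, the affine hyperplane $\sum_j \beta_{ij} = \phi_i$, and the closed half-spaces/sublevel sets coming from stability, \eqref{emax} and \eqref{dmax}; boundedness follows from $0 \le \beta_{ij} \le \phi_i$. One must check that the delay and energy constraint sets are convex --- $E_i$ is linear in $\bm{\beta}$ so \eqref{emax} is a half-space, while $D_i$ is a sum of convex functions of $\omega_j$ and $\lambda$ (as already noted after Lemma \ref{BR_var}) composed with affine maps, hence convex, so \eqref{dmax} defines a convex sublevel set. Continuity of $\tilde{\bm{c}}$ on $\mathcal{B}$: each $c_{ij}$ is a smooth function of $\bm{\beta}$ wherever the denominators $\mu_j - \omega_j$ and $1 - \tau\lambda$ stay bounded away from zero, which is guaranteed on the feasible region by the stability constraint $\omega_j \le \mu_j$ and $\lambda < 1/\tau$ --- strictly, one restricts to a slightly shrunk compact set where these are bounded below by a positive constant, or observes that the $D_{\max}$ cap already enforces strict feasibility.

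The main obstacle I anticipate is a subtlety in the coupling: the feasible set $\mathcal{B}_{i\cdot}$ of SBS $i$ depends, through constraints \eqref{eq:P3_emax}--\eqref{eq: P3_dmax}, on the other SBSs' decisions $\bm{\beta}_{-i\cdot}$ (because $\omega_j$ and $\lambda$ involve all players' variables), so the game has coupled constraints and is not a standard game with a fixed product strategy space. This means the naive reduction to $\text{VI}(\mathcal{B},\tilde{\bm{c}})$ with $\mathcal{B}$ a fixed product set is not quite valid, and one either has to argue that the relevant constraints are inactive (so effectively the strategy spaces decouple), or work with a generalized Nash equilibrium / quasi-variational inequality and invoke a fixed-point argument (Kakutani on the best-response correspondence, whose values are nonempty convex compact by Lemma \ref{BR_var} and whose graph is closed by continuity). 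I would handle this by noting that $D_{\max}$ and $E_{\max}$ are chosen large enough (as stated in the model) that these per-slot caps are not binding at any candidate equilibrium, reducing the constraint set for player $i$ to the player-$i$-only simplex-type set $\{\bm{\beta}_{i\cdot} : \beta_{ij}\ge 0, \sum_j \beta_{ij} = \phi_i, \omega_i(\bm{\beta}_{i\cdot},\bm{\beta}_{-i\cdot}) \le \mu_i\}$; even the last stability constraint couples players, so a clean treatment routes through Kakutani's fixed-point theorem applied to the best-response map, with the variational-inequality characterization of Lemma \ref{BR_var} supplying the convexity of best-response sets. Either route yields the claimed existence; I expect the authors take the variational-inequality route with an implicit assumption that the per-slot caps are slack.
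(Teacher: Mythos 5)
Your proposal is correct and follows essentially the same route as the paper: characterize best responses via the per-player variational inequality (Lemma \ref{BR_var}), invoke the Kinderlehrer--Stampacchia existence theorem for the aggregate variational inequality $\langle \tilde{\bm{c}}(\bm{\beta}'), \bm{\beta}-\bm{\beta}'\rangle \ge 0$ on the nonempty compact convex set $\mathcal{B}$, and recover the per-player inequalities by restricting to single-player deviations. Your worry about the coupled constraints is legitimate (the paper glosses over it), but it does not force a detour through Kakutani: the existence argument only needs the direction ``aggregate VI $\Rightarrow$ each player's VI over its own slice of $\mathcal{B}$ at $\bm{\beta}'_{-i\cdot}$,'' which is exactly the feasible set of player $i$'s best-response problem and holds verbatim with coupling, so the VI solution is directly a (generalized/variational) Nash equilibrium.
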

\begin{proof}
According to Kinderlehrer and Stampacchia \cite{kinderlehrer2000introduction} Chapter 1, Theorem 3.1, since $\mathcal{B}$ is a nonempty, compact, and convex feasible offloading strategy set and $\tilde{\bm{c}}$ as a continuous map defined on $\mathcal{B}$, there exist $\bm{\beta}^\prime\in \mathcal{B}$ that satisfies the following variational inequality:
\begin{equation}\label{sys_varInequa}
	\langle \tilde{\bm{c}}(\bm{\beta}^\prime), \bm{\beta}-\bm{\beta}^\prime\rangle\geq 0,~\forall~\bm{\beta}\in\mathcal{B}.
\end{equation}

Next, we prove that this $\bm{\beta}^\prime$ is a Nash equilibrium. According to Lemma \ref{BR_var}, for a best respond $\bm{\beta}^{\text{BR}}_{i\cdot}$ of (\ref{objective_C}), we must have:
	\begin{equation}\label{varInequa_j}
		\langle\tilde{\bm{c}}_{i\cdot} (\bm{\beta}^{\text{BR}}_{i\cdot}), \bm{\beta}_{i\cdot}-\bm{\beta}^{\text{BR}}_{i\cdot}\rangle\geq 0,
	\end{equation}
and all $\bm{\beta}^{\text{BR}}_{i\cdot}$ satisfying (\ref{varInequa_j}) is an optimal solution of (\ref{objective_C}). It remains to show that (\ref{varInequa_j}) is equivalent to (\ref{sys_varInequa}). If (\ref{varInequa_j}) is true for all $i$, then (\ref{sys_varInequa}) follows immediately for $\bm{\beta}^{\text{BR}}$. If (\ref{sys_varInequa}) is true, one can always take a specific profile $\bm{\beta}$ such that $\bm{\beta}_{k\cdot}=\bm{\beta}^\prime_{k\cdot}$ for all $k\neq i$ to obtain $\langle\tilde{\bm{c}}_{i\cdot} (\bm{\beta}^{\prime}_{i\cdot}), \bm{\beta}_{i\cdot}-\bm{\beta}^\prime_{i\cdot}\rangle\geq 0$ for SBS $i$, which means $\bm{\beta}^\prime_{i\cdot}$ is one of the best response solutions. According to Definition \ref{def_NE}, we can conclude that $\bm{\beta}^\prime_{i\cdot}$ is a Nash equilibrium.
\end{proof}

\subsection{Algorithm for Achieving Nash Equilibrium}
In this subsection, we first present the \emph{best-response} algorithm which is used to obtain $\bm{\beta}^{\text{BR}}_{i\cdot}$ for each SBS. Then all SBSs take turns in a round-robin fashion to perform the best-response until an Nash equilibrium $\bm{\beta}^{\text{NE}}$ is reached.
\begin{figure}[htb]
	\centering	
	\includegraphics [width=0.5 \linewidth]{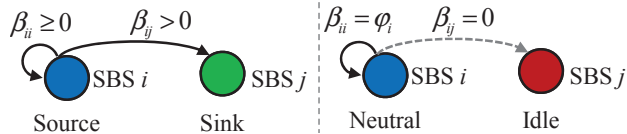}
	\vspace{-0.1 in}
	\caption{Illustration of SBS categories with respect to SBS $i$.}
	\label{category}
\end{figure}

Analogous to the previous section, SBSs are classified into different categories (see illustration in Fig.\ref{category}):
\begin{itemize}
    \item \textbf{Source SBS} ($\mathcal{R}$). A SBS is a source SBS if it offloads a positive portion of its \textit{pre-offloading} workloads to other SBSs and processes the rest of workloads locally ($0\leq \beta_{ii} < \phi_i$).
    \item \textbf{Neutral SBS} ($\mathcal{U}$). A SBS is a neutral SBS if it processes all its \textit{pre-offloading} workloads locally ($\beta_{ii}=\phi_i$).
    \item \textbf{Idle SBS with respect to SBS $i$} ($\mathcal{I}_i$). A SBS is an idle SBS with respect to SBS $i$ if it does not receive any workload from SBS $i$ ($\beta_{ij}=0$).
    \item \textbf{Sink SBS with respect to SBS $i$} ($\mathcal{S}_i$). A SBS is a sink SBS with respect to SBS $i$ if it receives workload from SBS $i$ ($\beta_{ij} > 0$).
\end{itemize}
The first two categories are defined depending on how the SBS handles its own \textit{pre-offloading} workloads. The last two categories are defined depending on how the SBS handles other SBSs' workloads. Since there are $N-1$ other SBSs, the categories are defined with respect to each SBS. Unlike the categories in Section IV, these categories are not mutually exclusive and hence, a SBS can belong to multiple categories at the same time.

To solve the best-response problem  in (\ref{objective_C}) for each SBS $i$, we define two auxiliary functions.
\begin{definition}
Define  $d_{ij}(\beta_{ij})\triangleq \frac{\partial[\beta_{ij}D^f_j(\beta_{ij})]}{\partial \beta_{ij}} = \frac{\mu_{ij}}{(\mu_{ij}-\beta_{ij})^2}$ as the pair-specific marginal computation delay function; $g_i(\lambda_i) \triangleq \frac{\partial[\lambda_iD^g(\lambda_i)]}{\partial\lambda_i}=\frac{\tau \Lambda_{-i}}{(\Lambda_{-i}-\tau\lambda_i)^2}$ as the pair-specific marginal congestion delay function, where $\mu_{ij}=\mu_j-\sum_{k=1,k\neq i}^{N}\beta_{kj}$ and $\Lambda_{-i}=1-\tau\sum_{k=1,k\neq i}^{N}\lambda_k$.
\end{definition}

The \textit{pre-offloading} Pair-specific Marginal Computation Cost (PMaCC) is therefore
\begin{equation}\label{PMaCC}
\xi_{ij} =\left\{
\begin{split}
&Vd_{ij}(0), & j\neq i\\
&Vd_{ii}(\phi_i)+\kappa q_i, & j=i
\end{split}\right.,
\end{equation}
by initializing $\beta_{ii}=\phi_i$ and $\beta_{ij}=0, \forall j\in\mathcal{N}, j\neq i$.

\begin{theorem}\label{noncoop_policy}
	In the SBS peer offloading game, the category that SBS $i$ belongs to and its best response peer offloading strategy $\bm{\beta}^{\text{BR}}_{i\cdot}$ can be decided as follow:\\
	For SBS i itself:\\
	(a) If $\xi_{ii}>\alpha_i+Vg_i(\lambda^{\text{BR}}_i)$, then $i\in \mathcal{R}$ and$\beta^{\text{BR}}_{ii}=\left[d_{ii}^{-1}\left(\frac{1}{V}(\alpha_i+Vg_i(\lambda^{\text{BR}}_i)-\kappa q_i)\right)\right]^+$;\\
	(b) If $\xi_{ii}\leq \alpha_i+Vg_i(\lambda^{\text{BR}}_i)$, then $i\in \mathcal{U}$ and $\beta^{\text{BR}}_{ii}=\phi_i$;\\
	For SBS $j$ other than $i$ ($j \neq i$):\\
	(c) If $\xi_{ij}>\alpha_i$, then $j\in \mathcal{I}_i$ and $\beta^{\text{BR}}_{ij}=0$;\\
	(d) If $\xi_{ij}<\alpha_i$, then $j\in\mathcal{S}_i$ and $\beta^{\text{BR}}_{ij}=d_{ij}^{-1}(\frac{\alpha_i}{V})$;\\
	where $\lambda^{\text{BR}}$, $\alpha_i$ are the solution to workload flow equation
	\begin{align}\label{flowconstraint}
	\underbrace{\sum_{j\in\mathcal{S}_i}d_{ij}^{-1}(\frac{\alpha_i}{V})}_{\lambda^S_i\emph{: inbound workload to } \mathcal{S}_i} =\textbf{\emph{1}}\{i\in\mathcal{R}\}\cdot\underbrace{\left(\phi_i-[d_{ij}^{-1}(\frac{1}{V}(\alpha_i+Vg_i(\lambda^{\text{BR}}_i)-q_i\kappa))]^+\right)}_{\lambda^R_i\emph{: outbound workload from SBS }i}. \nonumber
	\end{align}
\end{theorem}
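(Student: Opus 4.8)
The plan is to solve the best‑response problem \textbf{P3} as a convex program and read its structure off the KKT system (equivalently, the variational inequality of Lemma~\ref{BR_var}), in close parallel with the proof of Theorem~\ref{centralized_solution}, but with SBS $i$ now regarding $\bm{\beta}_{-i\cdot}$ as fixed. First I would note that from SBS $i$'s viewpoint $\mu_j-\omega_j=\mu_{ij}-\beta_{ij}$ for $j\neq i$, $\mu_i-\omega_i=\mu_{ii}-\beta_{ii}$, and $1-\tau\lambda=\Lambda_{-i}-\tau\lambda_i$ with $\lambda_i=\phi_i-\beta_{ii}=\sum_{j\neq i}\beta_{ij}$, so that $C_i(\bm{\beta}_{i\cdot})=\sum_{j\in\mathcal{N}}\beta_{ij}c_{ij}(\bm{\beta}_{i\cdot})$ equals $\sum_{j\in\mathcal{N}}V\beta_{ij}D^f_j(\beta_{ij})+\kappa q_i\beta_{ii}+V\lambda_iD^g(\lambda_i)$. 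Each $\beta_{ij}D^f_j(\beta_{ij})=\beta_{ij}/(\mu_{ij}-\beta_{ij})$ is strictly convex on $[0,\mu_{ij})$ and $\lambda_iD^g(\lambda_i)=\tau\lambda_i/(\Lambda_{-i}-\tau\lambda_i)$ is convex on $[0,\Lambda_{-i}/\tau)$, so $C_i$ is strictly convex; since $\mathcal{B}_{i\cdot}$ is nonempty, compact and convex, $\bm{\beta}^{\text{BR}}_{i\cdot}$ exists, is unique, and is characterized by the KKT conditions. Introducing a free multiplier $\mu_0$ for $\sum_j\beta_{ij}=\phi_i$ and $\nu_j\geq 0$ for $\beta_{ij}\geq 0$ (the peak energy/delay constraints of \textbf{P3} are handled as in the proof of Theorem~\ref{centralized_solution} --- either slack or merely clamping the affected workloads), and using $\partial C_i/\partial\beta_{ij}=\tilde{c}_{ij}$ (Definition~\ref{marginalcost}), stationarity reads $Vd_{ii}(\beta_{ii})+\kappa q_i=\mu_0+\nu_i$ for $j=i$ and $Vd_{ij}(\beta_{ij})+Vg_i(\lambda_i)=\mu_0+\nu_j$ for $j\neq i$, with $\nu_j\beta_{ij}=0$.

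The key move is then to \emph{define} $\alpha_i\triangleq\mu_0-Vg_i(\lambda^{\text{BR}}_i)$: this is the quantity appearing in the statement, interpretable as the common value $Vd_{ij}(\beta^{\text{BR}}_{ij})$ of the pair‑specific post‑offloading marginal computation delay cost at every peer receiving workload from $i$ (no energy term enters since SBS $i$ does not bear its peers' computation energy), with $Vg_i(\lambda^{\text{BR}}_i)$ the marginal congestion toll on the LAN. Parts (c) and (d) then follow from a componentwise reading of the $j\neq i$ conditions: if $\beta^{\text{BR}}_{ij}>0$ then $\nu_j=0$, so $Vd_{ij}(\beta^{\text{BR}}_{ij})=\alpha_i$, i.e.\ $\beta^{\text{BR}}_{ij}=d_{ij}^{-1}(\alpha_i/V)$, which is positive precisely when $\alpha_i>Vd_{ij}(0)=\xi_{ij}$; if $\beta^{\text{BR}}_{ij}=0$ then $\nu_j\geq 0$ forces $\xi_{ij}\geq\alpha_i$. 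Parts (a) and (b) follow from the $j=i$ condition, using that conservation plus $\beta_{ij}\geq0$ force $\beta_{ii}\in[0,\phi_i]$: either $\beta^{\text{BR}}_{ii}=\phi_i$ (then $\lambda^{\text{BR}}_i=0$, $\nu_i=0$, so $\mu_0=\xi_{ii}$, i.e.\ $\xi_{ii}=\alpha_i+Vg_i(\lambda^{\text{BR}}_i)$ and $i\in\mathcal{U}$), or $\beta^{\text{BR}}_{ii}<\phi_i$, so $i$ offloads and $i\in\mathcal{R}$; in the latter case $\nu_i=0$ gives $Vd_{ii}(\beta^{\text{BR}}_{ii})+\kappa q_i=\mu_0=\alpha_i+Vg_i(\lambda^{\text{BR}}_i)$, which with $d_{ii}$ strictly increasing and $\beta^{\text{BR}}_{ii}<\phi_i$ yields $\xi_{ii}>\alpha_i+Vg_i(\lambda^{\text{BR}}_i)$, while the corner $\beta^{\text{BR}}_{ii}=0$ is exactly the clamping $[d_{ii}^{-1}(\tfrac{1}{V}(\alpha_i+Vg_i(\lambda^{\text{BR}}_i)-\kappa q_i))]^+=0$.

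The workload flow equation is then just the conservation constraint $\sum_{j\neq i}\beta^{\text{BR}}_{ij}=\phi_i-\beta^{\text{BR}}_{ii}$ after substituting the closed forms above: the left side equals $\sum_{j\in\mathcal{S}_i}d_{ij}^{-1}(\alpha_i/V)=\lambda^S_i$, and the right side equals $0$ when $i\in\mathcal{U}$ and $\phi_i-[d_{ii}^{-1}(\tfrac{1}{V}(\alpha_i+Vg_i(\lambda^{\text{BR}}_i)-\kappa q_i))]^+=\lambda^R_i$ when $i\in\mathcal{R}$, i.e.\ $\lambda^S_i=\mathbf{1}\{i\in\mathcal{R}\}\,\lambda^R_i$. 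The step I expect to be the main obstacle is showing that the pair $(\alpha_i,\lambda^{\text{BR}}_i)$ solving this coupled system exists and is unique, because $\lambda^{\text{BR}}_i$ appears simultaneously inside $g_i$ and inside the threshold $\alpha_i+Vg_i(\lambda^{\text{BR}}_i)$ separating $\mathcal{R}$ from $\mathcal{U}$. The argument I would give: along the locus where conservation holds, $\lambda^S_i(\alpha_i)=\sum_{j\in\mathcal{S}_i(\alpha_i)}d_{ij}^{-1}(\alpha_i/V)$ is continuous and strictly increasing in $\alpha_i$ (the sink set $\mathcal{S}_i(\alpha_i)=\{j\neq i:\xi_{ij}<\alpha_i\}$ grows and each $d_{ij}^{-1}$ is increasing), whereas $\lambda^R_i=\phi_i-\beta^{\text{BR}}_{ii}$ is continuous and non‑increasing in $\alpha_i$ (a larger $\alpha_i$, hence larger $\mu_0=\alpha_i+Vg_i(\lambda^{\text{BR}}_i)$, retains more workload at $i$); at the low end $\lambda^S_i\to0\leq\lambda^R_i$, and feasibility of \textbf{P3} makes the two monotone branches meet, so they cross at a unique point, giving the unique $\alpha_i$ with $\lambda^{\text{BR}}_i$ the common value. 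A short continuity/fixed‑point argument is needed to make the simultaneous dependence on $\lambda^{\text{BR}}_i$ rigorous, but monotonicity carries it through --- and this is exactly what a binary search analogous to Algorithm~\ref{COPEN} exploits.

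The remaining loose ends are minor: the knife‑edge identity $\xi_{ii}=\alpha_i+Vg_i(\lambda^{\text{BR}}_i)$ that pins down the neutral case (where the flow equation alone leaves $\alpha_i$ underdetermined and one must instead invoke $\mu_0=\xi_{ii}$), the $[\cdot]^+$ bookkeeping at the $\beta_{ii}=0$ corner, the symmetric treatment of the case where the residual stability bound $\beta_{ii}\leq\mu_{ii}$ forces $i\in\mathcal{R}$, and verifying that the peak energy/delay constraints are inactive at the optimum so that the plain KKT conditions apply. None of these alters the structure stated in the theorem; they are handled exactly as in the proof of Theorem~\ref{centralized_solution}.
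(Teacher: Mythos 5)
Your proposal is correct and follows essentially the same route as the paper's Appendix~C: treat the best response \textbf{P3} as a convex program, write the Kuhn--Tucker conditions with a multiplier for the conservation constraint and multipliers for the nonnegativity constraints, and read the four cases and the flow equation off the complementary-slackness case analysis (your $\alpha_i=\mu_0-Vg_i(\lambda_i^{\mathrm{BR}})$ is just the paper's balance-equation multiplier under its $u_{ij},v_{ij}$ reparametrization). The only substantive addition is your monotonicity argument for existence and uniqueness of the pair $(\alpha_i,\lambda_i^{\mathrm{BR}})$, which the paper leaves implicit in the binary-search algorithm rather than proving.
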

\begin{proof}
	See Appendix C in Supplementary File.
\end{proof}
Theorem \ref{noncoop_policy} can be explained in a similar way as Theorem \ref{centralized_solution}. Figure \ref{noncoop} depicts the changes of marginal computation costs when SBS $i$ performs best response. One major difference is that in the best-response algorithm, SBS $i$ determines its sink SBSs by examining only the marginal computation delay cost (i.e., $d_{ij}$), regardless of the marginal energy consumption cost ($\kappa q_j$) of other SBSs. This is actually intuitive since each SBS aims to minimize its own cost rather than the overall system cost. The solution for $\alpha_i$ can be obtained by a binary search under the workload flow equation \eqref{flowconstraint} as designed by the best-response algorithm in Algorithm \ref{alg:best_response}.

\begin{algorithm}[htb]\label{alg:best_response}
	\caption{Best-response}
	\KwIn{$\phi_i$, $V$, $q_i(t)$, $\bm{\beta}_{-i\cdot}$}
	\KwOut{ optimal peer offloading strategy $\bm{\beta}^*_{i\cdot}$ for SBS $i$ }
	$\beta_{ij}\leftarrow 0 (j\neq i), \beta_{ii}\leftarrow \phi_i$\;
	Calculate $\xi_{ij}, \forall j\in\mathcal{N}$ as in \eqref{PMaCC}\;
	Find $\xi_{i,\max}=\max\limits_{j\in\mathcal{N},j\neq i} \xi_{ij}$; $\xi_{i,\min}=\min\limits_{j\in\mathcal{N},j\neq i}\xi_{ij}$\;
    \textbf{If} $\xi_{i,\min}+Vg_i(0)>\xi_{ii}$
    \textbf{STOP} (no peer offloading is required)\;
    $a\leftarrow \xi_{i,\min}$\;
    $b\leftarrow \xi_{i,\max}$\;
    \While {$|\lambda^S_i(\alpha_i)-\lambda^R_i(\alpha_i)| \geq \tilde{\epsilon}$}
    { $\lambda^S_i(\alpha_i)\leftarrow 0$,$~\lambda^R_i(\alpha_i)\leftarrow 0$\;
      $\alpha_j\leftarrow \dfrac{1}{2}(a+b)$\;
      Get in order: $\mathcal{S}_i(\alpha_i)$, $\lambda^S_i(\alpha_i)$, $\mathcal{R}(\alpha_i)$, $\mathcal{U}(\alpha_i)$, $\lambda^R_i(\alpha_i)$, $\bm{\beta}_{i\cdot}$ according to Theorem \ref{noncoop_policy}\;
      \textbf{if} {$\lambda^S_i(\alpha_i)>\lambda^R_i(\alpha_i)$}~~\textbf{then}~~$b\leftarrow\alpha_i$\;\textbf{else}~~~$a\leftarrow\alpha_i$\;
    }
    return $\bm{\beta}^{\text{BR}}_{i\cdot}$;
\end{algorithm}

\begin{figure}[b]
	\vspace{-0.1 in}
	\centering	
	\includegraphics [width=0.75 \linewidth]{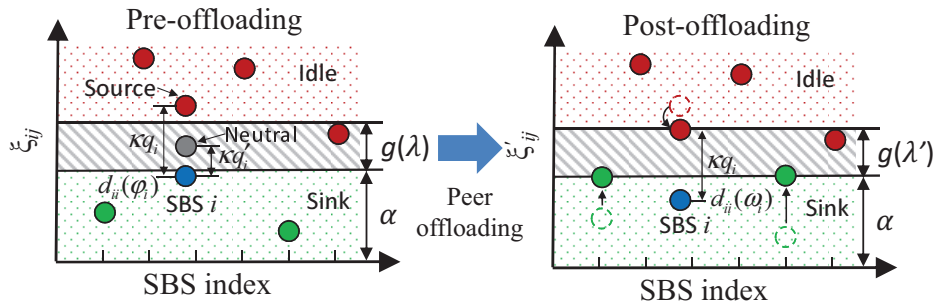}
	\caption{Changes in the marginal computation costs after best response.}
	\label{noncoop}
    
\end{figure}

With the best-response algorithm, we propose the algorithm OPEN-Autonomous to obtain the Nash equilibrium $\bm{\beta}^{\text{NE}}$, where SBSs take turns to run the best-response algorithm in a round-robin fashion. The iteration terminates if the total cost change of SBSs is less than a sufficiently small tolerance, in which case the SBS sends a terminating message to be propagated among SBSs. An important question is whether such best-response based algorithm indeed converges. There exists results about the convergence of such algorithm in the context of routing in parallel links \cite{korilis1997capacity}. For our peer offloading game there exists a unique Nash equilibrium because the cost function of the players are continuous, convex and increasing. Our simulation of in Section \ref{sec_simulation} also confirms the convergence of the best-response algorithm.

\subsection{Price of anarchy}
We now analyze the price of anarchy (PoA) of the SBS peer offloading game, which is a measure of efficiency loss due to the strategic behavior of players. Recall  that $\bm{\beta}^{*}$ is the centralized optimal solution that minimizes the system-wide cost in each time slot. Let $\mathcal{B}^{\text{NE}}$ be the set of Nash equilibria of SBS peer offloading game. The PoA is defined as:
\begin{equation*}
\text{PoA}\triangleq\dfrac{\max_{\bm{\beta}\in\mathcal{B}^{\text{NE}}}C(\bm{\beta})}{C(\bm{\beta}^*)}=\dfrac{\max_{\bm{\beta}\in\mathcal{B}^{\text{NE}}}\sum_{i\in\mathcal{N}}C_i(\bm{\beta}_{i\cdot})}{\sum_{i\in\mathcal{N}}C_i(\bm{\beta}^{*}_{i\cdot})}.
\end{equation*}

In the following, we give a general bound on the PoA.
\begin{theorem}[Bound of PoA] \label{theorm:poa}
	Let $\bm{\beta}^{\text{NE}}$ be a Nash equilibrium and $\bm{\beta}^{*}$ be the optimal peer offloading profile for the per-slot problem \textbf{P2}. Then the PoA of the non-cooperative SBS peer offloading game satisfies:
	\begin{equation*}
		1\leq\text{\emph{PoA}}\leq\dfrac{1}{1-\rho(\bm{c})},
	\end{equation*}
where $\rho(\bm{c}) \triangleq \sup_{j\in\mathcal{N}}\rho(\bm{c}_{\cdot j})$, and

\begin{equation*}
\begin{split}
&\rho(\bm{c}_{\cdot j}) \triangleq \sup_{\bm{\beta},\hat{\bm{\beta}}\in\mathcal{B}} \frac{1}{\sum_{i=1}^{N}\beta_{ij}c_{ij}(\bm{\beta}_{i\cdot})} \cdot \sum_{i=1}^{N}\left[(\tilde{c}_{ij}(\bm{\beta}_{i\cdot})-c_{ij}(\hat{\bm{\beta}}_{i\cdot}))\hat{\beta}_{ij} + \left( c_{ij}(\bm{\beta}_{i\cdot})-\tilde{c}_{ij}(\bm{\beta}_{i\cdot}) \right)\beta_{ij}\right] .
\end{split}
\end{equation*}
\end{theorem}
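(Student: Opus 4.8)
The plan is to establish the two bounds separately, the lower one being a triviality and the upper one resting on a variational-inequality (``smoothness'') argument. For the lower bound, since $\bm{\beta}^{*}$ minimizes the system cost $C(\cdot)=\sum_{i\in\mathcal{N}}C_i(\cdot)$ over the feasible set, any Nash equilibrium $\bm{\beta}^{\text{NE}}$ has $C(\bm{\beta}^{\text{NE}})\ge C(\bm{\beta}^{*})$, so $\text{PoA}\ge 1$ follows at once. For the upper bound I would use the optimal profile $\bm{\beta}^{*}$ as a test point in each player's first-order optimality condition and then account for the gap between the equilibrium cost and $C(\bm{\beta}^{*})$ using the quantity $\rho(\bm{c})$.

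In detail, one would fix any $\bm{\beta}^{\text{NE}}\in\mathcal{B}^{\text{NE}}$; by Lemma~\ref{BR_var}, each block $\bm{\beta}^{\text{NE}}_{i\cdot}$ is a best response, so $\langle\tilde{\bm{c}}_{i\cdot}(\bm{\beta}^{\text{NE}}_{i\cdot}),\,\bm{\beta}^{*}_{i\cdot}-\bm{\beta}^{\text{NE}}_{i\cdot}\rangle\ge 0$; summing over $i$ yields $\sum_{i,j}\tilde{c}_{ij}(\bm{\beta}^{\text{NE}}_{i\cdot})\beta^{\text{NE}}_{ij}\le\sum_{i,j}\tilde{c}_{ij}(\bm{\beta}^{\text{NE}}_{i\cdot})\beta^{*}_{ij}$. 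Starting from $C(\bm{\beta}^{\text{NE}})=\sum_{i,j}\beta^{\text{NE}}_{ij}c_{ij}(\bm{\beta}^{\text{NE}}_{i\cdot})$, I would insert $\pm\tilde{c}_{ij}(\bm{\beta}^{\text{NE}}_{i\cdot})$, apply the summed inequality, then insert $\pm c_{ij}(\bm{\beta}^{*}_{i\cdot})$; this produces
\begin{equation*}
C(\bm{\beta}^{\text{NE}})\le C(\bm{\beta}^{*})+\sum_{j\in\mathcal{N}}\sum_{i\in\mathcal{N}}\Big[\big(\tilde{c}_{ij}(\bm{\beta}^{\text{NE}}_{i\cdot})-c_{ij}(\bm{\beta}^{*}_{i\cdot})\big)\beta^{*}_{ij}+\big(c_{ij}(\bm{\beta}^{\text{NE}}_{i\cdot})-\tilde{c}_{ij}(\bm{\beta}^{\text{NE}}_{i\cdot})\big)\beta^{\text{NE}}_{ij}\Big].
\end{equation*}
For each $j$ the inner sum over $i$ is exactly the numerator in the definition of $\rho(\bm{c}_{\cdot j})$ evaluated at the pair $(\bm{\beta},\hat{\bm{\beta}})=(\bm{\beta}^{\text{NE}},\bm{\beta}^{*})\in\mathcal{B}\times\mathcal{B}$, hence it is at most $\rho(\bm{c}_{\cdot j})\sum_i\beta^{\text{NE}}_{ij}c_{ij}(\bm{\beta}^{\text{NE}}_{i\cdot})\le\rho(\bm{c})\sum_i\beta^{\text{NE}}_{ij}c_{ij}(\bm{\beta}^{\text{NE}}_{i\cdot})$. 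Summing over $j$ collapses the residual to $\rho(\bm{c})\,C(\bm{\beta}^{\text{NE}})$, so $C(\bm{\beta}^{\text{NE}})\le C(\bm{\beta}^{*})+\rho(\bm{c})\,C(\bm{\beta}^{\text{NE}})$. Rearranging (here one needs $\rho(\bm{c})<1$, otherwise the bound is vacuous and there is nothing to prove) gives $C(\bm{\beta}^{\text{NE}})/C(\bm{\beta}^{*})\le 1/(1-\rho(\bm{c}))$, and maximizing the left side over $\bm{\beta}^{\text{NE}}\in\mathcal{B}^{\text{NE}}$ yields the claim.

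I expect the main difficulty to lie not in the algebra above but in the two admissibility checks it hides. First, invoking Lemma~\ref{BR_var} with the direction $\bm{\beta}^{*}_{i\cdot}-\bm{\beta}^{\text{NE}}_{i\cdot}$ requires $\bm{\beta}^{*}_{i\cdot}$ to be a feasible move for player $i$ given the equilibrium actions $\bm{\beta}^{\text{NE}}_{-i\cdot}$; positivity and the conservation constraint $\sum_j\beta_{ij}=\phi_i$ are player-separable and cause no trouble, but the stability requirement and the per-slot constraints \eqref{eq:P3_emax}--\eqref{eq: P3_dmax} couple across players, so one has to argue (consistently with the way $\mathcal{B}_{i\cdot}$ enters Lemma~\ref{BR_var}) that these stay satisfied along the relevant segment, e.g.\ by treating them as non-binding or as part of the separable description of $\mathcal{B}_{i\cdot}$. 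Second, the decomposition must be carried out so that no cross term is discarded and the residual is genuinely a sum over the receiving-SBS index $j$ — this separability is exactly what allows the node-wise coefficients $\rho(\bm{c}_{\cdot j})$, and their supremum $\rho(\bm{c})$, to absorb the entire gap. A minor but necessary observation is that $\sum_i\beta^{\text{NE}}_{ij}c_{ij}(\bm{\beta}^{\text{NE}}_{i\cdot})\ge 0$ for every $j$, so that the division defining $\rho(\bm{c}_{\cdot j})$ is legitimate and all the inequalities point in the intended direction; this follows from the nonnegativity of the $c_{ij}$ and from feasibility keeping every queueing denominator strictly positive.
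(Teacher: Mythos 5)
Your proposal is correct and follows essentially the same route as the paper's proof: it invokes the variational inequality satisfied at the Nash equilibrium (the per-player best-response inequalities summed over $i$, which is exactly the paper's system-wide inequality \eqref{sys_varInequa}), performs the same $\pm\tilde{c}_{ij}$ and $\pm c_{ij}(\bm{\beta}^{*}_{i\cdot})$ insertions, and absorbs the residual via $\rho(\bm{c}_{\cdot j})$ at $(\bm{\beta},\hat{\bm{\beta}})=(\bm{\beta}^{\text{NE}},\bm{\beta}^{*})$ before rearranging. Your additional remarks on the trivial lower bound, the need for $\rho(\bm{c})<1$, and the feasibility of $\bm{\beta}^{*}_{i\cdot}$ as a test direction are sensible refinements that the paper leaves implicit.
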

\begin{proof}
	We rearrange system-wide cost $C(\bm{\beta}^{\text{NE}})$ as follows:
	\begin{equation*}
	\begin{split}
		&C(\bm{\beta}^{\text{NE}})\\
		&=\sum_{i\in\mathcal{N}}\sum_{j\in\mathcal{N}}\left[\left(c_{ij}(\bm{\beta}^{\text{NE}})-\tilde{c}_{ij}(\bm{\beta}^{\text{NE}})\right)\beta^{\text{NE}}_{ij}+\tilde{c}_{ij}(\bm{\beta}^{\text{NE}})\beta^{\text{NE}}_{ij}\right]\\
		&\stackrel{(\dag)}{\leq}\sum_{i\in\mathcal{N}}\sum_{j\in\mathcal{N}}\left(c_{ij}(\bm{\beta}^{\text{NE}})-\tilde{c}_{ij}(\bm{\beta}^{\text{NE}})\right)\beta^{\text{NE}}_{ij}+\sum_{i\in\mathcal{N}}\sum_{j\in\mathcal{N}}\tilde{c}_{ij}(\bm{\beta}^{\text{NE}})\hat{\beta}_{ij}\\
		&\stackrel{(\ddag)}{\leq}\sum_{j\in\mathcal{N}}\left[\rho(\bm{c}_{\cdot j})\sum_{i\in\mathcal{N}}\beta^{\text{NE}}_{ij}c_{ij}(\bm{\beta}^{\text{NE}})\right]+C(\hat{\bm{\beta}})\\
		&\leq\rho(\bm{c})C(\bm{\beta}^{\text{NE}})+C(\hat{\bm{\beta}}).
	\end{split}
	\end{equation*}
	where we use variational inequality \eqref{sys_varInequa} to get the inequality ($\dag$) and the definition of $\rho(\cdot)$ to get the inequality ($\ddag$). We finish by taking $\hat{\bm{\beta}}=\bm{\beta}^{*}$.
\end{proof}

We further establish an upper bound on $\rho(\bm{c}_{\cdot j})$, which is given by the following Lemma.
\begin{lemma}\label{lemma:rho}
	Define $\delta(\bm{c}_{\cdot j})=\sup_{i,k\in\mathcal{N}}\frac{c_{ij}^\prime(\bm{\beta}_{i\cdot})}{c_{kj}^\prime(\bm{\beta}_{k\cdot})}$ and $\eta(\bm{c}_{\cdot j}) = \sup_{i,k\in\mathcal{N}}\frac{\omega_jc_{ij}^\prime(\bm{\beta}_{i\cdot})}{c_{kj}(\bm{\beta}_{k\cdot})}, \forall \bm{\beta}_{i\cdot}\in\mathcal{B}_{i\cdot},\forall \bm{\beta}_{k\cdot}\in\mathcal{B}_{k\cdot}$, where $c_{ij}^{\prime}(\bm{\beta}_{i\cdot})=\frac{\partial c_{ij}(\bm{\beta}_{i\cdot})}{\partial\beta_{ij}}$. If each $c_{ij}$ is differentiable, nonnegative, increasing, and convex, then the following inequality holds
	\begin{equation*}
		\rho(\bm{c}_{\cdot j})\leq\dfrac{\eta(\bm{c}_{\cdot j})}{3+\dfrac{4}{\delta(\bm{c}_{\cdot j})(N-1)}}.
	\end{equation*}
\end{lemma}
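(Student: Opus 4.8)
The plan is to bound $\rho(\bm{c}_{\cdot j})$ for an arbitrary fixed destination SBS $j$; since $\rho(\bm{c}) = \sup_{j}\rho(\bm{c}_{\cdot j})$ the claim then follows at once. Write $\omega_j=\sum_k\beta_{kj}$ and $\hat\omega_j=\sum_k\hat\beta_{kj}$ for the load carried by $j$ under the two profiles, and note from the expression below \eqref{eq:C_i} that for fixed $j$ there are only two distinct cost functions (both increasing and convex in the load on $j$): the local-processing cost $c_{jj}$, used by the single source $i=j$, and the offloading cost $c_{ij}$, $i\ne j$, shared by the remaining $N-1$ sources, the latter carrying the extra congestion-delay derivative and hence the larger slope. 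The first step is to substitute $\tilde{c}_{ij}(\bm{\beta}_{i\cdot}) = c_{ij}(\bm{\beta}_{i\cdot}) + \beta_{ij}c_{ij}^{\prime}(\bm{\beta}_{i\cdot})$ from Definition \ref{marginalcost} into the numerator of $\rho(\bm{c}_{\cdot j})$; after cancellation the $i$-th summand is $\hat{\beta}_{ij}\big(c_{ij}(\bm{\beta}_{i\cdot}) - c_{ij}(\hat{\bm{\beta}}_{i\cdot})\big) + \beta_{ij}c_{ij}^{\prime}(\bm{\beta}_{i\cdot})(\hat{\beta}_{ij} - \beta_{ij})$. The deviation profile $\hat{\bm{\beta}}$ appears nonlinearly only through the term $-c_{ij}(\hat{\bm{\beta}}_{i\cdot})$, so I would apply the supporting-line inequality $c_{ij}(\hat{\bm{\beta}}_{i\cdot}) \ge c_{ij}(\bm{\beta}_{i\cdot}) + c_{ij}^{\prime}(\bm{\beta}_{i\cdot})(\hat{\omega}_j - \omega_j)$, valid by convexity and monotonicity in the load; after substitution the numerator is dominated by the degree-two expression
\[
\sum_{i=1}^{N} c_{ij}^{\prime}(\bm{\beta}_{i\cdot})\Big[\beta_{ij}\hat{\beta}_{ij} - \beta_{ij}^{2} - \hat{\beta}_{ij}(\hat{\omega}_j - \omega_j)\Big].
\]

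The second step is to strip the $i$-dependence from the slope weights. Taking the smallest slope as reference (for the explicit costs this is SBS $j$'s own $c_{jj}^{\prime}$, whose cost lacks the congestion term), I would bound each $c_{ij}^{\prime}(\bm{\beta}_{i\cdot})$ that multiplies a negatively-signed quantity below by $c_{jj}^{\prime}$ and each one that multiplies a positively-signed quantity above by $\delta(\bm{c}_{\cdot j})\,c_{jj}^{\prime}$, so that no already-favorable term is weakened. Then I would split the self-flow $\beta_{jj}$ off from the $N-1$ offloaded flows $\{\beta_{ij}\}_{i\ne j}$ (with total $b=\omega_j-\beta_{jj}$), apply the power-mean bound $\sum_{i\ne j}\beta_{ij}^{2} \ge b^{2}/(N-1)$ together with a Cauchy--Schwarz estimate of $\sum_i\beta_{ij}\hat{\beta}_{ij}$, and finally use $\omega_j c_{jj}^{\prime} \le \eta(\bm{c}_{\cdot j})\,c_{kj}(\bm{\beta}_{k\cdot})$ (from the definition of $\eta$) against the denominator $\sum_i\beta_{ij}c_{ij}(\bm{\beta}_{i\cdot}) \ge \omega_j\min_k c_{kj}(\bm{\beta}_{k\cdot})$. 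What remains is to maximize an explicit elementary function of the load ratio $t := \hat{\omega}_j/\omega_j$ (and of $\delta(\bm{c}_{\cdot j})$ and $N$); completing the square there should yield the constant $3$ in the denominator, while the separation of $\beta_{jj}$ (weight $c_{jj}^{\prime}$) from the $N-1$ offloaders (weight up to $\delta(\bm{c}_{\cdot j})c_{jj}^{\prime}$, via the $b^{2}/(N-1)$ bound) is expected to supply the refinement from $3$ to $3 + 4/(\delta(\bm{c}_{\cdot j})(N-1))$. Assembling the pieces gives $\rho(\bm{c}_{\cdot j}) \le \eta(\bm{c}_{\cdot j})/\big(3 + 4/(\delta(\bm{c}_{\cdot j})(N-1))\big)$, and taking $\sup_j$ completes the proof.

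The step I expect to be the main obstacle is the second one: carrying out the slope homogenization through $\delta(\bm{c}_{\cdot j})$ while keeping every negatively-signed term as negative as possible, choosing the Cauchy--Schwarz/power-mean estimates tightly enough, and bookkeeping the special source $i=j$ so that the resulting scalar optimization produces precisely the constants $3$ and $4/(\delta(\bm{c}_{\cdot j})(N-1))$ rather than weaker ones. This is where the M/M/1 convexity of the delay terms and the two-cost-function structure of each resource $j$ have to be exploited carefully; the first and third steps are, by contrast, routine.
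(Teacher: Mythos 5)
The paper does not actually prove this lemma: its entire ``proof'' is a one-line citation to Proposition~5.5 of \cite{durr2014congestion}, so there is no in-paper derivation for your argument to match. Judged on its own terms, your outline is a reasonable reconstruction of the standard smoothness-type argument behind such bounds (expand $\tilde{c}_{ij}=c_{ij}+\beta_{ij}c_{ij}^{\prime}$, linearize the deviation profile by convexity, homogenize the slopes via $\delta$, and optimize a scalar quadratic), and the general shape of the constants --- a $3$ from completing a square plus a $4/(\delta(N-1))$ refinement from splitting off the $N-1$ offloading players --- is indeed how results of this type are obtained for atomic splittable congestion games. However, the proposal as written has a genuine gap: the two constants that \emph{are} the content of the lemma are never derived. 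You say completing the square ``should yield'' the $3$ and that the $\beta_{jj}$-vs-offloaders split ``is expected to supply'' the $4/(\delta(N-1))$ term, and you yourself flag this step as the main obstacle. Since everything else in the argument is routine, an outline that stops exactly at the decisive computation does not establish the inequality.

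There is also a specific technical issue in your first step that would need repair before the scalar optimization could even be set up. For $i\neq j$ the cost $c_{ij}(\bm{\beta}_{i\cdot})=\frac{V}{\mu_j-\omega_j}+\frac{V\tau}{1-\tau\lambda}$ depends on \emph{two} aggregates, the load $\omega_j$ at SBS $j$ and the total LAN traffic $\lambda$, while $c_{jj}$ depends only on $\omega_j$ (plus the constant $\kappa q_j$); moreover $c_{ij}^{\prime}=\partial c_{ij}/\partial\beta_{ij}$ picks up contributions from both aggregates. Your supporting-line inequality $c_{ij}(\hat{\bm{\beta}}_{i\cdot})\geq c_{ij}(\bm{\beta}_{i\cdot})+c_{ij}^{\prime}(\bm{\beta}_{i\cdot})(\hat{\omega}_j-\omega_j)$ treats $c_{ij}$ as a one-dimensional convex function of $\omega_j$ alone, and the subsequent reduction to a function of the single ratio $t=\hat{\omega}_j/\omega_j$ discards the $\lambda$-dependence entirely; changing $\hat{\bm{\beta}}$ can move $\lambda$ and $\omega_j$ in different directions, so this inequality is not valid as stated. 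To make the argument go through you would either have to restrict attention to the resource-$j$ component of the cost (which is what the cited congestion-game proposition implicitly does, at the price of some bookkeeping for the congestion term as a separate shared ``resource''), or carry the pair $(\omega_j,\lambda)$ through the convexity estimates explicitly. Until that is fixed and the scalar maximization is actually carried out to produce $3+4/(\delta(\bm{c}_{\cdot j})(N-1))$, the proof is incomplete.
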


\begin{proof}
	The proof follows Proposition 5.5 in \cite{durr2014congestion}.
\end{proof}

According to Theorem \ref{theorm:poa} and Lemma \ref{lemma:rho}, we see that the bound of PoA is mainly decided by $\delta(\bm{c}_{\cdot j})=\sup_{i,k\in\mathcal{N}}\frac{c_{ij}^\prime(\bm{\beta}_{i\cdot})}{c_{kj}^\prime(\bm{\beta}_{k\cdot})}$, i.e., the maximal ratio of SBSs' pair-specific marginal cost values with respect to SBS $j$. A larger $\delta(\bm{c}_{\cdot j})$ leads to a larger $\rho(\bm{c}_{\cdot j})$, consequently, a larger PoA. The $\delta(\bm{c}_{\cdot j})$ grows with the heterogeneity levels in task arrival rates and computation capacity among SBSs. For example, suppose we have a SBS with an extremely large service rate, which gives a very large $\delta(\bm{c}_{\cdot j})$. Then, all other SBSs tend to offload workload to that SBS, which causes a large congestion delay in the LAN and hence increases the PoA value. Theoretically quantifying PoA value is hard since it heavily depends on the task arrival pattern and system configuration. In the simulation, we measure the PoA value in each time slot.

We have proved that for each time slot $t$ the PoA of the peer offloading game is bounded. Let $\varrho^{\max}$ be the maximum achievable PoA across all time slots. Then, we have:
\begin{align}\label{max_PoA}
&\sum_{i=1}^{N}\left(V\cdot D^t_i(\bm{\beta}^{\text{NE},t})+q_i(t) \cdot E^t_i(\bm{\beta}^{\text{NE},t})\right)
\leq &\varrho^{\max}\sum_{i=1}^{N}\left(V\cdot D^t_i(\bm{\beta}^{*,t})+q_i(t) \cdot E^t_i(\bm{\beta}^{*,t})\right) .
\end{align}
The strategic behavior of SBSs in the peer offloading game cause performance loss in both delay and energy efficiency. The following theorem rigorously shows the performance guarantee of OPEN-Autonomous.  
\begin{theorem}\label{Theorem: performance_OPEN_aut}
Following the peer offloading decision $\bm{\beta}^{\text{NE},t}$ obtained by OPEN-Autonomous, the long-term system delay cost satisfies:
	\begin{align*}
	\lim_{T\rightarrow\infty}\dfrac{1}{T}\sum_{t=0}^{T-1}\sum_{i=1}^{N}\mathbb{E}\left\{D_i^t(\bm{\beta}^{\text{NE},t})\right\}\leq\varrho^{\max}\Psi(\frac{\varrho^{\max}-1}{\varrho^{\max}}\bar{E}^\max)+\dfrac{B}{V},
	\end{align*}
	and the long-term energy deficit of SBSs satisfies:
	\begin{align*}
	\lim_{T\rightarrow\infty} \dfrac{1}{T}\sum_{t=0}^{T-1}\sum_{i=1}^{N} \mathbb{E}\left\{E^t_i(\bm{\beta}^{\text{NE},t})-\bar{E}_i\right\} \leq \dfrac{B+V(\varrho^{\max}D^{\max}_{\text{sys}}-D^{\text{opt}}_{\text{sys}})}{\varrho^{\max} \epsilon-(\varrho^{\max}-1)\bar{E}^{\max}},
	\end{align*}
	where $\varrho^{\max}$ is the largest \emph{PoA} value; $\Psi(\frac{\varrho^{\max}-1}{\varrho^{\max}}\bar{E}^\max)$ is the delay performance achieved by stationary policy $L$ satisfying energy consumption $\mathbb{E}\{E_i(\bm{\beta}^{L,t})-\bar{E}_i\}\leq -\frac{\varrho^{\max}-1}{\varrho^{\max}}\bar{E}^\max$, where $\bar{E}^{\max}=\max_{i\in\mathcal{N}}\bar{E}_i$; $\epsilon>0$ is a constant which represents the long-term energy surplus achieved by some stationary strategy.
\end{theorem}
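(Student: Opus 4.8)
The plan is to run exactly the Lyapunov drift-plus-penalty argument that underlies Theorem~\ref{Theorem_Online} (Appendix~B), changing only the single step that invokes the per-slot optimality of OPEN. Starting from the drift bound $\Delta(\bm{q}(t))\le B+\sum_{i=1}^{N}q_i(t)\mathbb{E}\{E^t_i(\bm{\beta}^{\mathrm{NE},t})-\bar{E}_i\mid\bm{q}(t)\}$ and adding $V\sum_{i=1}^{N}\mathbb{E}\{D^t_i(\bm{\beta}^{\mathrm{NE},t})\mid\bm{q}(t)\}$ to both sides, one obtains $\Delta(\bm{q}(t))+V\sum_i\mathbb{E}\{D^t_i(\bm{\beta}^{\mathrm{NE},t})\mid\bm{q}(t)\}\le B-\sum_iq_i(t)\bar{E}_i+\mathbb{E}\{\sum_i(VD^t_i(\bm{\beta}^{\mathrm{NE},t})+q_i(t)E^t_i(\bm{\beta}^{\mathrm{NE},t}))\mid\bm{q}(t)\}$. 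In the centralized proof the last term is bounded using the fact that OPEN's per-slot decision minimizes $\sum_i(VD^t_i+q_iE^t_i)$; here I instead apply \eqref{max_PoA} to replace it by $\varrho^{\max}\sum_i(VD^t_i(\bm{\beta}^{*,t})+q_i(t)E^t_i(\bm{\beta}^{*,t}))$ and then, since $\bm{\beta}^{*,t}$ is itself the per-slot minimizer, by $\varrho^{\max}\sum_i(VD^t_i(\bm{\beta}^{L,t})+q_i(t)E^t_i(\bm{\beta}^{L,t}))$ for \emph{any} stationary policy $L$ that is feasible for \textbf{P2} (i.e.\ meets \eqref{emax}--\eqref{feasible}). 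Taking $\mathbb{E}\{\cdot\mid\bm{q}(t)\}$ yields the master inequality $\Delta(\bm{q}(t))+V\sum_i\mathbb{E}\{D^t_i(\bm{\beta}^{\mathrm{NE},t})\mid\bm{q}(t)\}\le B-\sum_iq_i(t)\bar{E}_i+\varrho^{\max}V\sum_i\mathbb{E}\{D^t_i(\bm{\beta}^{L,t})\}+\varrho^{\max}\sum_iq_i(t)\mathbb{E}\{E^t_i(\bm{\beta}^{L,t})\}$.

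For the delay bound I would take $L$ to be the stationary policy named in the statement, with per-SBS surplus $\mathbb{E}\{E^t_i(\bm{\beta}^{L,t})-\bar{E}_i\}\le-\tfrac{\varrho^{\max}-1}{\varrho^{\max}}\bar{E}^{\max}$ and system delay $\Psi(\tfrac{\varrho^{\max}-1}{\varrho^{\max}}\bar{E}^{\max})$. The key algebraic step is to regroup the queue-weighted energy terms as $\varrho^{\max}q_i(t)\mathbb{E}\{E^t_i(\bm{\beta}^{L,t})\}-q_i(t)\bar{E}_i=\varrho^{\max}q_i(t)\bigl(\mathbb{E}\{E^t_i(\bm{\beta}^{L,t})\}-\bar{E}_i\bigr)+(\varrho^{\max}-1)q_i(t)\bar{E}_i\le-(\varrho^{\max}-1)\bar{E}^{\max}q_i(t)+(\varrho^{\max}-1)\bar{E}^{\max}q_i(t)\le 0$, using $q_i(t)\ge0$, $\varrho^{\max}\ge1$, and $\bar{E}_i\le\bar{E}^{\max}$: the prescribed surplus is exactly what cancels the $\varrho^{\max}$-inflation of the queue-energy term, so all $\bm{q}(t)$-dependence disappears. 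What remains is $\Delta(\bm{q}(t))+V\sum_i\mathbb{E}\{D^t_i(\bm{\beta}^{\mathrm{NE},t})\mid\bm{q}(t)\}\le B+\varrho^{\max}V\Psi(\tfrac{\varrho^{\max}-1}{\varrho^{\max}}\bar{E}^{\max})$; taking total expectations, summing over $t=0,\dots,T-1$, using $L(\bm{q}(0))=0$ and $L(\bm{q}(T))\ge0$, dividing by $VT$ and letting $T\to\infty$ gives the first inequality.

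For the energy-deficit bound I would instead instantiate $L$ as a stationary policy achieving the strict surplus $\mathbb{E}\{E^t_i(\bm{\beta}^{L,t})-\bar{E}_i\}\le-\epsilon$ (its existence is the standing feasibility assumption on \textbf{P1}). The same regrouping now gives $\varrho^{\max}q_i(t)\mathbb{E}\{E^t_i(\bm{\beta}^{L,t})\}-q_i(t)\bar{E}_i\le-\bigl(\varrho^{\max}\epsilon-(\varrho^{\max}-1)\bar{E}^{\max}\bigr)q_i(t)$, which is a genuinely negative drift term precisely when $\varrho^{\max}\epsilon>(\varrho^{\max}-1)\bar{E}^{\max}$ --- i.e.\ exactly the positivity of the denominator in the claimed bound. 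Bounding the stationary policy's per-slot system delay by $D^{\max}_{\mathrm{sys}}$ leaves $\Delta(\bm{q}(t))+V\sum_i\mathbb{E}\{D^t_i(\bm{\beta}^{\mathrm{NE},t})\mid\bm{q}(t)\}\le B+\varrho^{\max}VD^{\max}_{\mathrm{sys}}-\bigl(\varrho^{\max}\epsilon-(\varrho^{\max}-1)\bar{E}^{\max}\bigr)\sum_iq_i(t)$. Telescoping over $t$, dropping $L(\bm{q}(T))\ge0$, and --- as in Appendix~B --- using the queue lower bound $q_i(t+1)\ge q_i(t)+E^t_i(\bm{\beta}^{\mathrm{NE},t})-\bar{E}_i$ (hence $\sum_{t=0}^{T-1}(E^t_i(\bm{\beta}^{\mathrm{NE},t})-\bar{E}_i)\le q_i(T)$) to convert the accumulated negative-drift term into the time-average energy deficit, while lower-bounding the retained OPEN-Autonomous delay on the left by the offline optimum $D^{\mathrm{opt}}_{\mathrm{sys}}$, and finally dividing by $\varrho^{\max}\epsilon-(\varrho^{\max}-1)\bar{E}^{\max}$, yields the second inequality.

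The main obstacle is twofold. First, one must verify that the two comparison stationary policies really lie in the per-slot feasible set (so that \eqref{max_PoA} and the minimality of $\bm{\beta}^{*,t}$ can legitimately be invoked against them): the $\epsilon$-surplus policy is the standing assumption, but the delay-$\Psi$ policy carrying the larger surplus $\tfrac{\varrho^{\max}-1}{\varrho^{\max}}\bar{E}^{\max}$ requires a Neely-type lemma that a feasible stochastic program admits a stationary randomized policy arbitrarily close to delay-optimal while retaining any prescribed amount of long-term energy slack --- this is what implicitly defines the function $\Psi(\cdot)$, and it needs the \textbf{P1} optimum to have enough residual slack to trade. Second, and more delicate, is the bookkeeping forced by \eqref{max_PoA}: the PoA factor multiplies the \emph{entire} per-slot objective, so it simultaneously inflates the queue-weighted energy term (cured by the extra surplus) and the penalty term; keeping straight which quantities end up scaled by $\varrho^{\max}$ --- the worst-case per-slot delay $D^{\max}_{\mathrm{sys}}$ and the effective surplus $\epsilon\mapsto\varrho^{\max}\epsilon-(\varrho^{\max}-1)\bar{E}^{\max}$ --- and which do not --- notably $D^{\mathrm{opt}}_{\mathrm{sys}}$, which enters only through the already-proven delay bound --- is the step most prone to sign- and factor-errors.
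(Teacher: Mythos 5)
Your proposal is correct and follows essentially the same route as the paper's Appendix D: the identical drift-plus-penalty inequality, the same substitution of the PoA bound \eqref{max_PoA} followed by comparison against a stationary policy, the same regrouping $\varrho^{\max}E^{L}_i-\bar{E}_i=\varrho^{\max}(E^{L}_i-\bar{E}_i)+(\varrho^{\max}-1)\bar{E}_i$ yielding the effective surplus $\varrho^{\max}\epsilon-(\varrho^{\max}-1)\bar{E}^{\max}$, and the same choice $\epsilon=\tfrac{\varrho^{\max}-1}{\varrho^{\max}}\bar{E}^{\max}$ to cancel the queue terms for the delay bound. The only (cosmetic) difference is that you instantiate two separate comparison policies where the paper keeps a single master inequality in $\epsilon$ and specializes it.
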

\begin{proof}
	See Appendix D in Supplementary File.
\end{proof}
Theorem \ref{Theorem: performance_OPEN_aut} still shows a $[O(V),O(1/V)]$ tradeoff between delay and energy deficit. However, the delay performance achieved is no longer comparable with the optimal system delay $D^{\text{opt}}_{\text{sys}}$. Instead, the bound of delay performance is defined on the stationary policies with a reduced long-term energy constrain $\lim_{T\rightarrow\infty}\frac{1}{T}\sum_{t=0}^{T}\mathbb{E}\{E_i(\bm{\beta}^{t})\}\leq \bar{E}^{\max}/\varrho^{\max}$. Notice that if $\varrho^{\max}=1$, then Theorem \ref{Theorem: performance_OPEN_aut} is identical to Theorem \ref{Theorem_Online}. 

\section{Simulation}\label{sec_simulation}
Systematic simulations are carried out to evaluate the performance of proposed algorithm under various system settings. We assume that the edge network is deployed in a commercial complex where the business tenants deploy their own edge facilities (SBS and edge server) to serve their employees. These SBSs are connected to a LAN and hence can cooperate with each other via peer-offloading. The scale of the considered commercial complex should be large ($>$100,000 square feet \cite{katipamula2012small}), such that multiple SBSs are likely to be deployed by different business tenants and the collaboration among SBSs can be exploited. We simulated a 100m$\times$100m commercial complex (107,639 square feet) served by a set of SBSs whose locations are decided by homogeneous Poisson Point Process (PPP). The main advantage of PPP is that it captures the fact that SBSs are randomly deployed by individual owners. The density of PPP process is set as $10^{-3}$. With this PPP density and commercial complex area, the expected number of SBSs generated by PPP is 10 which is also the average number of business tenants in a commercial complex within an area around 100,000 square feet \cite{IREM}. Here, we assume that on average each business tenant will deploy an SBS. In each time slot, the UEs are randomly scattered in the network. Since the average working space of a worker is 250 square feet \cite{Gridium}, the expected number of users in the commercial complex is 400. Considering the variation in the occupancy rate across the time, we assume that the number of UEs in each time slot is randomly drawn from $[200, 600]$. Each UE is randomly assigned to one of its nearby SBSs. For an arbitrary UE, its task generation follows a Poisson process with arrival rate $\pi^t_m\in[0,4]$task/sec. The expected number of CPU cycles for each task is $h=40$M. The CPU frequency of edge server at SBSs is $f_n=3$GHz. These parameters are picked to capture the fact that an edge server can be overloaded from time to time. The expected input data size of each task is $s=0.2$Mb. Therefore, for a typical 100Mb fast Ethernet LAN, the expected transmission delay for one task is $\tau= 200$ms. The channel gain $H_m^t$ for calculating the wireless transmission is modeled by the indoor path-loss: $L[\text{dB}]=20 \log(f^{\text{TX}} [\text{MHz}])+N_L  \log(d[\text{m}])-28$, where the values of the parameters are listed in Table \ref{para_set}. The wireless channel bandwidth is $W=20$MHz, the noise power is $\delta^2= -174$dBm/Hz, the transmission power of UEs is $P_m^u= 10$dBm. These parameters are picked by following a typical wireless communication setting. Consider that the energy consumption of one CPU cycles is 8.2nJ, the expected energy consumption for each time at SBS n is $\kappa_n=9\times10^{-5}$Wh and the long-term energy constraint is set as 22W⋅h per hour.
\begin{table}
		\renewcommand\arraystretch{1.2}	
		\centering		
		\caption{Simulation setup: system parameters}		
		\begin{tabular}{l|c}			
			\hline			
			Parameters & Value\\			
			\hline
			Expected number of SBSs, $N$   & 10 (Homogeneous PPP)\\
			Expected number of UEs, $M$  & 400\\			
			Task arrival rate from UE $m$, $\pi^t_m$ & $[0,4]$task/sec\\			
			Expected num. of CPU cycles per-task, $h$ & 40M \\			
			CPU frequency of edge server at SBS $n$, $f_n$ & 3GHz \\			
			Expected input data size per-task, $s$ & 0.2Mb\\			
			Transmission delay for one task in LAN, $\tau$ & 200ms\\			
			Wireless transmission frequency, $f^{\text{TX}}$ & 900MHz\\			
			Distance power loss coefficient, $N_L$  & 20\\			
			Wireless channel bandwidth, $W$ & 20MHz \\			
			Noise power, $\delta^2$ & -174dBm/Hz\\			
			Transmission power of UEs, $P^u_m$ & 10dBm\\			
			Expected energy consumption per-task, $\kappa_n$ & $9\times10^{-5}$W$\cdot$h \\			
			Long-term energy constraint, $\bar{E}_n$ & 22W$\cdot$h (per hour)\\
			\hline			
		\end{tabular}		
		\label{para_set}
	\vspace{-0.15 in}
\end{table}
The performances of OPEN-Centralized (OPEN-C) and OPEN-Autonomous (OPEN-A) are compared with three benchmarks:
\begin{itemize}
	\item \textbf{No Peer offloading (NoP)}: peer offloading among SBSs is not enabled in the network. Each SBS processes all the tasks received from the end users. Moreover, the long-term constraint is not enforced since some SBSs have to exceed energy constraint to satisfy all the tasks due to the heterogeneity in spatial task arrival.
	\item \textbf{Delay-Optimal (D-Optimal)}: we apply the method in \cite{penmatsa2011game} where SBS peer offloading is considered as a static load balancing problem aiming to achieve the lowest system delay regardless of the long-term energy constraints.
	\item \textbf{Single-Slot Constraint (SSC)}: Instead of following a long-term energy constraint, the network operator poses a hard energy constraint in each time slot, i.e. $E^t_i(\bm{\beta}^t)\leq\bar{E}_i$, such that the long-term energy constraint is satisfied.
\end{itemize}

\subsection{Run-time Performance Evaluation}
\begin{figure}[htb]
	\centering	
	\subfigure[Time average system delay.]{\label{Tave_delay}
		\includegraphics[width=0.45\linewidth]{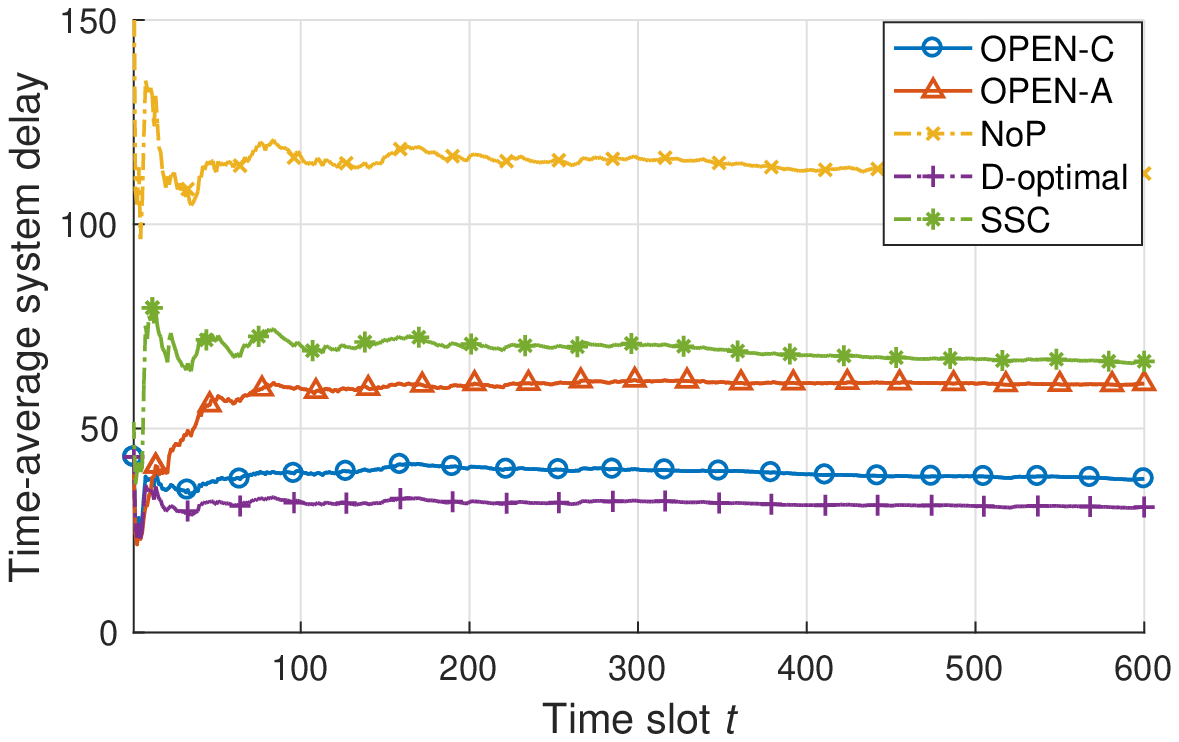}}
	\subfigure[Time average energy deficit.]{\label{Tave_deficit}
		\includegraphics[width=0.45\linewidth]{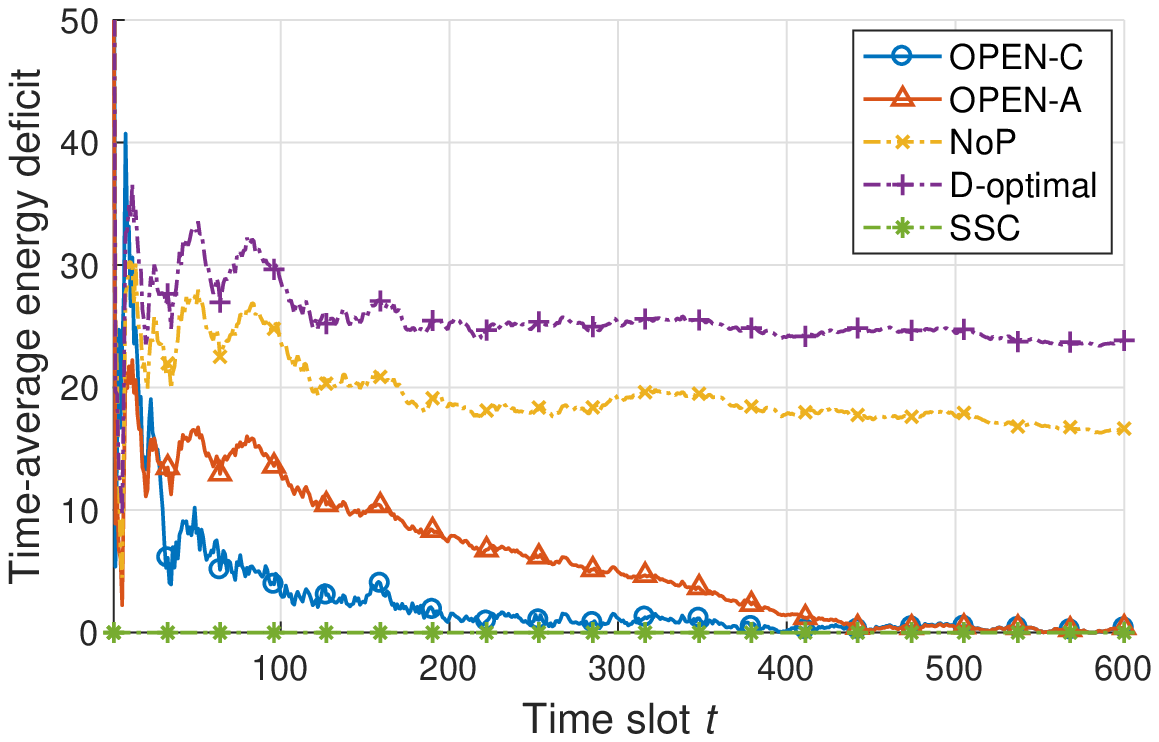}}
	\caption{Long-term performance analysis.}
	\label{fig:runtime_performance}
\end{figure}
Figure \ref{fig:runtime_performance} shows the long-term system performance obtained by running OPEN and we mainly focus on two metrics: the time-average system delay cost in Figure \ref{Tave_delay} and the time-average energy deficit in Figure \ref{Tave_deficit}. It can be observed that without peer offloading the edge system bears a high delay cost and a large energy deficit, since SBSs can be easily overloaded due to spatially and temporally heterogeneous task arrival pattern. By contrast, other three schemes with peer offloading enabled (D-Optimal, SSC, and OPEN) achieve much lower system delay cost. Specifically, D-Optimal achieves the lowest delay cost since it is designed to minimize the delay cost by fully utilizing the computation resource regardless of the energy constraints. Therefore, D-Optimal incurs a large amount of energy deficit as shown in Figure \ref{Tave_deficit}. The main purpose of OPEN is to follow the long-term energy constraint of each SBS while minimizing the system delay. As can be observed in Figure \ref{Tave_deficit}, the time-average energy deficits of both OPEN-C and OPEN-A coverage to zero, which means that the long-term energy constraints are satisfied by running OPEN. Moreover, OPEN-C achieves a close-to-optimal delay cost and OPEN-A incurs a slightly higher delay cost due to the strategic behaviors of selfish SBSs. The SSC scheme poses an energy constraint in each time slot in order to satisfy the long-term energy constraints. As a result, the energy deficit of SSC is zero across all the time slots. However, SSC makes the energy scheduling less flexible and therefore does not handle well the heterogeneity of temporal task arrival pattern and results in a large system delay cost.   

\subsection{System Dynamics}
Figure \ref{fig:sys_dyna_delay} and Figure \ref{fig:sys_dyna_energy} show the system delay cost and the energy consumption from the 500th to 550th time slot, respectively. We see that the system delay cost is mainly decided by the total task arrival rate in the network which varies across the time slots. Usually, a larger task arrival rate will result in a higher system delay cost. Figure \ref{fig:sys_dyna_energy} depicts the energy consumption of one particular SBS and the corresponding energy deficit queue in each time slot to exemplify how the energy deficit queue works to guide the energy usage. For example, from the 530th to 535th time slot, the SBS uses a large amount of energy and enlarges the energy deficit. Therefore, in the following 5 time slots, OPEN reduces the energy consumption to cut the energy deficit. In this way, the long-term energy constraints of SBSs can be satisfied.    
\begin{figure*}[htb]
	\begin{minipage}[t]{0.5\linewidth}
		\centering
		\includegraphics[width=0.95\textwidth]{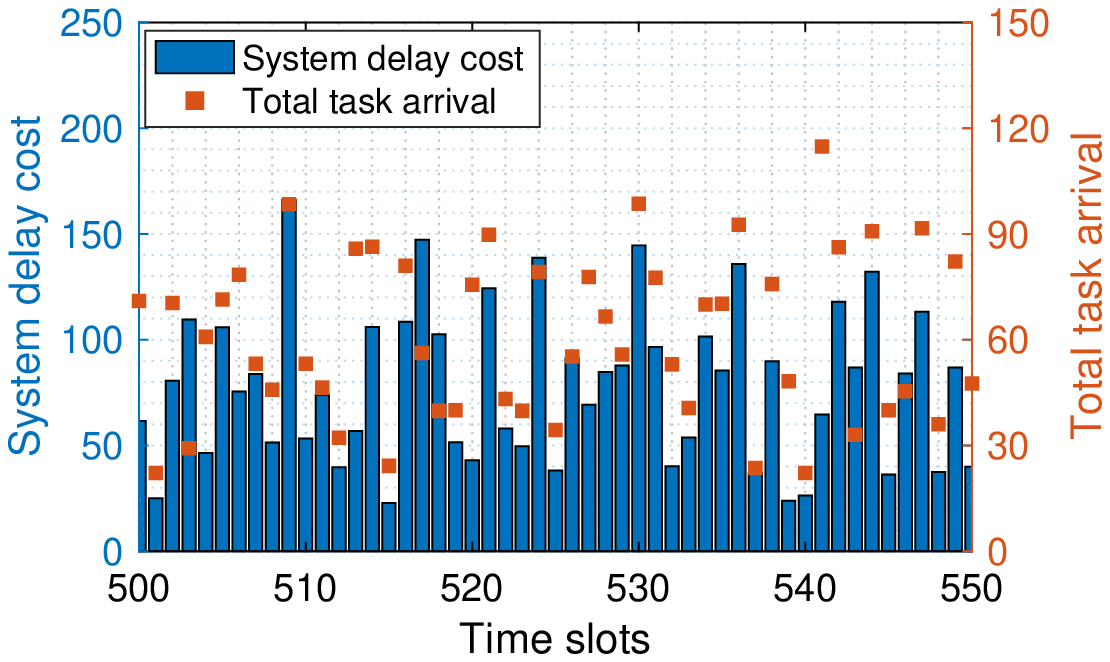}
		\vspace{-0.1 in}
		\caption{System dynamics (delay cost).}\label{fig:sys_dyna_delay}
	\end{minipage}%
	\begin{minipage}[t]{0.5\linewidth}
		\centering
		\includegraphics[width=0.95\textwidth]{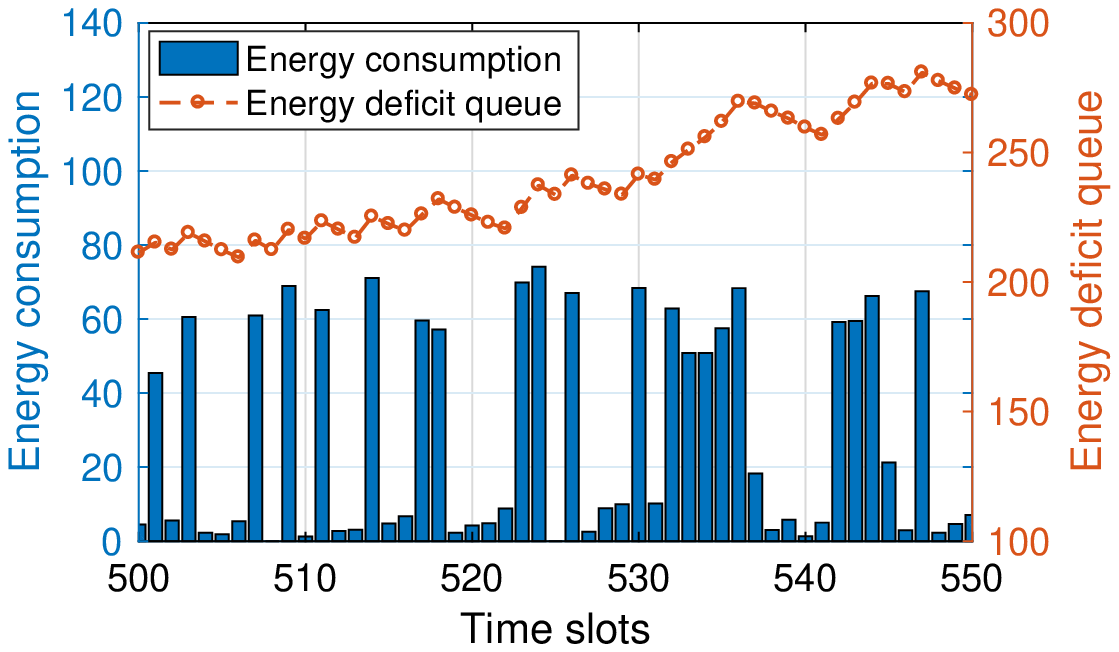}
		\vspace{-0.1 in}
		\caption{System dynamics (energy consumption).}\label{fig:sys_dyna_energy}
	\end{minipage}%
\end{figure*}

\subsection{Impact of Control Parameter $V$}
Figure \ref{fig:detradeoff} shows the impact of control parameter $V$ on the performance of OPEN. The result presents a $[O(1/V), O(V)]$ trade-off between the long-term system delay cost and the long-term energy deficit, which is consistent with our theoretical analysis. With a larger $V$, OPEN emphasizes more on the system delay cost and is less concerned with the energy deficit. As $V$ grows to the infinity, OPEN is able to achieve the optimal delay cost. It is hard to define an optimal value for $V$ since a lower system delay cost is achieved at the cost of larger energy deficit. However, it still offers a guideline for picking an appropriate $V$. In this particular simulation, the network operator is recommended to choose, for example, $V = 50$ for two reasons: ($i$) OPEN has already achieved close-to-optimal delay and little improvement is available by increasing $V$; ($ii$) the energy deficit is much smaller compared to the energy deficit achieving the optimal delay. 
\begin{figure}[htb]
	\centering
	\includegraphics[width=0.5\linewidth]{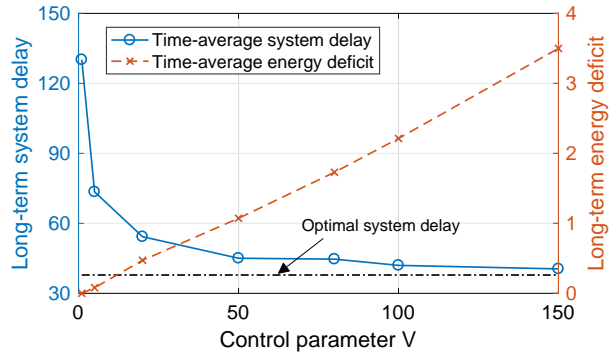}
	\vspace{-0.1 in}
	\caption{Impact of control parameter $V$.}
	\label{fig:detradeoff}
	\vspace{-0.2 in}
\end{figure}

\subsection{Composition of System  Delay}
Figure \ref{fig:delay_composition_Cop} and Figure \ref{fig:delay_composition_Aut} depict the composition of system delay cost for OPEN-C and OPEN-A, receptively. For OPEN-C, the computation delay cost takes up a large proportion of the system delay cost and the congestion delay cost is relatively small. By contrast, the congestion delay cost becomes the main part of the system delay cost in OPEN-A. The non-cooperative behavior of SBSs results in a large volume of traffic exchange in the LAN as SBSs tend to offload workload to other SBSs in order to save their own energy consumption. Note that the communication delay cost is independent of peer offloading and hence it is the same for OPEN-C and OPEN-A. 

\begin{figure*}[htb]
	\begin{minipage}[t]{0.5\linewidth}
		\centering
		\includegraphics[width=0.95\textwidth]{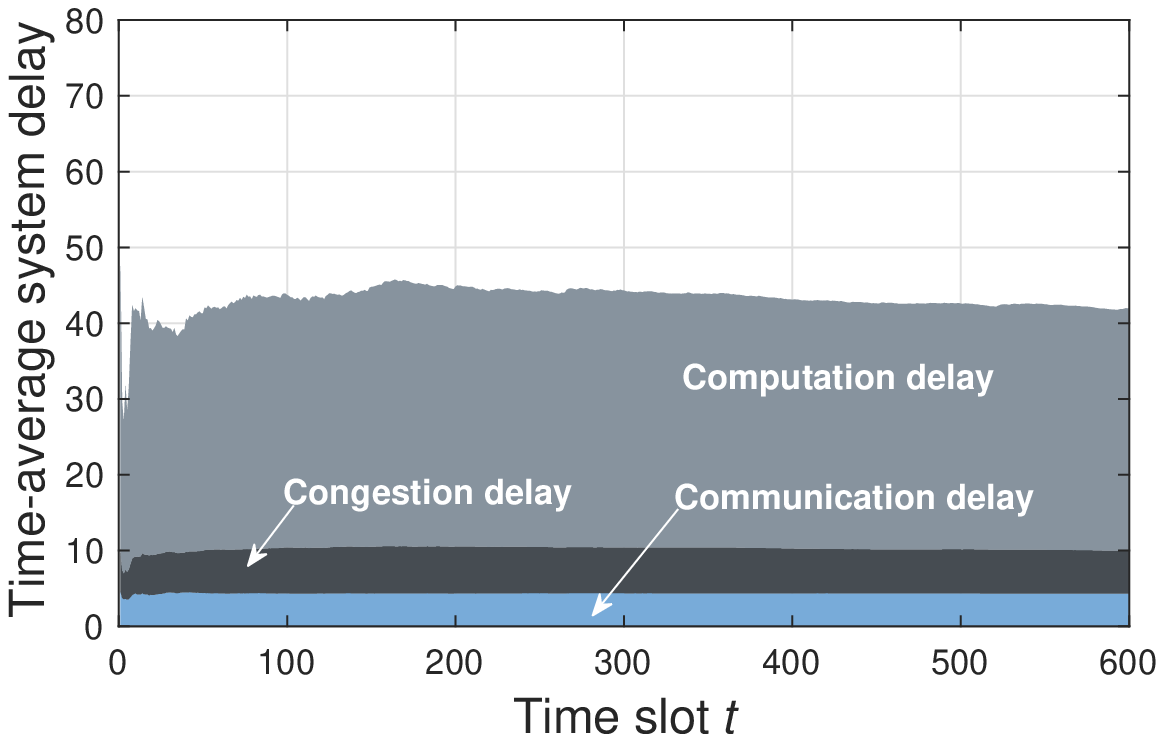}
		\vspace{-0.1 in}
		\caption{Delay cost composition of OPEN-C.}\label{fig:delay_composition_Cop}
	\end{minipage}%
	\begin{minipage}[t]{0.5\linewidth}
		\centering
		\includegraphics[width=0.95\textwidth]{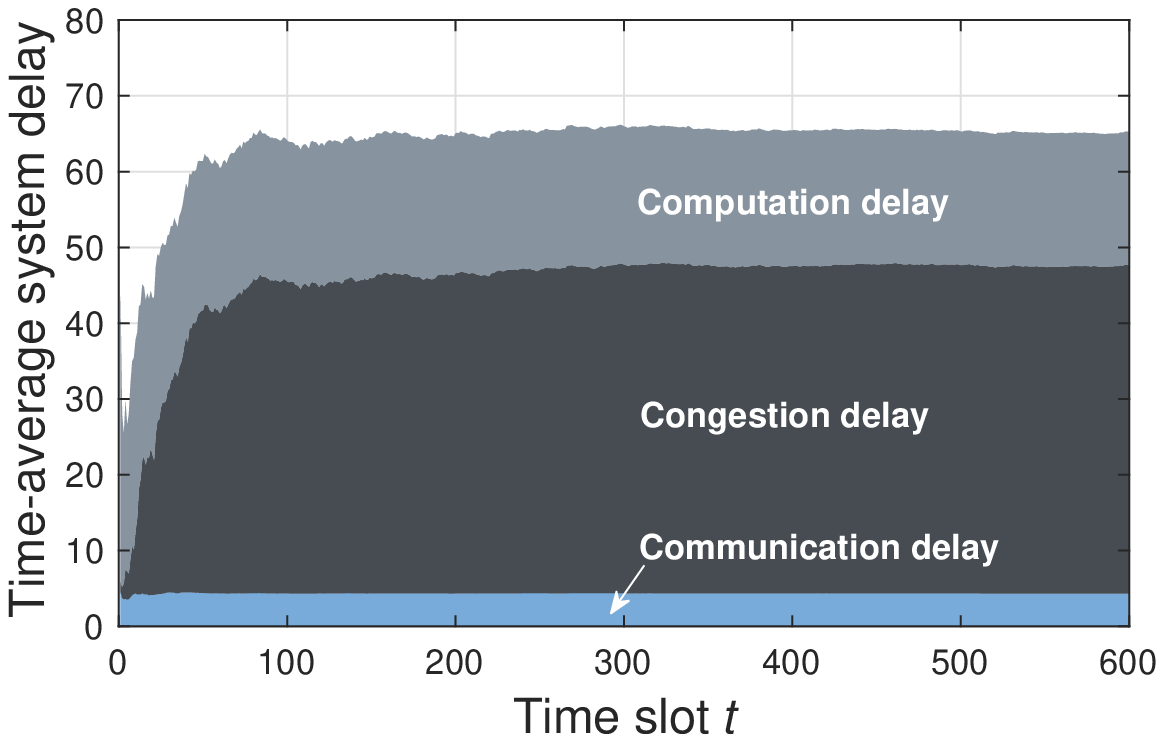}
		\vspace{-0.1 in}
		\caption{Delay cost composition of OPEN-A.}\label{fig:delay_composition_Aut}
	\end{minipage}%
\end{figure*}

\subsection{Price of Anarchy}
Since it is difficult to quantify theoretically the upper bound of PoA, we measure PoA values in the simulation. Figure \ref{fig:poa} depicts the objective value (\textbf{P2}) achieved by OPEN-C and OPEN-A from the 50th time slot to the 100th time slot from a simulation run of a total of 600 time slots. It is clearly shown that in each time slot OPEN-C achieves a strictly smaller value than OPEN-A, which means that OPEN-C outperforms OPEN-A in every time slot. For the entire 600 time slots, the mean PoA value is 1.54 and the maximum PoA value is 2.42.
\begin{figure}[htb]
	\centering
	\includegraphics[width=0.5\textwidth]{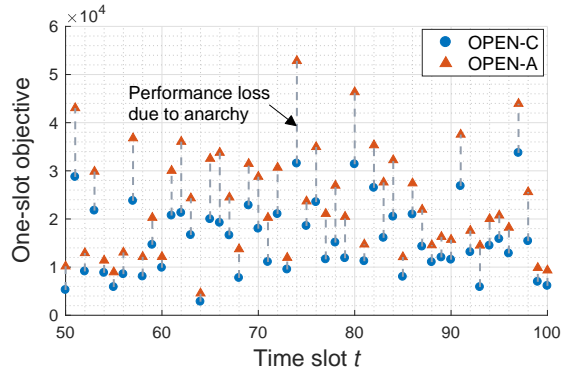}
	\vspace{-0.1 in}
	\caption{Price of anarchy.}\label{fig:poa}
\end{figure}

\subsection{System Heterogeneity}
Figure \ref{fig:spatialhetero} shows the impact of heterogeneity on the performance of OPEN. In particular, the heterogeneity of spatial task arrival pattern is considered: we regularly split the entire network into $4 \times 4$ grids and define an expected task arrival rate $\pi^{\text{grid}}_i \sim \mathcal{N}(10,\sigma^2_s), i=1,2,\dots,16$ for each grid which is drawn from a normal distribution with $\sigma_s$ being the standard deviation. The task arrival rate of UE $m$ is set as $\pi^{\text{grid}}_i$ if UE $m$ belongs to grid $i$. The level of heterogeneity is varied by changing the standard deviation $\sigma_s$, which is normalized with respect to a maximum value $\sigma_{s,\max}$ as $\sigma_s/\sigma_{s,\max}$. The result in Figure \ref{fig:spatialhetero} shows that for both OPEN-C and OPEN-A, a better system performance in terms of reduced delay cost is achieved with a higher heterogeneity level. This is because (some) SBSs are more likely to be overloaded with a higher heterogeneity level and therefore OPEN can better help to reduce the system delay by balancing the workload. 

\begin{figure}[htb]
	\centering
	\includegraphics[width=0.5\linewidth]{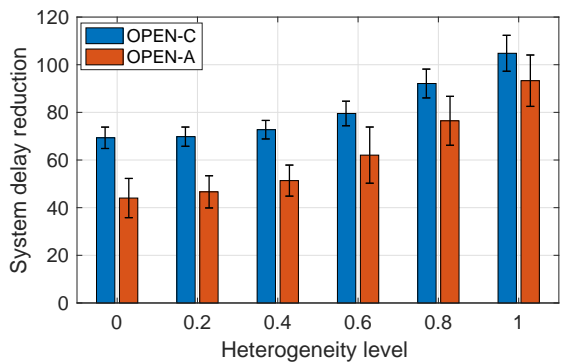}
	\caption{Impact of system heterogeneity.}
	\label{fig:spatialhetero}
\end{figure}

\subsection{Time Complexity}
Since it also takes time for OPEN to derive peer offloading decisions, especially for OPEN-A, we measured the runtime of OPEN on a typical PC to see its practical overhead. The simulation is run on a DELL PRECISION T3600 workstation with Intel(R) Xeon(R) CPU 2.8GHz and the results are presented in TABLE \ref{tab:runtime}. OPEN-C incurs an extremely low overhead, taking only 0.83ms on average (with standard deviation 0.57ms) to derive solution. By contrast, OPEN-A needs much longer time, namely 28.1ms, to obtain the Nash equilibrium since the SBSs have to take turns to run the best-response algorithm. However, the runtime for both OPEN-C and OPEN-A is negligible compared to the 1-minute peer offloading decision cycle (i.e. duration of one time slot). In addition, the information exchange is necessary to run OPEN: for OPEN-C, the centralized controller collects the information from each SBS at the beginning of each decision cycle; for OPEN-A, SBSs need to exchange the peer offloading decisions immediately after executing the best-response algorithm. TABLE \ref{tab:runtime} also shows the estimation of delay incurred by information exchange. Assume the bandwidth of the LAN is 20Mbps and the size of a message packet is 800 Bytes, the delays of information exchange incurred by OPEN-C and OPEN-A are approximately 0.2ms and 8ms, which are also negligible compared to the peer offloading decision cycle.
\begin{table}[htb]
	\centering
	\caption{Analysis of Algorithm Runtime}\label{tab:runtime}
	\begin{tabular}{l|cc} 
		\hline
		\textbf{Algorithm} & \textbf{OPEN-C} &  \textbf{OPEN-A} \\
		\hline
		\textbf{Algorithm runtime}      & 0.83 $\pm$ 0.57 ms          &  28.1 $\pm$ 4.84 ms        \\
		\textbf{Information exchange}  & $\approx$ 0.2 ms  & $\approx$ 8 ms \\
		\textbf{Computation delay (per task)} & 1.25 $\pm$ 0.38 sec   & 2.07 $\pm$ 0.65 sec\\
		\textbf{Peer offloading decision cycle}    & 1 min & 1 min  \\
		\hline
	\end{tabular}\\
	\vspace{-0.1 in}
\end{table}

\subsection{Impact of task arrival realization}
Notice that the task arrival pattern in the real-world system may not follow the assumed Poisson process. To analyze the practicality of OPEN, we implement it with different task arrival realizations. Fig. \ref{fig:task_arrival} shows the performances achieved by OPEN and NoP with two task arrival realizations: bursty arrival and non-i.i.d. arrival. We can see from Fig.\ref{delay_burst} that OPEN and NoP both have a higher delay cost in the bursty arrival case compared to that in the Poisson arrival case. This is because the tasks are more likely to queue up at edge servers when bursts occur. However, we see that the proposed algorithm still provides a 55.0\% delay reduction with bursty arrival which is similar to that of Poisson arrival. In the non-i.i.d. case, we use a Markov process to model a task arrival pattern where the intervals of task arrivals are determined by certain transition probabilities. We see from Fig. \ref{delay_noniid} that the delay reduction achieved by OPEN in the non-i.i.d case slightly decreases compared to that in the Poisson arrival case. However, applying OPEN still offers an obvious delay reduction, 37.5\%, for the edge system. These two examples indicate that the proposed algorithm can offer considerable performance improvement for the edge system even if the real task arrival does not closely follow the Poisson process. 

\begin{figure}[htb]
	\centering	
	\subfigure[Bursty arrival]{\label{delay_burst}
		\includegraphics[width=0.45\linewidth]{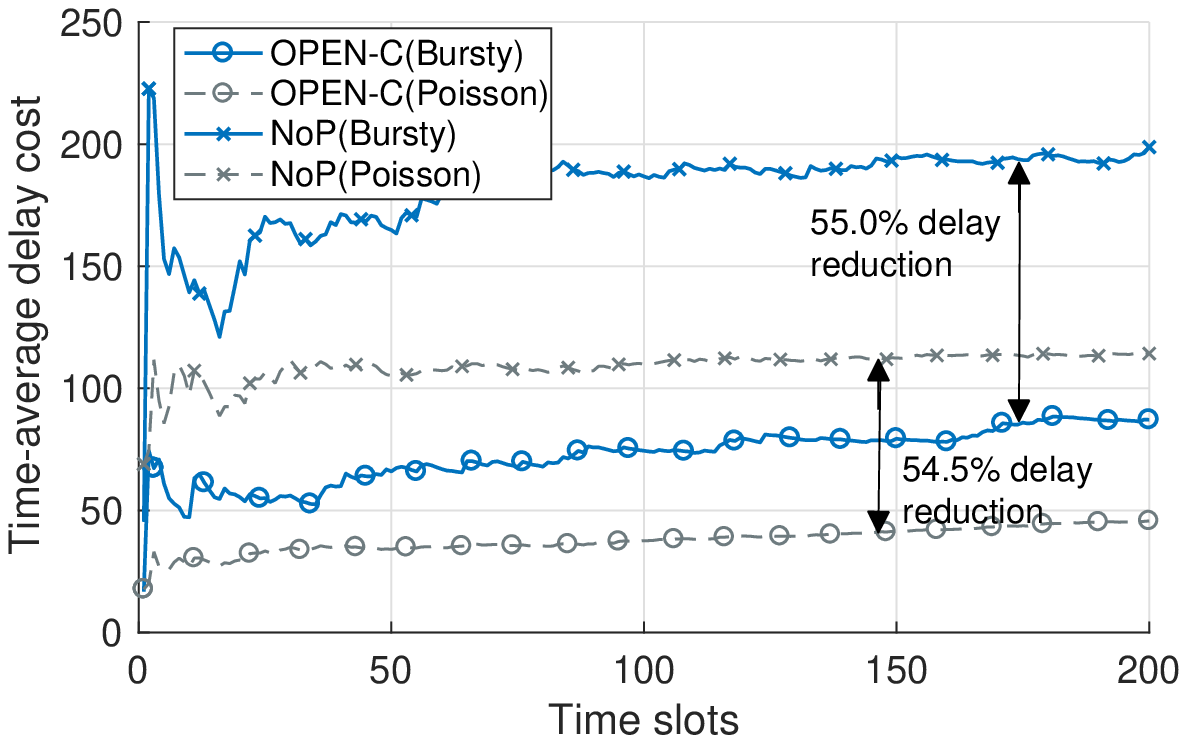}}
	\subfigure[Non-i.i.d. arrival]{\label{delay_noniid}
		\includegraphics[width=0.45\linewidth]{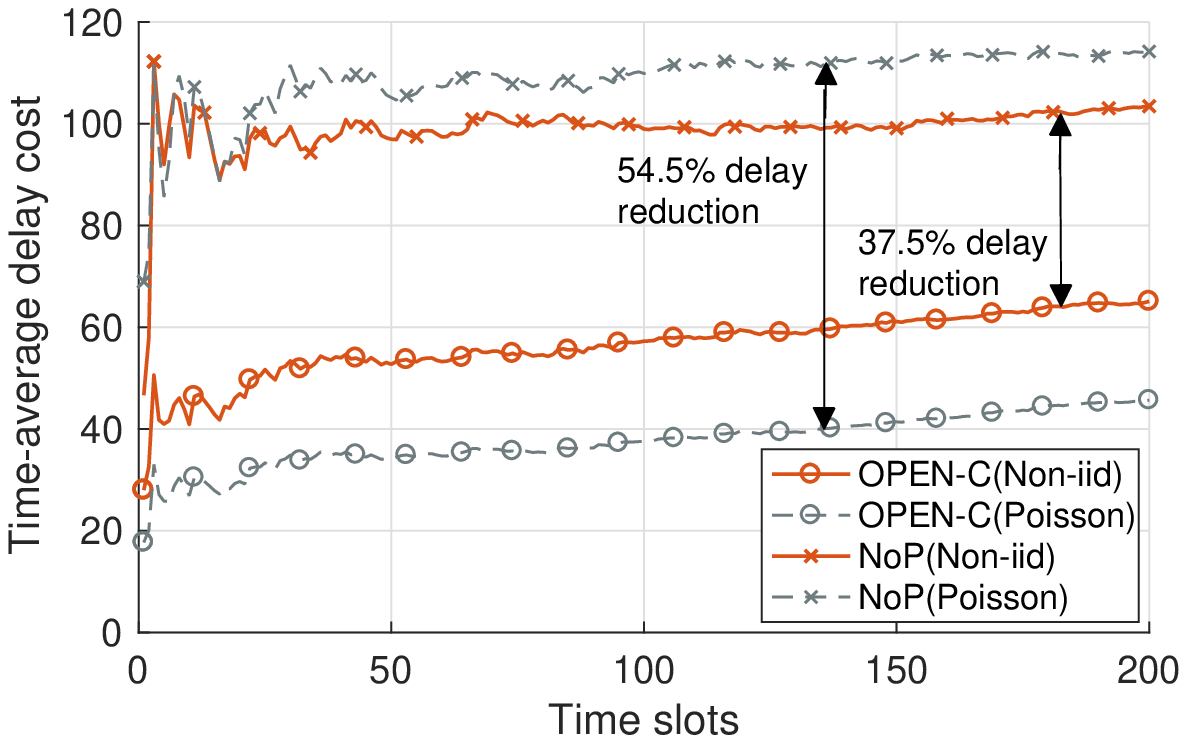}}
	\caption{Impact of task arrival realization.}
	\vspace{-0.2 in}
	\label{fig:task_arrival}
\end{figure}

\section{Conclusion}\label{sec_conclusion}
In this paper, we investigated peer offloading schemes in MEC-enabled small cell networks where heterogeneous task arrival pattern in both spatial and temporal domains is considered. We developed OPEN, a novel online peer offloading framework to optimize the edge computing performance under limited energy resources committed by individual SBSs without requiring information on future system dynamics. The proposed framework allows both centralized and autonomous decision making, and provide provable performance guarantee. We also showed that OPEN incurs acceptable overhead in practice. However, there are still issues that need to be addressed. In the current system model, we considered a simple structure of LAN to model the congestion delay during peer offloading. How to extend our analysis to more sophisticated and practical congestion scenarios needs further investigation. Moreover, the performance guarantee of OPEN rests on the assumption that the observations of task arrival rates in the current slot are precise, which may not hold for all network systems. Therefore, future efforts are needed to take into consideration of imprecise estimation of task arrival.

\bibliographystyle{IEEEtran}
\bibliography{ref}

\newpage
\appendices
\section{Proof of Theorem 1} \label{prof_theroem_centrialized}
Let $u_i$ and $v_i$ denote the inbound and outbound traffic of SBS $i$, respectively. From the balance of total traffic in the network, we have $\sum_{i=1}^{N}u_i=\sum_{i=1}^{N}v_i$. The post-offloading workload $\omega_i$ at SBS $i$ can then be written as $\omega_i=\phi_i+u_i-v_i$ and the network traffic $\lambda$ can be written as $\lambda=\sum_{i=1}^{N}v_i$. Hence the problem becomes:
\begin{subequations}
\begin{align}
&\min F(\bm{u},\bm{v})=\sum_{i=1}^{N}V[\dfrac{\phi_i+u_i-v_i}{\mu_i-(\phi_i+u_i-v_i)} +\kappa q_i(\phi_i+u_i-v_i)]+\dfrac{\tau\sum_{i=1}^{N}v_i}{1-\tau\sum_{i=1}^{N}v_i} \label{soical_overhead}\\
&\qquad\qquad\text{s.t}.\quad \omega_i=\phi_i+u_i-v_i\geq 0,~\forall i\in\mathcal{N}\\
&\qquad\qquad\qquad -\sum_{i=1}^{N}u_i+\sum_{i=1}^{N}v_i=0\\
&\qquad\qquad\qquad u_i\geq 0,~\forall i\in\mathcal{N}\\
&\qquad\qquad\qquad v_i\geq 0, ~\forall i\in\mathcal{N}
\end{align}
\end{subequations}

The objective function in (\ref{soical_overhead}) is convex and the constraints are all linear and define a convex ployhedron. This imply that the first-order Kuhn-Tucker conditions are necessary and sufficient for optimality. Let $\alpha$, $\delta_i\leq0$, $\eta_i\leq0$, $\psi_i\leq0$ denote the Lagrange multipliers. The Lagrangian is
\begin{equation*}
\begin{split}
F(\bm{u},\bm{v})+&\alpha(-\sum_{i=1}^{N}u_i+\sum\limits_{i=1}^{N}v_i)+\sum_{i=1}^{N}\delta_i(\phi_i+u_i-v_i)+\sum_{i=1}^{N}\eta_iu_i+\sum_{i=1}^{N}\psi_iv_i
\end{split}
\end{equation*}

The optimal solution satisfies the following Kuhn-Tucker conditions:
\begin{subequations}
	\begin{align}
	&\dfrac{\partial L}{\partial u_i}=Vd_i(\phi_i+u_i-v_i)+\kappa q_i-\alpha+\delta_i+\eta_i=0,~\forall i\in\mathcal{N} \label{co_partial_u} \\
	&\dfrac{\partial L}{\partial v_i}=-V(d_i(\phi_i+u_i-v_i)+\kappa q_i)+Vg(\lambda)+\alpha-\delta_i+\psi_i=0,~\forall i\in\mathcal{N} \label{co_parital_v}\\
	&\dfrac{\partial L}{\partial\alpha}=-\sum_{i=1}^{N}u_i+\sum_{i=1}^{N}v_i=0  \label{co_alpha}\\
	& \phi_i+u_i-v_i\geq 0, \delta_i(\phi_i+u_i-v_i)=0, \delta\leq 0, \forall i\in\mathcal{N} \label{co_delta}\\
	& u_i\geq0, \eta_iu_i=0, \eta_i\leq 0, \forall i\in\mathcal{N} \label{co_eta}\\
	& v_i\geq0, \psi_iv_i=0, \psi_i\leq 0, \forall i\in\mathcal{N} \label{co_psi}
	\end{align}
\end{subequations}

In the following, we find an equivalent form of above equations in terms of $\omega_i$. Adding (\ref{co_partial_u}) and (\ref{co_parital_v}) we have $-Vg(\lambda)=\eta_i+\psi_i, \forall i\in\mathcal{N}$. Since $g>0$, either $\eta_i<0$ or $\psi_i<0$ (or both). Hence, from (\ref{co_eta}) and (\ref{co_psi}), for each $i$, either $u_i=0$ or $v_i=0$ (or both).  We consider each case separately.

\begin{itemize}[leftmargin=*]
\item \emph{Case I}: $u_i=0, v_i=0$, then we have $\omega_i=\phi_i$. It follows from (\ref{co_delta}) that $\delta_i=0$. Substituting this into (\ref{co_partial_u}) and (\ref{co_parital_v}) gives
\begin{align*}
	& Vd_i(\omega_i)+\kappa q_i=\alpha-\eta_i\geq \alpha\\
	& Vd_i(\omega_i)+\kappa q_i=Vg(\lambda)+\alpha+\psi_i<\alpha+g(\lambda)
\end{align*}
From the above, we have
\begin{align}
	\alpha\leq Vd_i(\omega_i)+\kappa q_i\leq\alpha+Vg(\lambda)
\end{align}
This case corresponds to the \emph{neutral SBS}.

\item \emph{Case II}: $u_i=0, v_i>0$. This case corresponds to \emph{source SBS}. From (\ref{co_psi}) we have $\psi_i=0$ and consider the following two subcases:
\begin{itemize}
	\item \emph{Case II.1}: $v_i<\phi_i$. Then, we have $0<\omega_i<\phi_i$. It follows from (\ref{co_delta}) that $\delta_i=0$. Substituting this in to (\ref{co_partial_u}),(\ref{co_parital_v}) gives
	\begin{align}
	&Vd_i(\omega_i)+\kappa q_i=\alpha-\eta_i\geq\alpha\\
	&Vd_i(\omega_i)+\kappa q_i=\alpha+Vg(\lambda) \label{co_active}
	\end{align}
	\item \emph{Case II.2}: $v_{i}=\phi_i$. Then, we have $\omega_i=0$ and (\ref{co_partial_u}), (\ref{co_parital_v}) gives
	\begin{align*}
	&Vd_i(\omega_i)+\kappa q_i=\alpha-\delta_i-\eta_i\geq\alpha\\
	&Vd_i(\omega_i)+\kappa q_i=\alpha+Vg(\lambda)-\delta_i\geq\alpha+g(\lambda).
	\end{align*}
	Thus, we have $Vd_i(\omega_i)+\kappa q_i\geq\alpha+Vg(\lambda).$
\end{itemize}
\item \emph{Case III}: $u_i>0, v_i=0$. Then, we have $\omega_i>\phi_i$. It follows from (\ref{co_delta}) and (\ref{co_eta}) that $\delta_i=0$ and $\eta_i=0$. Substituting this in to (\ref{co_partial_u}), we have
\begin{align} \label{co_sinks}
Vd_i(\omega_i)+\kappa q_i=\alpha.
\end{align}
This case corresponds to \emph{sink SBS}.
\end{itemize}
Using (\ref{co_active}) and (\ref{co_sinks}), we have the total flow constraint:
\begin{align*}
\sum_{i\in\mathcal{S}}&(d^{-1}_i\left(\dfrac{1}{V}(\alpha-\kappa q_i)\right)-\phi_i)=\sum_{i\in\mathcal{R}}(\phi_i-[d^{-1}_i\left(\dfrac{1}{V}(\alpha+Vg(\lambda)-\kappa q_i)\right)]^+)
\end{align*}

\section{Proof of Theorem \ref{Theorem_Online}} \label{appendix_online}
To prove the time-averaged system delay bound, we first introduce the following Lemma.
\begin{lemma}\label{Lemma:stationary_policy}
	For an arbitrary $\delta>0$, there exists a stationary and randomized policy $\Pi$ for $\textbf{P2}$, which decides $\bm{\beta}^{\Pi,t}$ independent of the current queue backlogs, such that the following inequalities are satisfied:
	\begin{align*}
	&\sum_{i=1}^{N}\mathbb{E}\{D_i(\bm{\beta}^{\Pi,t})\}\leq D^{\text{opt}}_{\text{sys}}+\delta\\
	&\mathbb{E}\{E^t_i(\bm{\beta}^{\Pi,t})-\bar{E}_i\}\leq \delta
	\end{align*}
\end{lemma}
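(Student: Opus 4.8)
The plan is to recognize Lemma~\ref{Lemma:stationary_policy} as a specialization of the standard existence result for stationary randomized (``$\bm{\phi}$-only'') policies in constrained stochastic network optimization \cite{neely2010stochastic}. Throughout I take, as is implicit in the Lyapunov framework, the arrival patterns $\bm{\pi}^t$ (hence the per-slot states $(\bm{\phi}^t,\mathcal{B}^t)$) to be i.i.d.\ across slots (more generally, stationary ergodic would suffice), and I assume \textbf{P1} is feasible, so that the set of feasible per-slot actions $\mathcal{X}(\bm{\phi})\triangleq\{\bm{\beta}\in\mathcal{B}(\bm{\phi}):E_i(\bm{\beta})\le E_{\max},\ D_i(\bm{\beta})\le D_{\max}\ \forall i\}$ is non-empty for every state occurring with positive probability. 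A stationary randomized policy $\Pi$ observes only the current $\bm{\phi}^t$ and draws $\bm{\beta}^{\Pi,t}$ from a time-invariant, state-dependent distribution supported on $\mathcal{X}(\bm{\phi}^t)$; it never inspects the deficit queues and automatically satisfies the per-slot constraints \eqref{emax}, \eqref{dmax}, \eqref{feasible}.

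First I would make the achievable ``performance region'' explicit. For a fixed state $\bm{\phi}$, set
\begin{equation*}
\mathcal{Y}(\bm{\phi})\triangleq\Big\{\big(\textstyle\sum_{i}D_i(\bm{\beta}),\,E_1(\bm{\beta})-\bar E_1,\dots,E_N(\bm{\beta})-\bar E_N\big):\bm{\beta}\in\mathcal{X}(\bm{\phi})\Big\}\subseteq\mathbb{R}^{N+1}.
\end{equation*}
Since $D_i\le D_{\max}$ and $E_i\le E_{\max}$ on $\mathcal{X}(\bm{\phi})$, each $\mathcal{Y}(\bm{\phi})$ is bounded. The expectation vectors attainable by stationary randomized policies are exactly $\{\,\mathbb{E}_{\bm{\phi}}\{\bm{Y}(\bm{\phi})\}:\bm{Y}(\bm{\phi})\in\mathrm{conv}\,\mathcal{Y}(\bm{\phi})\,\}$, a bounded convex set $\mathcal{P}\subseteq\mathbb{R}^{N+1}$ whose closure I write $\overline{\mathcal{P}}$; by Carath\'eodory's theorem each point of $\mathrm{conv}\,\mathcal{Y}(\bm{\phi})$ is a convex combination of at most $N+2$ points of $\mathcal{Y}(\bm{\phi})$, so $\Pi$ may be taken to randomize over at most $N+2$ pure actions per state.

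The key step — and the one I expect to be the main obstacle — is to show that the constrained stationary optimum
\begin{equation*}
\inf\Big\{\,p_0:(p_0,p_1,\dots,p_N)\in\overline{\mathcal{P}},\ p_i\le 0\ \forall i\in\mathcal{N}\,\Big\}
\end{equation*}
equals $D^{\text{opt}}_{\text{sys}}$, the constrained infimum taken over all (possibly time-varying, history-dependent, non-anticipating) feasible policies for \textbf{P1}. The inequality ``$\ge$'' is trivial because stationary randomized policies are feasible policies; for ``$\le$'' one uses the i.i.d.\ structure together with the boundedness of $D_i^t,E_i^t$ to show that along any feasible policy the running time-averages of $\big(\sum_i D_i^t,\{E_i^t-\bar E_i\}_i\big)$ accumulate in $\overline{\mathcal{P}}$, so no feasible policy can beat the stationary optimum. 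This is precisely the $\omega$-only policy existence theorem of \cite{neely2010stochastic}, which I would invoke rather than re-derive.

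Granting that identity, the conclusion follows: feasibility of \textbf{P1} makes the constraint set in the infimum non-empty, so the constrained stationary optimum is finite and equals $D^{\text{opt}}_{\text{sys}}$; since $\mathcal{P}$ is dense in $\overline{\mathcal{P}}$, for any $\delta>0$ one can select an \emph{actual} stationary randomized policy $\Pi$ whose performance point $(p_0^\Pi,\dots,p_N^\Pi)\in\mathcal{P}$ lies within $\delta$ coordinatewise of a point of $\overline{\mathcal{P}}$ attaining the constrained optimum, which gives $\sum_{i}\mathbb{E}\{D_i(\bm{\beta}^{\Pi,t})\}=p_0^\Pi\le D^{\text{opt}}_{\text{sys}}+\delta$ and $\mathbb{E}\{E_i^t(\bm{\beta}^{\Pi,t})-\bar E_i\}=p_i^\Pi\le\delta$ for every $i\in\mathcal{N}$ — exactly the two assertions of the lemma. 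The only remaining technicality is a routine measurable-selection argument ensuring the $\delta$-optimal randomization can be chosen measurably in $\bm{\phi}$, which follows from continuity of $D_i$, $E_i$ and of $\mathcal{X}(\bm{\phi})$ in $\bm{\phi}$.
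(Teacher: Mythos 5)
Your proposal is correct and follows essentially the same route as the paper: the paper's proof is a one-line citation of Theorem 4.5 in \cite{neely2010stochastic} (the existence of $\delta$-optimal stationary randomized $\bm{\phi}$-only policies), and your argument is an expanded reconstruction of exactly that result, with the same i.i.d./feasibility hypotheses and the same appeal to the achievable-region characterization. No gap to report.
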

\begin{proof}
	The proof can be obtained by Theorem 4.5 in \cite{neely2010stochastic}, which is omitted for brevity.
\end{proof}

Recall that the OPEN-Centralized seeks to choose strategies $\bm{\beta}^{*,t}$ that can minimize $\textbf{P2}$ among feasible decisions including the policy in Lemma \ref{Lemma:stationary_policy} in each time slot. By applying Lemma \ref{Lemma:stationary_policy} in to the \emph{drift-plus-penalty} inequality \eqref{drift_plus_cost}, we obtain:
\begin{equation}
\begin{split}
&\Delta(\bm{q}(t))+V\sum_{i=1}^{N}\mathbb{E}\left\{D_i^t(\bm{\beta}^{*,t})|\bm{q}(t)\right\} \\&\leq B+\sum_{i=1}^{N}q_i(t)\mathbb{E}\left\{(E^t_i(\bm{\beta}^{\Pi,t})-\bar{E}_i)|\bm{q}(t) \right\}+V\sum_{i=1}^{N}\mathbb{E}\left\{D_i^t(\bm{\beta}^{\Pi,t})|\bm{q}(t)\right\}\\
&\stackrel{(\dag)}{\leq}B+\delta\sum_{i=1}^{N}q_i(t)+V(D^{\text{opt}}_{\text{sys}}+\delta)
\end{split}
\end{equation}

The inequality $(\dag)$ is because that the policy $\Pi$ is independent of the energy deficit queue. By letting $\delta$ go to zero, summing the inequality over $t\in\{0,1,\dots,T-1\}$ and then dividing the result by $T$, we have:
\begin{align}
\dfrac{1}{T}\mathbb{E}\{L(\bm{q}(t))-L(\bm{q}(0))\}+\dfrac{V}{T}\sum_{t=0}^{T-1}\sum_{i=1}^{N}\mathbb{E}\left\{D_i^t(\bm{\beta}^{*,t})\right\}\leq B+VD^{\text{opt}}_{\text{sys}}\label{xx}
\end{align}

Rearranging the terms and considering the fact that $L(\bm{q}(t))\geq 0$ and $L(\bm{q}(0))=0$ yields the bound for long-term system delay cost.

To obtain the long-term energy deficit bound, we make following assumption: there are values $\epsilon>0$ and $\Psi(\epsilon)$ and an policy $\bm{\beta}^{L,t}$ that satisfies:
\begin{equation}\label{eq:assump}
\begin{split}
&\sum_{i=1}^{N}\mathbb{E}\{D_i(\bm{\beta}^{L,t})\}=\Psi(\epsilon)\\
&\mathbb{E}\{E_i(\bm{\beta}^{L,t})-\bar{E}_i\}\leq -\epsilon
\end{split}
\end{equation}
Plugging above into inequality \eqref{drift_plus_cost}
\begin{equation*}
\Delta(\bm{q}(t))+V\sum_{i=1}^{N}\mathbb{E}\left\{D_i^t(\bm{\beta}^{*,t})\right\}\leq B+V\Psi(\epsilon)-\epsilon\sum_{i=1}^{N}q_i(t)
\end{equation*}
Summing the above over $t\in\{0,1,\dots,T-1\}$ and rearranging terms as usual yields:
\begin{equation}
\begin{split}
&\dfrac{1}{T}\sum_{t=0}^{T-1}\sum_{i=1}^{N}\mathbb{E}\{q_i(t)\}\\
&\leq \dfrac{B+V(\Psi(\epsilon)-\dfrac{1}{T}\sum\limits_{t=0}^{T-1}\sum\limits_{i=1}^{N}\mathbb{E}\left\{D_i^t(\bm{\beta}^{*,t})\right\}}{\epsilon}\\
&\leq \dfrac{B+V(D^{\max}_{\text{sys}}-D^{\text{opt}}_{\text{sys}})}{\epsilon}
\end{split}
\end{equation}
Considering  $\sum\limits_{t=0}^{T-1}\sum\limits_{i=1}^{N}\mathbb{E}\{q_i(t)\}\geq\sum\limits_{t=0}^{T-1}\sum\limits_{i=1}^{N}\mathbb{E}\{E_i(\bm{\beta}^t)-\bar{E}_i\}$ yields the long-term energy deficit bound.

\section{Proof of Theorem 4}\label{prof_theorem_noncoop}
\begin{proof}
We first restate the problem introducing the variables $u_{ij}$ and $v_{ij}$, where $u_{ij}$ denotes SBS $i$'s workloads into SBS $j$ and $v_{ij}$ denotes SBS $i$'s workloads out of SBS $j$. Given our system model, we only have $v_{ij} \neq 0$ when $i=j$.

From the balance of the of the total traffic of SBS $i$ in the network, we have $\lambda_i=\sum_{j=1}^{N}u_{ij}=v_{ii}$.
The workload $\beta_{ij}$ offloaded from SBS $i$ to SBS $j$ can be written as
\begin{equation}
	\beta_{ij}=\left\{
	\begin{split}
	& u_{ij},&\text{if}~j\neq i \\
	&\phi_i-v_{ii},&\text{if}~j=i
	\end{split}\right.
\end{equation}
which can be rewritten as:
\begin{equation}
\beta_{ij}=\phi_{ij}+u_{ij}-v_{ij}~~
\text{where}
~~\phi_{ij}=\left\{
\begin{split}
&\phi_i,&\text{if}~j=i\\
& 0, &\text{if}~j\neq i
\end{split} \right.
\end{equation}

Using above equations, the $\textbf{P3}$ becomes
\begin{subequations}
	\begin{align}
	\min_{\bm{u}_{i\cdot},\bm{v}_{i\cdot}} C_i(\bm{u}_{i\cdot},\bm{v}_{i\cdot}) =  &\sum_{j=1}^{N}[\dfrac{V(\phi_{ij}+u_{ij}-v_{ij})}{\mu_{ij}-(\phi_{ij}+u_{ij}-v_{ij})}+\dfrac{V\sum_{j=1}^{N}v_{ij}}{g_{-i}-\tau\sum_{j=1}^{N}v_{ij}}]+\kappa q_i(\phi_i-\sum_{j=1}^{N}v_{ij})\\
	\text{s.t.}\qquad&-\sum_{j=1}^{N}u_{ij}+\sum_{j=1}^{N}v_{ij}=0\\
	&\phi_{ij}+u_{ij}-v_{ij}\geq 0\\
	& u_{ij}\geq 0\\
	& v_{ij}\geq 0
	\end{align}
\end{subequations}
Let $\alpha_i$, $\delta_j\leq 0$, $\psi_j\leq 0$, $\eta_j\leq 0$ denote the Lagrange multipliers. The Lagrangian is
\begin{align}
& L(\bm{u}_{i\cdot}, \bm{v}_{i\cdot}, \alpha_i,\bm{\delta}, \bm{\psi},\bm{\eta}) =C_i(\bm{u}_{i\cdot},\bm{v}_{i\cdot})+\alpha_i(\sum_{j=1}^{N}v_{ij}-\sum_{j=1}^{N}u_{ij})\nonumber\\
&\quad+\sum_{j=1}^{N}\delta_j(u_{ij}-v_{ij}+\phi_{ij})+\sum_{j=1}^{N}\psi_ju_{ij}+\sum_{j=1}^{N}\eta_jv_{ij}
\end{align}

The optimal solution satisfies the following Kuhn-Tucker conditions:
\begin{subequations}
	\begin{align}
	&\dfrac{\partial L}{\partial u_{ij}}=Vd_{ij}(\phi_{ij}+u_{ij}-v_{ij})-\alpha_i+\delta_j+\psi_j=0 \label{partial_u}\\
	&\dfrac{\partial L}{\partial v_{ij}}=-Vd_{ij}(\phi_{ij}+u_{ij}-v_{ij})+Vg_i(\lambda)
	-q_i\kappa+\alpha_i-\delta_j+\eta_j=0 \label{partial_v}\\
	&\dfrac{\partial L}{\partial \alpha_i}=-\sum_{j=1}^{N}u_{ij}+\sum_{j=1}^{N}v_{ij}=0 \label{sum_u_sum_v}\\
	&\phi_{ij}+u_{ij}-v_{ij}\geq 0,~\delta_j(\phi_{ij}+u_{ij}-v_{ij})=0,~\delta_j<0 \label{phi_u_v}\\
	& u_{ij}\geq 0,~\psi_ju_{ij}=0,~\psi_j<0 \label{u}\\
	& v_{ij}\geq 0,~\eta_jv_{ij}=0,~\eta_j<0 \label{v}
	\end{align}
\end{subequations}

Next, we find an equivalent form of (\ref{partial_u}) - (\ref{v}) in terms of $\bm{\beta}_{i}$. We consider two case:
\begin{itemize}
	\item \emph{Case I}: $\phi_{ij}+u_{ij}-v_{ij}=0$. We have $\beta_{ij}=0$
  \begin{itemize}[leftmargin=*]
	   \item \emph{Case I.1}: $j\neq i$. We have $\phi_{ij}=0$ and $v_{ij}=0$, it follows that $u_{ij}=0$. From (\ref{partial_u}), we have $Vd_{ij}(\beta_{ij})=\alpha_i-\delta_j-\psi_j$, which means $Vd_{ij}(\beta_{ij})>\alpha_i$. This corresponds to \emph{idle SBS}.
	   \item \emph{Case I.2}: $j=i$. We have $\phi_{ii}=\phi_i$, it follows that $v_{ii}=\phi_i$, (\ref{v}) implies $\eta_i=0$. From (\ref{partial_v}), we have $Vd_{ii}(\beta_{ii})=\alpha_i+Vg_i(\lambda)-\kappa q_i-\delta_j$. Then from (\ref{sum_u_sum_v}), we get $Vd_{ii}(\bm{\beta}_{i})+\kappa q_i>\alpha_i+Vg_i(\lambda)$. This corresponds to \emph{source SBS}.
  \end{itemize}	
	
	\item \emph{Case II}: $\phi_{ij}+u_{ij}-v_{ij}>0$. From (\ref{sum_u_sum_v}), we have $\delta_j=0$
    \begin{itemize}
 	\item \emph{Case II.1}: $v_{ij}>0$, we must have $j=i$. From (\ref{v}), we have $\eta_j=0$. (\ref{partial_v}) implies $Vd_{ii}(\beta_{ii})+\kappa q_i=\alpha_i+Vg_i(\lambda)$. This case corresponds to \emph{source SBS}.
 	\item \emph{Case II.2}: $v_{ij}=0$.
 	   \begin{itemize}
 	   	\item \emph{Case II.2.1}: $u_{ij}=0$. For $i\neq j$, it is equivalent to \emph{Case I.1}. We consider $j=i$. Then $\beta_{ii}=\phi_{i}$. From (\ref{partial_v}) and (\ref{u}), we have $Vd_{ii}(\beta_{ii})+\kappa q_i<\alpha_i+Vg_i(\lambda)$. This case corresponds to the \emph{neutral SBS}.
 	   	\item \emph{Case II.2.2}: $u_{ij}>0$, we must have $j\neq i$. Then $\beta_{ij}>\phi_{ij}=0$. From (\ref{u}), we have $\psi_j=0$. Substituting this in (\ref{partial_u}), we have
 	   	$Vd_{ij}(\beta_{ij})=\alpha_i$. This case corresponds to \emph{sink SBS} for SBS $i$.
 	   \end{itemize}
    \end{itemize}
\end{itemize}

Equation (\ref{sum_u_sum_v}) may be written in terms of $\beta_{ij}$ as
\begin{equation*}
\sum_{j\in\mathcal{S}_i}d_{ij}^{-1}(\dfrac{\alpha_i}{V})=\sum_{i\in\mathcal{R}}\left(\phi_i-[d_{ij}^{-1}(\dfrac{1}{V}(\alpha_i+Vg_i(\lambda)-q_i\kappa))]^+\right)
\end{equation*}
where $\alpha_i$ is the Lagrange multiplier.
\end{proof}

\section{Proof of Theorem \ref{Theorem: performance_OPEN_aut}}
The proof of Theorem \ref{Theorem: performance_OPEN_aut} is similar to that of Theorem \ref{Theorem_Online}. We reuse the assumption in \eqref{eq:assump}. Given the \emph{drift-plus-cost} bound in \eqref{drift_plus_cost} and the PoA bound in \eqref{max_PoA}, we have
\begin{align}\label{draft_plus_cost_NE}
&\Delta(\bm{q}(t))+V\sum_{i=1}^{N}\mathbb{E}\left\{D_i^t(\bm{\beta}^{\text{NE},t})|\bm{q}(t)\right\} \nonumber\\
&\leq B+\sum_{i=1}^{N}q_i(t)\mathbb{E}\left\{(E^t_i(\bm{\beta}^{\text{NE},t})-\bar{E}_i)|\bm{q}(t) \right\}+V\sum_{i=1}^{N}\mathbb{E}\left\{D_i^t(\bm{\beta}^{\text{NE},t})|\bm{q}(t)\right\} \nonumber\allowdisplaybreaks\\
&\leq B+\sum_{i=1}^{N}q_i(t)\mathbb{E}\left\{(\varrho^{\max}E^t_i(\bm{\beta}^{*,t})-\bar{E}_i)|\bm{q}(t) \right\}+\varrho^{\max}V\sum_{i=1}^{N}\mathbb{E}\left\{D_i^t(\bm{\beta}^{*,t})|\bm{q}(t)\right\}  \nonumber\\
&\leq B+\sum_{i=1}^{N}q_i(t)\mathbb{E}\left\{(\varrho^{\max}E^t_i(\bm{\beta}^{L,t})-\varrho^{\max}\bar{E}_i)|\bm{q}(t)\right\}  +\sum_{i=1}^{N}q_i(t)\mathbb{E}\left\{(\varrho^{\max}-1)\bar{E}_i)|\bm{q}(t)\right\}\nonumber\\
&\qquad\qquad\qquad +\varrho^{\max}V\sum_{i=1}^{N}\mathbb{E}\left\{D_i^t(\bm{\beta}^{L,t})|\bm{q}(t)\right\}\allowdisplaybreaks \nonumber\\
&\leq B + \left((\varrho^{\max}-1)\bar{E}^{\max}-\varrho^{\max} \epsilon\right) \sum_{i=1}^{N}q_i(t)+\varrho^{\max}V\Psi(\epsilon)\nonumber
\end{align}
Letting $\epsilon=\frac{\varrho^{\max}-1}{\varrho^{\max}}\bar{E}^\max$ in \eqref{eq:assump}, we have
\begin{align*}
	&\Delta(\bm{q}(t))+V\sum_{i=1}^{N}\mathbb{E}\left\{D_i^t(\bm{\beta}^{\text{NE},t})|\bm{q}(t)\right\} \leq B +\varrho^{\max}V\Psi(\frac{\varrho^{\max}-1}{\varrho^{\max}}\bar{E}^\max)
\end{align*}
Summing the inequality over $t\in\{0,1,\dots,T-1\}$ and then dividing the result by $T$, we have:
\begin{align}
&\dfrac{1}{T}\mathbb{E}\{L(\bm{q}(t))-L(\bm{q}(0))\}+\dfrac{V}{T}\sum_{t=0}^{T-1}\sum_{i=1}^{N}\mathbb{E}\left\{D_i^t(\bm{\beta}^{\text{NE},t})\right\}\leq B+\varrho^{\max}V\Psi(\frac{\varrho^{\max}-1}{\varrho^{\max}}\bar{E}^\max)
\end{align}
Rearranging the terms and considering the fact that $L(\bm{q}(t))\geq 0$ and $L(\bm{q}(0))=0$ yields the bound for the long-term system delay cost for OPEN-Autonomous.

To obtain the long-term energy-deficit bound for OPEN-Autonomous, we directly sum \eqref{draft_plus_cost_NE} over $t\in\{0,1,\dots,T-1\}$ and rearrange terms:
\begin{equation*}
\begin{split}
&\dfrac{1}{T}\sum_{t=0}^{T-1}\sum_{i=1}^{N}\mathbb{E}\{q_i(t)\}\\
&\leq \dfrac{1}{\varrho^{\max} \epsilon-(\varrho^{\max}-1)\bar{E}^{\max}}\cdot\\
&\qquad\left(B+V(\varrho^{\max}\tilde{\Psi}(\epsilon)-\dfrac{1}{T}\sum\limits_{t=0}^{T-1}\sum\limits_{i=1}^{N}\mathbb{E}\left\{D_i^t(\bm{\beta}^{\text{NE},t})\right\}\right)\\
&\leq \dfrac{1}{\varrho^{\max} \epsilon-(\varrho^{\max}-1)\bar{E}^{\max}}\left(B+V(\varrho^{\max}D^{\max}_{\text{sys}}-D^{\text{opt}}_{\text{sys}})\right)
\end{split}
\end{equation*}
Considering  $\sum\limits_{t=0}^{T-1}\sum\limits_{i=1}^{N}\mathbb{E}\{q_i(t)\}\geq\sum\limits_{t=0}^{T-1}\sum\limits_{i=1}^{N}\mathbb{E}\{E_i(\bm{\beta}^{\text{NE},t})-\bar{E}_i\}$ yields the energy deficit bound.
\end{document}